\newtheorem{theorem}			     {Theorem} [section]
\newtheorem{lemma}	      [theorem]  {Lemma}		
\theoremstyle{definition}
\newtheorem{remark} {Remark}
\newcommand{\C}{\mathbb{C}}
\tikzset{
	master/.style={
		execute at end picture={
			\coordinate (lower right) at (current bounding box.south east);
			\coordinate (upper left) at (current bounding box.north west);
		}
	},
	slave/.style={
		execute at end picture={
			\pgfresetboundingbox
			\path (upper left) rectangle (lower right);
		}
	}
}
\numberwithin{equation}{section}
\def\ds{\displaystyle}
\def\bigO{{\cal O}}
\begin{document}
\title{Large gap asymptotics for the generating function \\ of the sine point process}
\author{Christophe Charlier\footnote{Department of Mathematics, KTH Royal Institute of Technology, Lindstedtsv\"{a}gen 25, SE-114 28 Stockholm, Sweden. e-mail: cchar@kth.se}}

\maketitle

\begin{abstract}
We consider the generating function of the sine point process on $m$ consecutive intervals. It can be written as a Fredholm determinant with discontinuities, or equivalently as the convergent series
\begin{equation*}
\sum_{k_{1},...,k_{m} \geq 0} \mathbb{P}\Bigg(\bigcap_{j=1}^{m} \#\{\mbox{points in the j-th interval}\}=k_{j}\Bigg)\prod_{j=1}^{m} s_{j}^{k_{j}},
\end{equation*}
where $s_{1},\ldots,s_{m} \in [0,+\infty)$. In particular, we can deduce from it joint probabilities of the counting function of the process. In this work, we obtain large gap asymptotics for the generating function, which are asymptotics as the size of the intervals grows. Our results are valid for an arbitrary integer $m$, in the cases where all the parameters $s_{1},\ldots,s_{m}$, except possibly one, are positive. This generalizes two known results: 1) a result of Basor and Widom, which corresponds to $m=1$ and $s_{1}>0$, and 2) the case $m=1$ and $s_{1} = 0$ for which many authors have contributed. We also present some applications in the context of thinning and conditioning of the sine process.
\end{abstract}
\noindent
{\small{\sc AMS Subject Classification (2010)}: 41A60, 60B20, 35Q15, 60G55.}


\section{Introduction}
Let 
\begin{equation*}
m \in \mathbb{N}\setminus \{0\}, \qquad \vec{s} = (s_{1},\ldots,s_{m}) \in [0,+\infty)^{m} \quad \mbox{and} \quad \vec{x} = (x_{0},x_{1},\ldots,x_{m}) \in \mathbb{R}^{m+1}
\end{equation*}
with $\vec{x}$ such that $-\infty < x_{0} < x_{1} < \ldots < x_{m} < +\infty$, and consider the Fredholm determinant
\begin{align}\label{F Fredholm}
F(\vec{x},\vec{s}) & = \det \bigg( 1- \sum_{k=1}^{m}(1-s_{k})\mathcal{K}|_{(x_{k-1},x_{k})} \bigg)  
\end{align}
where, for a given bounded Borel set $A \subset \mathbb{R}$, $\mathcal{K}|_{A}$ is the (trace class) integral operator acting on $L^{2}(A)$ whose kernel is given by
\begin{equation}\label{Sine kernel}
K(x,y) = \frac{\sin(x-y)}{\pi(x-y)}.
\end{equation}
In this paper, we obtain asymptotics for $F(r\vec{x},\vec{s})$ as $r \to + \infty$, up to and including the term of order 1, in the cases where all the parameters $s_{1},\ldots,s_{m}$, except possibly one, are positive. $F(\vec{x},\vec{s})$ is the generating function of the well-known sine point process of random matrices and has attracted considerable attention over the years. We discuss some background on the sine process and give more motivation for the study of $F$ in Subsection \ref{subsection: background and motivation} below. In particular, we show that if $s_{1},\ldots,s_{m}\in [0,1]$, then the asymptotics of $F(r\vec{x},\vec{s})$ as $r \to + \infty$ can be interpreted as large gap asymptotics.

\newpage

\medskip Before stating our main theorems, we first briefly review the known asymptotic results available in the literature.

\medskip Large $r$ asymptotics for $F(r\vec{x},\vec{s})$ when $m=1$ are already completely understood. We need to distinguish two regimes: 1) the case $s_{1} = 0$ and 2) the case $s_{1} \in (0,+\infty)$. For the first case $s_{1} = 0$, the asymptotics are given by
\begin{equation}\label{known result m=1 s1 = 0}
F\big((rx_{0},rx_{1}),0\big) = \exp \left( - \frac{r^{2}(x_{1}-x_{0})^{2}}{8} - \frac{1}{4} \log \big( r(x_{1}-x_{0}) \big) + \frac{1}{3}\log 2 + 3 \zeta^{\prime}(-1) +\bigO(r^{-1}) \right)
\end{equation}
as $r \to + \infty$, where $\zeta$ is Riemann's zeta-function. This result was first conjectured by Dyson in \cite{Dyson}, then proved simultaneously and independently by Ehrhardt and Krasovsky in \cite{Ehr sine, Krasovsky}, and then by Deift et al. in \cite{DIKZ2007}. On the other hand, for the second case where $s_{1} = e^{u_{1}} \in (0,+\infty)$, we have
\begin{equation}\label{known result m=1 s1 neq 0}
F((rx_{0},rx_{1}),e^{u_{1}}) = \exp \bigg(  \frac{ru_{1}(x_{1}-x_{0})}{\pi} + \frac{u_{1}^{2}}{2\pi^{2}} \log\big( 2r(x_{1}-x_{0}) \big) + 2\log G\Big(1+\frac{u_{1}}{2\pi i}\Big)G\Big(1-\frac{u_{1}}{2\pi i}\Big) + \bigO(r^{-1}) \bigg),
\end{equation}
as $r \to + \infty$, where $G$ is Barnes' $G$-function (see e.g. \cite[eq 5.17.2]{NIST} for a definition). This result was first proved by Basor and Widom in \cite{BW1983}, and then independently by Budylin and Buslaev in \cite{BB1995}. Note that the leading term for $\log F$ is of order $r^{2}$ in \eqref{known result m=1 s1 = 0} while it is of order $r$ in \eqref{known result m=1 s1 neq 0}, and that if we naively take $u_{1} \to -\infty$ (or equivalently $s_{1} \to 0$) in \eqref{known result m=1 s1 neq 0}, we do not recover \eqref{known result m=1 s1 = 0}. This explains heuristically why these two cases cannot be treated both at once. In fact, a critical transition takes place as $r \to + \infty$ and simultaneously $s_{1} \to 0$. This transition is quite technical and is described in terms of elliptic $\theta$-function in a series of papers by Bothner, Deift, Its and Krasovsky \cite{BDIK2015, BDIK2017, BDIK2019}.

\medskip Less is known for $m \geq 2$. In \cite{Widom1995}, Widom has tackled the problem of finding large $r$ asymptotics for $F(r\vec{x},\vec{s})$ in the case where $m$ is odd and $\vec{s} = (0,1,0,1,\ldots,0,1,0)$. He obtained
\begin{align}\label{widom union}
\partial_{r}\log F(r\vec{x},(0,1,\ldots,1,0)) = c_{1} \, r + c_{2}(r)+o(1), \qquad \mbox{as } r \to + \infty,
\end{align}
where $c_{1}$ is independent of $r$ and is explicitly computable, and the function $c_{2}(r)$ is a bounded oscillatory function of $r$ that requires the solution of a Jacobi inversion problem. These asymptotics were subsequently refined in \cite{DeiftItsZhou}, where the oscillations are described in terms of elliptic $\theta$-function. Note that \eqref{widom union} is an asymptotic formula for the log derivative of $F$, which leads after integration to an asymptotic formula for $\log F(r\vec{x},(0,1,\ldots,1,0))$. However with this method, the constant of integration (the term of order $1$ in the large $r$ asymptotics) remains unknown. Using a different method, Fahs and Krasovsky in \cite{FahsKrasovsky, FahsKrasovsky2} have recently obtained this constant for the case $m=3$ and $\vec{s}=(0,1,0)$. 

\medskip Until now, no results were available in the literature on large $r$ asymptotics of $F(r\vec{x},\vec{s})$ when $m \geq 2$ and several $s_{j}$'s are in the open intervals $(0,1)\cup (1,+\infty)$.

\medskip The aim of this paper is to contribute to these developments on large $r$ asymptotics of $F(r\vec{x},\vec{s})$. We obtain our results for an arbitrary integer $m$, in the cases where all the parameters $s_{1},\ldots,s_{m}$, except possibly one, are positive. We distinguish two cases: in Theorem \ref{thm:s1 neq 0}, we obtain large $r$ asymptotics for $F(r\vec{x},\vec{s})$ with $s_{1},\ldots,s_{m} \in (0,+\infty)$, and in Theorem \ref{thm:sp=0}, we obtain asymptotics for $F(r\vec{x},\vec{s})$ with $s_{p} = 0$ and $s_{1},\ldots,s_{p-1},s_{p+1},\ldots,s_{m} \in (0,+ \infty)$ (for an arbitrary $p \in \{1,\ldots,m\}$). Theorem \ref{thm:s1 neq 0} generalizes the result \eqref{known result m=1 s1 neq 0}, while Theorem \ref{thm:sp=0} generalizes \eqref{known result m=1 s1 = 0}. We describe several applications of our results in Subsection \ref{subsection: background and motivation} below.

\newpage 

\subsection{Main results}


\begin{theorem}\label{thm:s1 neq 0}
Let 
\begin{align*}
m \in \mathbb{N}_{>0}, \quad \vec{s}=(s_1,\ldots,s_m) \in (0,+ \infty)^{m}, \quad \vec{x} = (x_{0},\ldots,x_{m}) \in \mathbb{R}^{m+1}
\end{align*}
be such that $x_{0} < x_1 < x_2 < \ldots < x_m$. As $r \to + \infty$, we have
\begin{multline}\label{thm product s1 neq 0}
F(r\vec{x},\vec{s}) = \exp \bigg\{  \sum_{j=1}^{m} \frac{u_{j}}{\pi}(x_{j}-x_{0})r + \sum_{j=1}^{m} \frac{u_{j}^{2}}{2\pi^{2}} \log \big( 2r(x_{j}-x_{0}) \big) \\ + \sum_{1 \leq j < k \leq m} \frac{u_{j}u_{k}}{2\pi^{2}}\log\left( \frac{2r(x_{j}-x_{0})(x_{k}-x_{0})}{x_{k}-x_{j}}\right) + \sum_{j=1}^{m} \log \bigg( G\Big(1+\frac{u_{j}}{2\pi i}\Big)G\Big(1-\frac{u_{j}}{2\pi i}\Big) \bigg) \\ + \log \bigg(G\Big(1+ \sum_{j=1}^{m}\frac{u_{j}}{2\pi i}\Big)G\Big(1-\sum_{j=1}^{m}\frac{u_{j}}{2\pi i}\Big)\bigg) + \bigO \Big( \frac{\log r}{r} \Big) \bigg\}
\end{multline}
where $G$ is Barnes' $G$-function, and 
\begin{align}
& u_{j} =  \log \frac{s_{j}}{s_{j+1}} \quad \mbox{ for } \quad j = 1,\ldots,m, \label{def beta thm s1 neq 0}
\end{align}
with $s_{m+1} := 1$. Furthermore, the error term in \eqref{thm product s1 neq 0} is uniform in $s_{1},\ldots,s_{m}$ in compact subsets of $(0,+\infty)$ (or equivalently uniform in $u_{1},\ldots,u_{m}$ in compact subsets of $\mathbb{R}$) and uniform in $x_{0},\ldots,x_{m}$ in compact subsets of $\mathbb{R}$, as long as there exists $\delta > 0$ independent of $r$ such that
\begin{equation}\label{condition on xj in terms of delta}
\min_{0 \leq j < k \leq m} x_{k}-x_{j} \geq \delta.
\end{equation}
Alternatively, one can rewrite \eqref{thm product s1 neq 0} as follows:
\begin{multline}\label{F asymptotics thm s1 neq 0}
F(r\vec{x},\vec{s}) = \exp \bigg\{  \sum_{j=1}^{m}  u_{j} \mu_{j}(r) + \sum_{j=1}^{m} \frac{u_{j}^{2}}{2} \sigma_{j}^{2}(r) +  \sum_{1 \leq j < k \leq m} u_{j} u_{k} \Sigma_{j,k}(r)  \\  + \log \bigg(G\Big(1+\sum_{j=1}^{m}\frac{u_{j}}{2\pi i}\Big)G\Big(1-\sum_{j=1}^{m}\frac{u_{j}}{2\pi i}\Big)\bigg) + \sum_{j=1}^{m} \log \bigg( G\Big(1+\frac{u_{j}}{2\pi i}\Big)G\Big(1-\frac{u_{j}}{2\pi i}\Big) \bigg) + \bigO \Big( \frac{\log r}{r} \Big) \bigg\},
\end{multline}
where $\mu_{j}$, $\sigma_{j}^{2}$ and $\Sigma_{j,k}$ are given by
\begin{align}
& \mu_{j}(r) = \frac{r(x_{j}-x_{0})}{\pi}, \label{mean thm s1 neq 0} \\
& \sigma_{j}^{2}(r) = \frac{\log(2r(x_{j}-x_{0}))}{\pi^{2}}, \label{variance thm s1 neq 0} \\
& \Sigma_{j,k}(r) = \frac{1}{2\pi^{2}} \log \left( \frac{2r(x_{j}-x_{0})(x_{k}-x_{0})}{|x_{k}-x_{j}|} \right). \label{cov thm s1 neq 0}
\end{align}
\end{theorem}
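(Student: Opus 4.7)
The plan is to combine the Deift--Zhou nonlinear steepest descent method for Riemann--Hilbert problems (RHPs) with a suitable differential identity in the variables $u_k$. Since the operator $\sum_{k=1}^m (1-s_k)\mathcal{K}|_{(x_{k-1},x_k)}$ is integrable in the sense of Its--Izergin--Korepin--Slavnov, its resolvent is encoded by the solution $Y(z)$ of a $2\times 2$ RHP with jumps on $\bigcup_{k=1}^m (rx_{k-1},rx_k)$: on $(rx_{k-1},rx_k)$ the jump is an explicit matrix depending on $s_k$ and involving the oscillating factors $e^{\pm 2iz}$, and $Y(z)\to I$ as $z\to\infty$. A standard computation, following the $m=1$ treatments of Basor--Widom and Deift--Its--Krasovsky, then yields a differential identity expressing $\partial_{u_k}\log F(r\vec{x},\vec{s})$ as a contour integral of entries of $Y$, reducing the problem to sharp asymptotics of $Y$ uniformly in compact parameter sets.

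For the steepest descent analysis, I would rescale $T(z)=Y(rz)$ and open lenses around each interval $(x_{k-1},x_k)$ so that the oscillatory factors in the jumps become exponentially small off $\mathbb{R}$. The remaining jumps on $\mathbb{R}$ are piecewise constant in $z$ and built from the $s_k$, while the jumps on the lens boundaries tend to the identity as $r\to+\infty$. I would then construct a global parametrix $P^{(\infty)}$ exactly solving those constant jumps on $\mathbb{R}$: by Plemelj--Sokhotski it is explicit and, up to a constant matrix prefactor, takes the form of a product $\prod_{j=0}^m (z-x_j)^{\alpha_j\sigma_3}$ where the exponents $\alpha_j$ are linear combinations of the $u_k/(2\pi i)$, with the exponent at $x_0$ being $-\sum_{j=1}^m u_j/(2\pi i)$ forced by the normalization $P^{(\infty)}(z)\to I$ at infinity. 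Near each endpoint $x_j$, $P^{(\infty)}$ has a Fisher--Hartwig-type singularity, and I would build a local parametrix $P^{(x_j)}$ using the confluent-hypergeometric model RHP of Its--Kuijlaars--\"Ostensson (the same one used by Deift--Its--Krasovsky in the $m=1$ case), matching $P^{(\infty)}$ on the boundary of a small disk around $x_j$ up to corrections of order $1/r$.

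With these parametrices in place, the ratio $R(z)=T(z)P(z)^{-1}$ solves a small-norm RHP with jumps of the form $I+\bigO(r^{-1})$ uniformly in the compact parameter sets allowed by \eqref{condition on xj in terms of delta}, so that $R=I+\bigO(r^{-1})$ by standard small-norm theory. Substituting the resulting asymptotics of $Y$ into the differential identity produces
\[
\partial_{u_k}\log F(r\vec{x},\vec{s}) = \frac{x_k-x_0}{\pi}\,r + (\text{terms logarithmic in $r$ and bounded in $u_j,\,x_j$}) + \bigO\!\left(\frac{\log r}{r}\right),
\]
whose integration in $u_1,\ldots,u_m$ from the base point $\vec u=\vec 0$ (where $\vec s=(1,\ldots,1)$ and $F\equiv 1$) recovers $\log F$. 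The constants of integration assemble into Barnes' $G$-function factors through the classical relation $\log G(1+x) = \tfrac{x}{2}\log(2\pi)-\tfrac{x(x+1)}{2}+\int_0^x\log\Gamma(1+t)\,dt$, applied to the $\Gamma$-function ratios that emerge from the confluent-hypergeometric local parametrices.

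The main obstacle is the identification of the order-one constant, and in particular the appearance of the joint factor $G(1\pm\sum_{j=1}^m u_j/(2\pi i))$ in addition to the individual factors $G(1\pm u_j/(2\pi i))$. The individual factors come directly from the local parametrices at $x_1,\ldots,x_m$, while the joint factor originates from the local parametrix at $x_0$, whose exponent $-\sum_j u_j/(2\pi i)$ is a genuinely new feature that is absent in the single-interval problem. Tracking this joint exponent through the lens opening and parametrix matching, and keeping the error $\bigO(\log r/r)$ uniform in compact sets of the parameters, is the delicate part of the analysis.
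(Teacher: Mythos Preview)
Your proposal is correct and matches the paper's approach in all essential respects: IIKS framework for $Y$, differential identity in the $u_k$ (equivalently $s_k$), steepest-descent analysis with lens opening, an explicit global parametrix built from the Szeg\H{o}-type function $\prod_{j=0}^m (z-x_j)^{\beta_j}$ with $\beta_0=-\sum_{j\geq 1}\beta_j$, confluent-hypergeometric local parametrices at every $x_j$, a small-norm remainder, and integration from $\vec{u}=\vec{0}$ using the $\int\log\Gamma$ formula to produce the Barnes $G$-factors. Your identification of the joint factor $G(1\pm\sum_j u_j/(2\pi i))$ as coming from the local parametrix at $x_0$ (where the exponent is forced to be $-\sum_j u_j/(2\pi i)$ by normalization at infinity) is exactly right and is the key structural observation.

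The one noteworthy technical variation is that the paper does not run the steepest descent directly on $Y$. Instead it first conjugates $Y$ by an explicit elementary solution $\Phi_{\sin}$ to obtain a new RHP for $\Phi$ with \emph{constant} jumps; this lets the differential identity $\partial_{s_k}\log F$ be reduced, via a Lax-pair argument ($\partial_z\Phi=A\Phi$, $\partial_{s_k}\Phi=B\Phi$), to a finite sum $K_\infty+\sum_{j=0}^m K_{x_j}$ of local quantities at $\infty$ and at the $x_j$, bypassing any contour integral of $Y$. This buys a cleaner bookkeeping of the order-one constants and of the $\bigO(\log r/r)$ error (since one only needs the parametrices evaluated at the points $x_j$ and the first subleading coefficient at $\infty$), whereas your direct route through a contour-integral differential identity for $Y$ requires tracking the asymptotics of $Y$ uniformly along the whole real axis. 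Both routes are standard and lead to the same formula; the paper's is somewhat more streamlined for extracting the constant term.
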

\newpage
\begin{theorem}\label{thm:sp=0}
Let $m \in \mathbb{N}_{>0}$, $p \in \{1,\ldots,m\}$ and
\begin{align*}
s_{p} = 0, \quad (s_1,\ldots,s_{p-1},s_{p+1},\ldots,s_{m}) \in (0,+\infty)^{m-1}, \quad \vec{x} = (x_{0},\ldots,x_{m}) \in \mathbb{R}^{m+1}
\end{align*}
be such that $x_{0} < x_1 < x_2 < \ldots < x_m$, and define $\vec{s}=(s_1,\ldots,s_{m})$. As $r \to + \infty$, we have
\begin{align}
F(r \vec{x},\vec{s}) = & \exp \Bigg\{ - \frac{r^{2}(x_{p}-x_{p-1})^{2}}{8} \nonumber \\
& - \Bigg( \sum_{j=0}^{p-2} \frac{u_{j}}{\pi}\sqrt{x_{p}-x_{j}}\sqrt{x_{p-1}-x_{j}} - \sum_{j=p+1}^{m} \frac{u_{j}}{\pi} \sqrt{x_{j}-x_{p}}\sqrt{x_{j}-x_{p-1}} \Bigg)r \nonumber \\
& + \sum_{\substack{j=0 \\ j \neq p-1,p}}^{m} \frac{u_{j}^{2}}{4\pi^{2}} \log  \left(  \frac{4 \sqrt{|x_{j}-x_{p}| \, |x_{j}-x_{p-1}|}|2x_{j}-x_{p}-x_{p-1}|r}{x_{p}-x_{p-1}} \right) -\frac{1}{4}\log\big( r(x_{p}-x_{p-1}) \big)  \nonumber \\
& + \sum_{\substack{0 \leq j < k \leq m \\ j,k \neq p-1,p}} \frac{u_{j} u_{k}}{2\pi^{2}} \log \left( \frac{\sqrt{|x_{k}-x_{p}|}\sqrt{|x_{j}-x_{p-1}|}+\sqrt{|x_{k}-x_{p-1}|}\sqrt{|x_{j}-x_{p}|}}{\big|\sqrt{|x_{k}-x_{p}|}\sqrt{|x_{j}-x_{p-1}|}-\sqrt{|x_{k}-x_{p-1}|}\sqrt{|x_{j}-x_{p}|}\big|} \right) \nonumber\\
& + \frac{1}{3}\log 2 + 3 \zeta^{\prime}(-1) + \sum_{\substack{j=0 \\ j \neq p-1,p}}^{m} \log \bigg( G\Big(1+\frac{u_{j}}{2\pi i}\Big)G\Big(1-\frac{u_{j}}{2\pi i}\Big) \bigg) + \bigO\Big( \frac{\log r}{r} \Big) \Bigg\},\label{explicit asymp for F in the case where sp=0 in thm}
\end{align}
where $G$ is Barnes' $G$-function, $\zeta$ is Riemann's zeta-function, and $u_{0},\ldots,u_{p-2},u_{p+1},\ldots,u_{m}$ are given by 
\begin{align}
& u_{j} = \log \frac{s_{j}}{s_{j+1}}, \qquad j \in \{0,\ldots,m \}\setminus \{p-1,p\} \label{def of beta in thm sp=0},
\end{align}
where $s_{0} := 1$, $s_{m+1} := 1$. Furthermore, the error term in \eqref{explicit asymp for F in the case where sp=0 in thm} is uniform in $s_{1},\ldots,s_{p-1},s_{p+1},\ldots$, $s_{m}$ in compact subsets of $(0,+\infty)$ (or equivalently uniform in $u_{0},\ldots,u_{p-2},u_{p+1},\ldots,u_{m}$ in compact subsets of $\mathbb{R}$) and uniform in $x_{0},\ldots,x_{m}$ in compact subsets of $\mathbb{R}$, as long as there exists $\delta > 0$ independent of $r$ such that \eqref{condition on xj in terms of delta} holds.

\medskip Alternatively, one can rewrite \eqref{explicit asymp for F in the case where sp=0 in thm} as follows:
\begin{align}
F(r \vec{x},\vec{s}) = & F\big( (rx_{p-1},rx_{p}),0 \big) \exp \Bigg\{ - \Bigg( \sum_{j=0}^{p-2}u_{j}\hat{\mu}_{j}(r) - \sum_{j=p+1}^{m}u_{j}\hat{\mu}_{j}(r) \Bigg) + \sum_{\substack{j=0 \\ j \neq p-1,p}}^{m} \frac{u_{j}^{2}}{2} \hat{\sigma}_{j}^{2}(r) \nonumber \\
&  + \sum_{\substack{0 \leq j < k \leq m \\ j,k \neq p-1,p}}u_{j}u_{k} \hat{\Sigma}_{j,k} + \sum_{\substack{j=0 \\ j \neq p-1,p}}^{m} \log \bigg( G\Big(1+\frac{u_{j}}{2\pi i}\Big)G\Big(1-\frac{u_{j}}{2\pi i}\Big) \bigg) + \bigO\Big( \frac{\log r}{r}\Big) \Bigg\}
\end{align}
where the large $r$ asymptotics of $F\big( (rx_{p-1},rx_{p}),0 \big)$ are given by \eqref{known result m=1 s1 = 0}, and $\hat{\mu}_{j}$, $\hat{\sigma}_{j}^{2}$ and $\hat{\Sigma}_{j,k}$ are given by
\begin{align}
& \hat{\mu}_{j}(r) = \frac{r}{\pi}\sqrt{|\smash{x_{p}-x_{j}}|}\sqrt{|\smash{x_{p-1}-x_{j}}|}, \label{mean sp=0} \\
& \hat{\sigma}_{j}^{2}(r) = \frac{1}{2\pi^{2}}\log \Bigg(  \frac{4 \sqrt{|x_{j}-x_{p}| \, |x_{j}-x_{p-1}|}|2x_{j}-x_{p}-x_{p-1}|r}{x_{p}-x_{p-1}} \Bigg), \label{var sp=0} \\
& \hat{\Sigma}_{j,k} = \frac{1}{2\pi^{2}}\log \Bigg( \frac{\sqrt{|x_{k}-x_{p}|}\sqrt{|x_{j}-x_{p-1}|}+\sqrt{|x_{k}-x_{p-1}|}\sqrt{|x_{j}-x_{p}|}}{\big|\sqrt{|x_{k}-x_{p}|}\sqrt{|x_{j}-x_{p-1}|}-\sqrt{|x_{k}-x_{p-1}|}\sqrt{|x_{j}-x_{p}|}\big|} \Bigg). \label{cov sp=0}
\end{align}
\end{theorem}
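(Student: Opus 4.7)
The plan is to analyze $F(r\vec{x},\vec{s})$ through a $2\times 2$ Riemann-Hilbert problem (RHP) $X$ encoded by the integrable structure of the sine kernel (Its-Izergin-Korepin-Slavnov formalism), and then to perform a Deift-Zhou nonlinear steepest-descent analysis. When $s_p=0$, the jump of $X$ on the central interval $(rx_{p-1},rx_p)$ is purely of ``gap type'' (the sine operator is fully subtracted there), whereas on every other interval $(rx_{k-1},rx_k)$, $k\ne p$, the jump carries the factor $1-s_k$ and produces, after normalisation, a Fisher-Hartwig-type singularity at each $x_k$ with $k\neq p-1,p$.

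The first step is to establish a differential identity expressing $\partial_r \log F(r\vec{x},\vec{s})$ (or an analogous derivative with respect to one of the $u_j$ or $x_j$) as a quantity local to the endpoints and computable from $X$; this identity is of the standard type used in \cite{DIKZ2007, BW1983} and in the proof of Theorem \ref{thm:s1 neq 0}.

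The core of the argument is the steepest-descent analysis. After normalising at infinity and opening lenses to push the oscillatory jumps off $\mathbb{R}$, one constructs a global parametrix $P^{(\infty)}$ solving the model RHP with piecewise-constant jumps on $(x_0,x_m)$. The natural ansatz combines the two-sheeted function $r_p(z):=\sqrt{(z-x_{p-1})(z-x_p)}$ (encoding the ``gap branch'' across $(x_{p-1},x_p)$) with Szeg\H{o}-type power factors of the form $\big((z-x_j)/r_p(z)\big)^{u_j/(2\pi i)}$ attached at each remaining endpoint $x_j$. Near $x_{p-1}$ and $x_p$ one installs the local parametrices modelling the edges of the sine-kernel gap (as in \cite{DIKZ2007, Ehr sine, Krasovsky}), and near each other $x_j$ a confluent hypergeometric parametrix handling the Fisher-Hartwig singularity. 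The ratio $R=X\big(P^{(\infty)}P^{(\mathrm{loc})}\big)^{-1}$ then solves a small-norm RHP with jump error of order $r^{-1}$ on the local boundaries and exponentially small elsewhere, giving $R=I+\bigO(r^{-1})$ by standard norm estimates.

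Feeding the asymptotics of $X$ into the differential identity and integrating produces, with final error $\bigO(r^{-1}\log r)$, the $-r^2(x_p-x_{p-1})^2/8$ leading term and the constant $\tfrac{1}{3}\log 2+3\zeta'(-1)$ from the sine-gap parametrices at $x_{p-1},x_p$, the linear-in-$r$ terms from the interaction between the gap and the Fisher-Hartwig points, the $\log r$ contributions from the monodromy data at all the singularities, and the Barnes $G$ factors from the confluent hypergeometric parametrices. The constants of integration are pinned down by matching with \eqref{known result m=1 s1 = 0} in the limit $u_j\to 0$ for every $j\neq p-1,p$. I expect the principal obstacle to lie in constructing the global parametrix and in verifying that its matching with the local parametrices reproduces \emph{exactly} the cross terms $u_j u_k\hat{\Sigma}_{j,k}$ of the form
\[
\log\frac{\sqrt{|x_k-x_p|}\sqrt{|x_j-x_{p-1}|}+\sqrt{|x_k-x_{p-1}|}\sqrt{|x_j-x_p|}}{\big|\sqrt{|x_k-x_p|}\sqrt{|x_j-x_{p-1}|}-\sqrt{|x_k-x_{p-1}|}\sqrt{|x_j-x_p|}\big|},
\]
which requires careful tracking of the branch cuts of $r_p$ across all the intervals and of the matching matrices at each local parametrix.
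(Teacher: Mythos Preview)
Your outline is essentially the paper's strategy: IIKS setup, a $g$-function built from $\sqrt{(z-x_{p-1})(z-x_p)}$, Szeg\H{o}-type global parametrix, Bessel parametrices at $x_{p-1},x_p$ and confluent-hypergeometric parametrices at the remaining $x_j$, small-norm closing, and then integration of a differential identity with the constant of integration fixed by \eqref{known result m=1 s1 = 0}.

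One point deserves sharpening, because as written it is internally inconsistent. You say the differential identity may be taken in $r$ \emph{or} in one of the $u_j$, and that the constant $\tfrac{1}{3}\log 2+3\zeta'(-1)$ emerges ``from the sine-gap parametrices at $x_{p-1},x_p$''. Neither is quite right. If you differentiate in $r$ you recover the $-r^2(x_p-x_{p-1})^2/8$ and $-\tfrac14\log r$ terms but the constant of integration is genuinely lost; this is exactly the obstruction in Widom's approach \eqref{widom union}. The paper instead differentiates in $s_k$ (equivalently $\beta_k=u_k/2\pi i$), so that the integration path in parameter space starts at $\vec{\beta}=\vec{0}$, where $\widetilde F(r\vec{x},\vec{0})=F((rx_{p-1},rx_p),0)$ is precisely the quantity whose asymptotics \eqref{known result m=1 s1 = 0} are already known. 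The Dyson--Widom constant $\tfrac{1}{3}\log 2+3\zeta'(-1)$ is therefore imported wholesale from that known result, not produced by the Bessel parametrices; the latter contribute only to the subleading corrections in $K_{x_{p-1}}$ and $K_{x_p}$ when computing $\partial_{\beta_k}\log\widetilde F$. Your final sentence (``constants of integration are pinned down by matching with \eqref{known result m=1 s1 = 0}'') is the correct mechanism; the earlier attribution of the constant to the local parametrices should be dropped, and the choice of $\partial_{s_k}$ rather than $\partial_r$ is not optional but essential.
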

\newpage 
\paragraph{Numerical confirmations of Theorems \ref{thm:s1 neq 0} and \ref{thm:sp=0}.} 
\begin{figure}[h]
\begin{tikzpicture}[master]
\node at (0,0) {};
\node at (0,-0.1) {\includegraphics[scale=0.25]{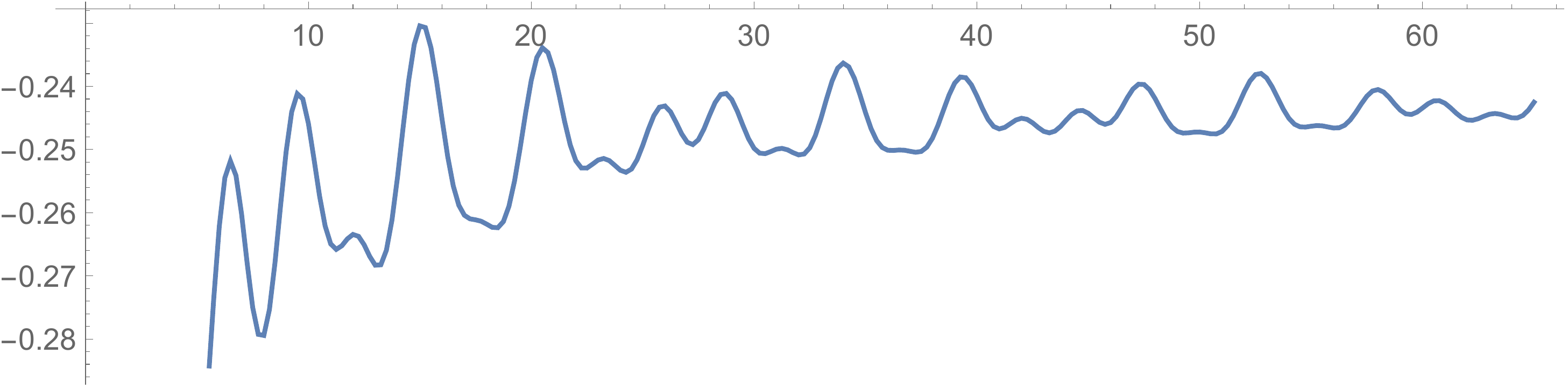}};
\end{tikzpicture}
\begin{tikzpicture}[slave]
\node at (0,0) {\includegraphics[scale=0.25]{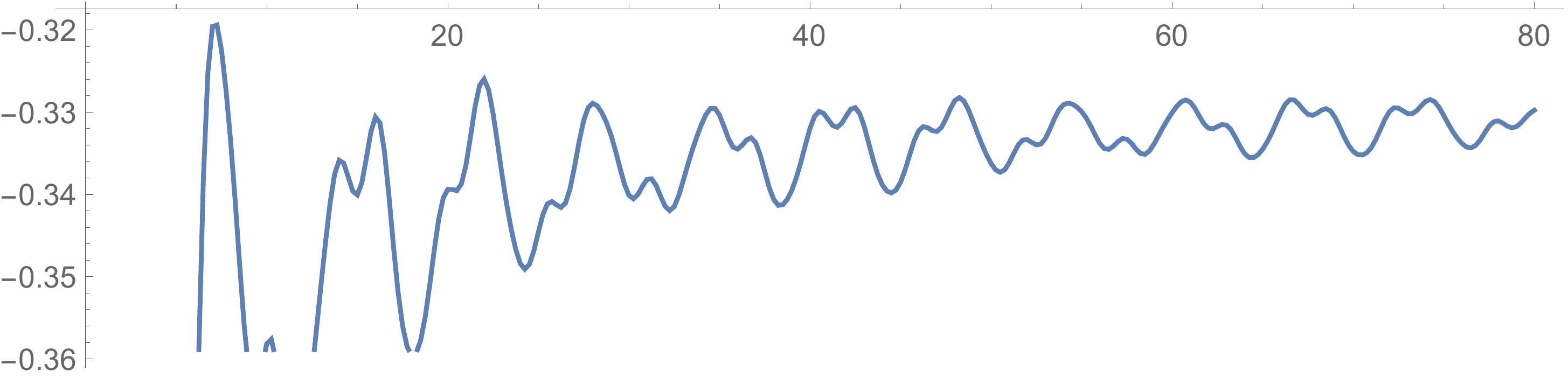}};
\end{tikzpicture}
\caption[]{\label{fig: num check}  Numerical confirmations of Theorem \ref{thm:s1 neq 0}.
}
\end{figure}
Recent progress of Bornemann \cite{Bornemann} on the numerical evaluation of Fredholm determinants have allowed us to verify Theorems \ref{thm:s1 neq 0} and \ref{thm:sp=0} for several choices of the parameters.
Let $\mathcal{F}_{1}(r\vec{x},\vec{s})$ denote the right-hand side of \eqref{thm product s1 neq 0} without the error term. Figure \ref{fig: num check} represents the graph of the function
\begin{align}\label{function in num check}
r \mapsto r\Big(\log F(r\vec{x},\vec{s}) - \log \mathcal{F}_{1}(r\vec{x},\vec{s}) \Big)
\end{align} 
for the following two choices of the parameters:
\begin{align*}
& \mbox{Left: } & & m=2, & & x_{0}=0, \; x_{1} = 0.7, \; x_{2} = 1.2, & & u_{1} = -1.1, \; u_{2} = -2.4, \\
& \mbox{Right: } & & m=3, & & x_{0} = 0, \; x_{1} = 0.5, \; x_{2} = 1.1, \; x_{3} = 1.7, & & u_{1} = -0.8, \; u_{2} = -1.8, \; u_{3} = -1.32.
\end{align*}
Similarly, let $\mathcal{F}_{2}(r\vec{x},\vec{s})$ denote the right-hand side of \eqref{explicit asymp for F in the case where sp=0 in thm} without the error term. Figure \ref{fig: num check 2} represents the graph of the function 
\begin{align}\label{function in num check 2}
r \mapsto r\Big(\log F(r\vec{x},\vec{s}) - \log \mathcal{F}_{2}(r\vec{x},\vec{s}) \Big)
\end{align} 
for the following two cases:
\begin{align*}
& \mbox{Left: } & & m=3, \; p=2, & & x_{0}=0, \; x_{1} = 0.5, \; x_{2} = 1.1, \; x_{3} = 1.7, \\
& && && u_{0} = 0.8, \; u_{3} = -1.32, \\
& \mbox{Right: } & & m=4, \; p=3, & & x_{0} = 0, \; x_{1} = 0.5, \; x_{2} = 1.1, \; x_{3} = 1.7, \; x_{4} = 2.5, \\
& && && u_{0} = 0.8, \; u_{1} = 1.8, \; u_{4} = -1.87.
\end{align*}
We see in Figures \ref{fig: num check} and \ref{fig: num check 2} that the functions \eqref{function in num check} and \eqref{function in num check 2} seem to remain bounded as $r \to + \infty$. These observations are consistent with Theorems \ref{thm:s1 neq 0} and \ref{thm:sp=0}. In fact, Figures \ref{fig: num check} and \ref{fig: num check 2} also suggest that the error terms in Theorems \ref{thm:s1 neq 0} and \ref{thm:sp=0} could be reduced from $\bigO(\frac{\log r}{r})$ to $\bigO(\frac{1}{r})$. 
\begin{figure}[h]
\begin{tikzpicture}[master]
\node at (0,0) {};
\node at (0,-0.1) {\includegraphics[scale=0.25]{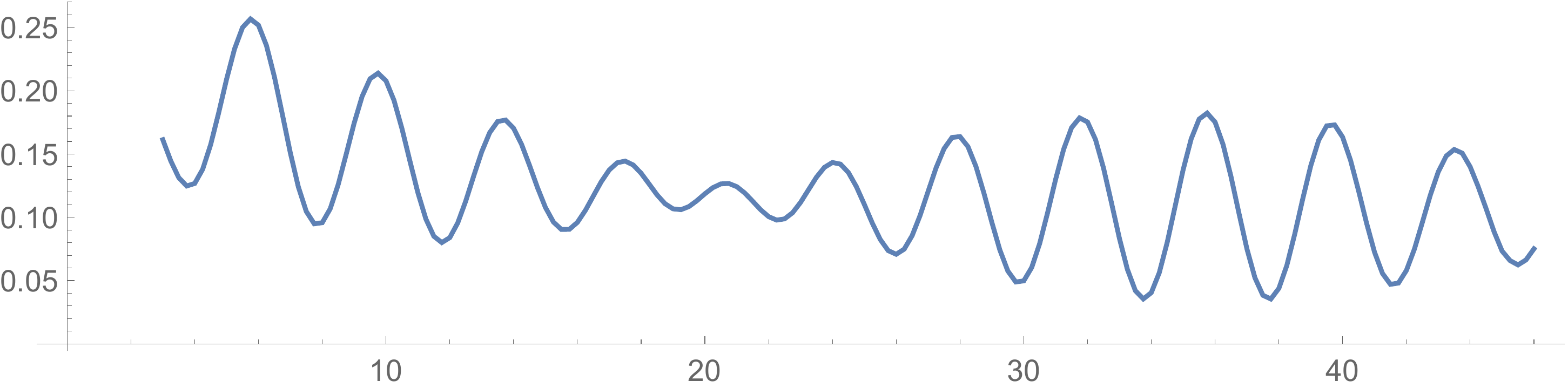}};
\end{tikzpicture}
\begin{tikzpicture}[slave]
\node at (0,0) {\includegraphics[scale=0.25]{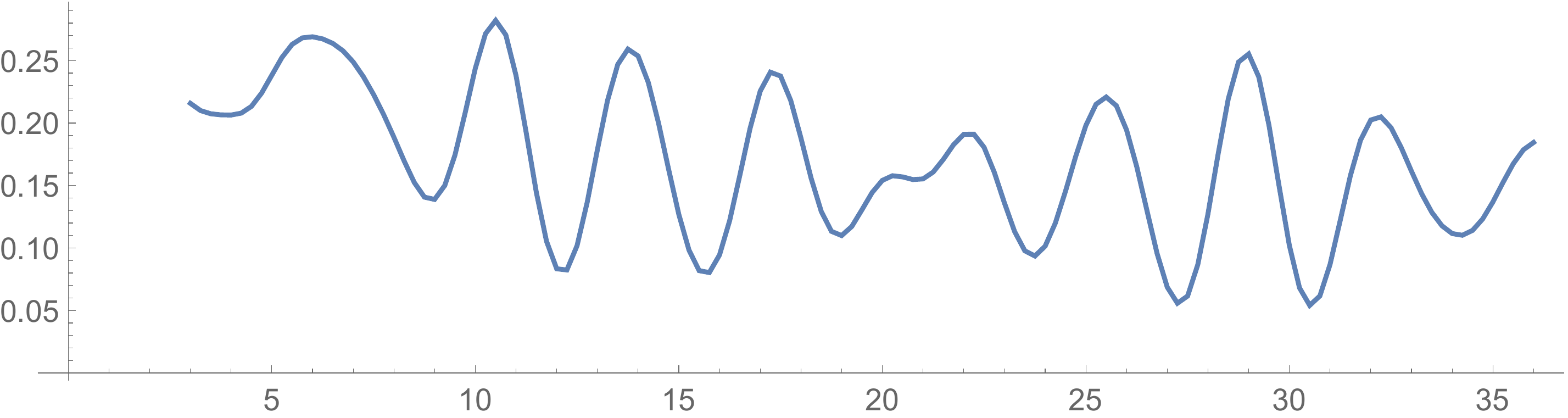}};
\end{tikzpicture}
\caption[]{\label{fig: num check 2} Numerical confirmations of Theorem \ref{thm:sp=0}.
}
\end{figure}

\newpage 

\subsection{Background and applications of Theorems \ref{thm:s1 neq 0} and \ref{thm:sp=0}}\label{subsection: background and motivation}
The sine point process lies at the heart of random matrix theory. It has attracted a lot of attention since the seminal work of Dyson \cite{DysonConjecture}, who first proved that this process describes the local eigenvalue statistics in the bulk of the spectrum of large random Hermitian matrices taken from the Gaussian Unitary Ensemble. Dyson also conjectured that this same process also describes the bulk local eigenvalue statistics for a wide class of large random matrices. There has been much progress on this conjecture, which has now been rigorously proved for many random matrix models, see e.g. \cite{DKMVZ2, PS1997, BleherIts, JohanssonUniversality, PS2008, EPRSY2010, TaoVu}. We refer to \cite{Erdos, TaoVu2, KuijlaarsUniversality, Lub2016} for recent surveys of known appearances of the sine process in random matrix theory.

\medskip The Fredholm determinant $F(\vec{x},\vec{s})$ is a central object in the study of the sine process. For the convenience of the reader, we first briefly recall the definition of a point process, following the classical references \cite{Soshnikov, Borodin, Johansson}.

\medskip A point process on $\mathbb{R}$ is a probability measure over the space $\{X\}$ of all locally finite point configurations on $\mathbb{R}$. In general, the process can be well understood via the study of its $k$-point correlation functions $\{\rho_{k}:\mathbb{R}^{k} \to [0,+\infty)\}_{k \geq 1}$ which are defined such that
\begin{align}\label{def of DPP}
\mathbb{E}\bigg[ \sum_{\substack{\xi_{1},\ldots,\xi_{k} \in X \\ \xi_{i}\neq \xi_{j} \, \mathrm{if} \, i \neq j}} f(\xi_{1},\ldots,\xi_{k}) \bigg] = \int_{\mathbb{R}^{k}} f(u_{1},\ldots,u_{k}) \rho_{k}(u_{1},\ldots,u_{k})du_{1}\ldots du_{k}
\end{align}
holds for any measurable symmetric function $f: \mathbb{R}^{k} \to \mathbb{R}$ with compact support. The sum at the left-hand side of \eqref{def of DPP} is taken over all (ordered) $k$-tuples of distinct points of the random point configuration $X$. 

\medskip A point process on $\mathbb{R}$ is \textit{determinantal} if all its correlation functions $\{\rho_{k}\}_{k \geq 1}$ exist and can be expressed as determinants involving a kernel $\mathbb{K}:\mathbb{R}^{2}\to \mathbb{R}$ as follows
\begin{align}\label{def of correlation functions}
\rho_{k}(u_{1},\ldots,u_{k}) = \det (\mathbb{K}(u_{i},u_{j}))_{i,j=1}^{k}, \qquad \mbox{for all } k \geq 1 \mbox{ and for all } u_{1},\ldots,u_{k} \in \mathbb{R}.
\end{align}
The sine process is determinantal and corresponds to the case $\mathbb{K}=K$, where the sine kernel $K:\mathbb{R}^{2} \to \mathbb{R}$ is defined in \eqref{Sine kernel}.
In a determinantal point process, all quantities of interest  can be expressed in terms of the kernel. For example, it is directly seen from \eqref{def of DPP} and \eqref{def of correlation functions} that the probability that a random point configuration $X$ (distributed according to the sine point process) contains no points on a given bounded Borel set $A \subset \mathbb{R}$ is equal to
\begin{align}\label{gap proba generalities}
\hspace{-0.5cm}\mathbb{P}[X \cap A = \emptyset] = \mathbb{E}\bigg[\prod_{\xi \in X}(1-\chi_{A}(\xi)) \bigg] = 1 + \sum_{k = 1}^{+\infty} \frac{(-1)^{k}}{k !} \int_{A^{k}} \det (K(u_{i},u_{j}))_{i,j=1}^{k}du_{1} \ldots du_{k},
\end{align}
where $\chi_{A}(\xi) = 1$ if $\xi \in A$ and $\chi_{A}(\xi) = 0$ otherwise. Note that the right-hand side of \eqref{gap proba generalities} is, by definition, equal to the Fredholm determinant $\det (1-\mathcal{K}|_{A})$. In the sine process, the expected number of points that fall in $A$ can be computed explicitly using \eqref{Sine kernel}:
\begin{align}\label{explicit formula for the expectation}
\mathbb{E}[\#(X \cap A)] = \mathbb{E}\bigg[ \sum_{\xi \in X} \chi_{A}(\xi) \bigg] = \int_{A} K(x,x)dx = \frac{|A|}{\pi},
\end{align}
where $|A|$ is the Lebesgue measure of $A$. We refer the reader to \cite{Soshnikov, Borodin, Johansson} for more discussions on the algebraic and probabilitic properties of determinantal point processes. 

\medskip Taking $A = (x_{0},x_{1})$ in \eqref{gap proba generalities}, we see that the probability to observe a gap in the sine process on $(x_{0},x_{1})$ can be expressed in terms of $F$ (defined in \eqref{F Fredholm}) by
\begin{align}\label{prob one gap sine}
F((x_{0},x_{1}),0) = \det (1-\mathcal{K}|_{(x_{0},x_{1})}) = \mathbb{P}[X \cap (x_{0},x_{1}) = \emptyset].
\end{align}
The large gap asymptotics on a single interval in the sine process are given by \eqref{known result m=1 s1 = 0}.

\medskip More generally, taking $m$ odd and $A = (x_{0},x_{1})\cup (x_{2},x_{3})\cup \ldots \cup (x_{m-1},x_{m})$, we infer from \eqref{gap proba generalities} and \eqref{F Fredholm} that the probability to find no points on $\frac{m+1}{2}$ disjoint intervals is given by
\begin{equation}\label{gap prob multi interval}
F(\vec{x},(0,1,0,1,0,\ldots,1,0)) = \mathbb{P}\Big[X \cap \Big( (x_{0},x_{1})\cup (x_{2},x_{3})\cup \ldots \cup (x_{m-1},x_{m}) \Big) = \emptyset\Big],
\end{equation}
where $s_{j} = 0$ if $j$ is odd and $s_{j} = 1$ otherwise. The known results on the asymptotics of \eqref{gap prob multi interval} as the size of the intervals gets large have been discussed below \eqref{widom union}.  

\medskip We now discuss the meaning of  $F(\vec{x},\vec{s})$ in terms of probabilities for general values of $s_{1},\ldots,s_{m}$. As before, let $X$ be a random point configuration distributed according to the sine process. Given a Borel set $A$, we define $N_{A} = \#(X \cap A)$. In other words, $N_{A}$ is the random variable that counts the number of points in $X$ that falls in $A$; $N_{A}$ is also called the counting function on $A$. It is known \cite[Theorem 2]{Soshnikov} that $F(\vec{x},\vec{s})$ is an entire function of $s_{1},\ldots,s_{m}$ which can be rewritten as follows
\begin{equation}\label{F generating function}
F(\vec{x},\vec{s}) = \mathbb{E}\bigg[ \prod_{j=1}^{m} s_{j}^{N_{(x_{j-1},x_{j})}} \bigg] = \sum_{k_{1},...,k_{m} \geq 0} \mathbb{P}\Bigg(\bigcap_{j=1}^{m}N_{(x_{j-1},x_{j})}=k_{j}\Bigg)\prod_{j=1}^{m} s_{j}^{k_{j}}.
\end{equation}
The above expression motivates why $F$ is called the generating function of the sine point process; it shows in particular that we can deduce a lot of information from $F$. Any quantity of the form $0^{0}$ in \eqref{F generating function} should be interpreted as being equal to $1$. More precisely,
\begin{align*}
s_{j}^{N_{(x_{j-1},x_{j})}}=1 \mbox{ if } s_{j}=0 \mbox{ and } N_{(x_{j-1},x_{j})}=0, \qquad \mbox{ and } \qquad s_{j}^{k_{j}}=1 \mbox{ if } s_{j}=0 \mbox{ and } k_{j}=0.
\end{align*}
For example, for an odd integer $m$ and for $\vec{s}$ such that $s_{j} = 0$ if $j$ is odd and $s_{j} = 1$ if $j$ is even, we get from \eqref{F generating function} that
\begin{equation}\label{gap prob multi interval 2}
F(\vec{x},(0,1,0,\ldots,1,0)) = \mathbb{P}\Big( N_{(x_{0},x_{1})} = 0 \cap N_{(x_{2},x_{3})} = 0 \cap \ldots \cap N_{(x_{m-1},x_{m})} = 0 \Big),
\end{equation}
which is equivalent to \eqref{gap prob multi interval}, as it must.

\medskip We mention that there is a well-known connection between $F$ and the theory of Painlev\'{e} equations. Using monodromy preserving deformations, Jimbo, Miwa, M\^{o}ri and Sato in \cite[eq (2.27)]{JMMS1980} have established the following remarkable identity 
\begin{align*}
F((x_{0},x_{1}),s_{1}) = \exp \bigg( \int_{0}^{x_{1}-x_{0}} \frac{\sigma(x)}{x}dx \bigg)
\end{align*}
where $s_{1} \in [0,+\infty)$ and $\sigma$ is the solution to the Painlev\'{e} V equation
\begin{align*}
& && \hspace{-2cm} (x\sigma^{\prime \prime})^{2} + 4(x \sigma^{\prime}-\sigma)(x\sigma^{\prime} - \sigma + (\sigma^{\prime})^{2}) = 0
	\\ \nonumber
& \mbox{which satisfies} & & \hspace{-2cm} \sigma(x) = -\frac{s_{1}}{\pi}x - \frac{s_{1}^{2}}{\pi^{2}}x^{2} - \frac{s_{1}^{3}}{\pi^{3}}x^{3} + \bigO(x^{4}), \qquad \mbox{as } x \to 0.
\end{align*}
For general values of the parameters $m \geq 1$, $\vec{s}$ and $\vec{x}$, the determinant $F(\vec{x},\vec{s})$ is related to a more involved system of partial differential equations which generalizes the Painlev\'{e} V equation \cite{JMMS1980} (see also \cite[Theorem 3.6.1 and Subsection 3.6.3]{AGZ2010}). The solution of this system of equations involves transcendental functions.

\vspace{-0.3cm}\paragraph{Thinning.} The operation of thinning consists of randomly removing a fraction of points and was introduced in random matrix theory by Bohigas and Pato in \cite{BohigasPato1, BohigasPato2}. Given a function 
\begin{align*}
s: \mathbb{R} \to [0,1], \qquad x \mapsto s(x),
\end{align*}
we say that a point configuration $\widetilde{X}$ is distributed according to \textit{the thinned sine point process} if 
\begin{align}\label{def of X tilde}
\widetilde{X} = \{x \in X: \mathcal{B}(x)=1 \}
\end{align}
where $X$ is distributed according to the sine process, $\mathcal{B} = \{\mathcal{B}(x):x \in \mathbb{R}\}$ is a random field of independent Bernoulli variables with $\mathbb{P}[\mathcal{B}(x)=1]=1-s(x)$, and furthermore $\mathcal{B}$ is independent of $X$. It is well-known \cite[Proposition A.2]{LMR}, and also easy to see from \eqref{def of DPP}, that the thinned sine process is also determinantal and that its kernel is given by
\begin{align}\label{kernel K tilde}
\widetilde{K}(x,y) = \sqrt{1-s(x)} K(x,y) \sqrt{1-s(y)} = \sqrt{1-s(x)} \frac{\sin(x-y)}{\pi(x-y)} \sqrt{1-s(y)}.
\end{align}
If the function $x \mapsto s(x)=s_{1}$ is constant, then each point is removed with the same probability $s_{1} \in [0,1]$.\footnote{If $s_{1} = 0$, no particle are removed and the thinned process coincides with the initial point process.} The thinned sine point process already presents interesting features in this case (see e.g. \cite{BDIK2015,BIP}), as it describes a crossover between the original process (when $s_{1} = 0$), and an uncorrelated Poisson process (when $s_{1} \to 1$ at a certain speed). It follows directly from \eqref{gap proba generalities} (with $X$ and $K$ replaced by $\widetilde{X}$ and $\widetilde{K}$) and the definition \eqref{F Fredholm} of $F$ that 
\begin{align*}
F((x_{0},x_{1}),s_{1}) = \det (1-(1-s_{1})\mathcal{K}|_{(x_{0},x_{1})}) = \mathbb{P}[\widetilde{X} \cap (x_{0},x_{1}) = \emptyset].
\end{align*}
In particular, \eqref{known result m=1 s1 neq 0} can be interpreted as large gap asymptotics in the (constant) thinned sine process. By comparing the leading terms of \eqref{known result m=1 s1 neq 0} and \eqref{known result m=1 s1 = 0}, we see that it is significantly more likely to observe a large gap in the (constant) thinned sine point process than in the usual sine point process. 

\medskip More generally, let $s$ be a piecewise constant function, say 
\begin{align}\label{piecewise constant s}
s(x) = \begin{cases}
s_{j}, & \mbox{if } x \in (x_{j-1},x_{j}), \; j=1,\ldots,m, \\
0, & \mbox{otherwise},
\end{cases}
\end{align}
where $s_{1},\ldots,s_{m} \in [0,1]$. It follows directly from \eqref{F Fredholm}, \eqref{gap proba generalities} and \eqref{kernel K tilde} that the probability of observing no points on $(x_{0},x_{m})$ in the (piecewise constant) thinned point process is simply given by
\begin{align}\label{generalized prob piecewise thinned}
F(\vec{x},\vec{s}) = \det \bigg(1- \sum_{k=1}^{m}(1-s_{k})\mathcal{K}|_{(x_{k-1},x_{k})} \bigg) = \mathbb{P}[\widetilde{X} \cap (x_{0},x_{m}) = \emptyset].
\end{align}
Therefore, Theorems \ref{thm:s1 neq 0} and \ref{thm:sp=0} give large gap asymptotics in any piecewise thinned sine process, as long as at most one of the parameters $s_{1},\ldots,s_{m}$ is $0$.

\vspace{-0.3cm}\paragraph{Conditioning.}

 Now, following \cite{ChCl2}, we consider a situation where we have information about the thinned process, and we try to deduce from it some information about the initial process. More precisely, let $X$ be a random point configuration distributed according to the sine point process, let $\widetilde{X}$ be as in \eqref{def of X tilde}, and assume that $\#(\widetilde{X} \cap B)=0$, where $B = (x_{0},x_{m})$. We are interested in the conditional random variable
\begin{align}\label{conditional random variable}
\widehat{\mathcal{N}}_{B} := \#(X \cap B) | \Big( \#(\widetilde{X} \cap B)=0 \Big).
\end{align}
If $s$ is piecewise constant and given by \eqref{piecewise constant s}, then using first Bayes' formula and then \eqref{prob one gap sine} and \eqref{generalized prob piecewise thinned}, we obtain
\begin{align} 
\mathbb{P}\big( \widehat{\mathcal{N}}_{B} = 0 \big) & = \mathbb{P}\Big( \#(X \cap B)=0 \Big| \#(\widetilde{X} \cap B)=0  \Big) = \frac{\mathbb{P}(\#(X \cap B)=0)}{\mathbb{P}( \#(\widetilde{X} \cap B)=0 )} = \frac{F((x_{0},x_{m}),0)}{F(\vec{x},\vec{s})}. \nonumber
\end{align}
Therefore, if at most one of the parameters $s_{1},\ldots,s_{m}$ is $0$, we can obtain large $r$ asymptotics for $\mathbb{P}\big( \widehat{\mathcal{N}}_{(rx_{0},rx_{m})} = 0 \big)$ by combining \eqref{known result m=1 s1 = 0} with either Theorem \ref{thm:s1 neq 0} or Theorem \ref{thm:sp=0}. The conditional random variable \eqref{conditional random variable} is relevant in e.g. nuclear physics \cite{BohigasPato1,BohigasPato2}. Indeed, it is now well-known (from the work of Dyson) that the energy levels of heavy atoms feature a similar repulsive structure as the points of the sine point process. However, high quality data is often not available, and in practice one usually observes only a fraction of the energy levels. It is then natural to wonder if one can retrieve some missing energy levels given the available information.

\medskip The random variable \eqref{conditional random variable} is conditioned on $\#(\widetilde{X} \cap B)=0$. We mention that different types of conditioning of the sine process have been studied in great depth in the literature. In particular, it is known \cite[Theorem 4.2]{Ghosh} that for almost all point configurations $X$,\footnote{Here, ``almost all $X$" means ``almost all $X$ with respect to the sine process".} if $A$ is a compact interval, then $X \setminus A$ determines almost surely $\#(X \cap A)$. The conditional measure of the sine process on $\{X|X\setminus A\}$ admits an explicit density, see \cite[Theorems 1.1 and 1.4]{BufetovConditional}. Furthermore, the correlation kernel of this conditional sine process converges to the usual sine kernel as the size of the interval $A$ gets large, see \cite[Theorems 1.3 and 1.4]{KuijlaarsDiaz}.

\paragraph{Asymptotics for the variance and covariance of the sine counting function.} \, \newline 
We first briefly review some known results on the counting function of the sine process.

\medskip The formula \eqref{explicit formula for the expectation} implies in particular that $\mathbb{E}[N_{(rx_{0},rx_{1})}] = \mu_{1}(r)$, where $\mu_{1}$ is given by \eqref{mean thm s1 neq 0}. There is no such explicit expression for $\mbox{Var}[N_{(rx_{0},rx_{1})}]$, but we can compute its large $r$ asymptotics as follows. We know from \eqref{F generating function} with $m=1$ that 
\begin{align}
F((rx_{0},rx_{1}),e^{u}) & = \mathbb{E} \big[ e^{uN_{(rx_{0},rx_{1})}} \big] \nonumber \\
& = 1 + u \, \mathbb{E}[N_{(rx_{0},rx_{1})}] + \frac{u^{2}}{2} \mathbb{E}[N_{(rx_{0},rx_{1})}^{2}] + \bigO(u^{3}) \quad \mbox{as } u \to 0. \label{F as expectation m=1}
\end{align}
Recall that the asymptotics of $F((rx_{0},rx_{1}),e^{u})$ as $r \to + \infty$ are given by \eqref{known result m=1 s1 neq 0} (and were obtained in \cite{BW1983}) and are uniform for $u$ in compact subsets of $\mathbb{R}$. In particular, these asymptotics can be expanded as $u \to 0$. A comparison of this expansion with \eqref{F as expectation m=1} yields
\begin{align}
& \mbox{Var}[N_{(rx_{0},rx_{1})}] = \sigma_{1}^{2}(r) + \frac{1 + \gamma_{\mathrm{E}}}{\pi^{2}} + \bigO (r^{-1}), \label{variance asymp s1 neq 0}
\end{align}
as $r \to + \infty$, where $\sigma_{1}^{2}$ is given by \eqref{variance thm s1 neq 0}. Here $\gamma_{\mathrm{E}} \approx 0.5772$ is Euler's gamma constant and is part of the definition of the Barnes' G function, see \cite[formula 5.17.3]{NIST}. The leading term of \eqref{variance asymp s1 neq 0} was also obtained in \cite{CostinLebowitz} without relying on \cite{BW1983}. In a slightly different direction, Holcomb and Paquette in \cite{HolcombPaquette} have studied the maximum deviation of the sine-$\beta$ process. For $\beta=2$, this process coincides with the determinantal sine point process, and their result states that for any $\epsilon >0$, we have
\begin{align*}
\lim_{r \to + \infty} \mathbb{P} \Bigg( \frac{\sqrt{2}}{\pi}-\epsilon \leq \frac{\max_{0 \leq r' \leq r} (N_{(r'x_{0},r'x_{1})}-\mu_{1}(r'))}{\log r} \leq \frac{\sqrt{2}}{\pi}+\epsilon \Bigg) = 1.
\end{align*}

Theorem \ref{thm:s1 neq 0} allows to obtain precise large $r$ asymptotics for the covariance between $N_{(rx_{0},rx_{1})}$ and $N_{(rx_{0},rx_{2})}$. To see this, we first rewrite the expression \eqref{F generating function} for $F$ as follows
\begin{equation}\label{F as expectation}
F(r\vec{x},\vec{s}) = \mathbb{E}\bigg[ \prod_{j=1}^{m} s_{j}^{N_{(rx_{j-1},rx_{j})}} \bigg] = \mathbb{E} \bigg[ \prod_{j=1}^{m}e^{u_{j}N_{(rx_{0},rx_{j})}} \bigg],
\end{equation}
where $u_{1},\ldots,u_{m}$ are given by \eqref{def beta thm s1 neq 0}. In particular, using \eqref{F as expectation} with $m=1$ and $m=2$, we obtain 
\begin{align}
\frac{F\big(r(x_{0},x_{1},x_{2}),(e^{2u},e^{u})\big)}{F(r(x_{0},x_{1}),e^{u})F(r(x_{0},x_{2}),e^{u})} = & \; \frac{\mathbb{E}\big[ e^{u N_{(rx_{0},rx_{1})}}e^{u N_{(rx_{0},rx_{2})}} \big]}{\mathbb{E}\big[ e^{u N_{(rx_{0},rx_{1})}} \big]\mathbb{E}\big[ e^{u N_{(rx_{0},rx_{2})}} \big]} \label{cov in the expansion} \\
= & \; 1 + \mbox{Cov}(N_{(rx_{0},rx_{1})},N_{(rx_{0},rx_{2})})u^{2} + \bigO(u^{3}), \quad \mbox{as } u \to 0. \nonumber
\end{align}
The large $r$ asymptotics for the left-hand side of \eqref{cov in the expansion} can be deduced from Theorem \ref{thm:s1 neq 0} and are uniform for $u$ in compact subsets of $\mathbb{R}$. By expanding these asymptotics as $u \to 0$, and then comparing with the right-hand side of \eqref{cov in the expansion}, we obtain
\begin{equation}\label{asymp for the cov}
\mbox{Cov}[ N_{(rx_{0},rx_{1})},N_{(rx_{0},rx_{2})} ] = \Sigma_{1,2}(r) + \frac{1+\gamma_{\mathrm{E}}}{2\pi^{2}} + \bigO \Big( \frac{\log r}{r} \Big), \qquad \mbox{ as } r \to + \infty,
\end{equation}
where $\Sigma_{1,2}$ is given by \eqref{cov thm s1 neq 0}. Note that the leading term in \eqref{asymp for the cov} is proportional to $\log r$. Interestingly, this contrasts with the asymptotics of the covariances of the Airy and Bessel counting functions which remain bounded, see \cite[below Remark 1]{ChCl3} and \cite[eq (1.17)]{Ch2018}.

\paragraph{Asymptotics for the mean, variance and covariance of a conditional counting function.} If $s_{p} = 0$ for a certain $p \in \{1,\ldots,m\}$, then we can rewrite \eqref{F generating function} as follows
\begin{align}
F(r\vec{x},\vec{s}) & = \mathbb{P}\big( N_{(rx_{p-1},rx_{p})}=0 \big) \mathbb{E}\bigg[ \prod_{j \neq p} s_{j}^{N_{(rx_{j-1},rx_{j})}} \Big| N_{(rx_{p-1},rx_{p})}=0 \bigg] \nonumber \\
& = F((rx_{p-1},rx_{p}),0) \, \mathbb{E}\Bigg[ \prod_{j=0}^{p-2} e^{-u_{j} \widehat{N}_{r\mathcal{I}_{j}}} \prod_{j=p+1}^{m}e^{u_{j} \widehat{N}_{r\mathcal{I}_{j}}}\Bigg], \label{F in remark sp=0 first eq}
\end{align}
where $\widehat{N}_{r\mathcal{I}_{j}}$ is the conditional random variable defined by
\begin{align*}
\widehat{N}_{r\mathcal{I}_{j}} := N_{r\mathcal{I}_{j}} | \Big( N_{(rx_{p-1},rx_{p})}=0 \Big), \qquad \mathcal{I}_{j} = \left\{ \begin{array}{l l}
(x_{p},x_{j}), & \mbox{if } j \in \{p+1,\ldots,m\}, \\
(x_{j},x_{p-1}), & \mbox{if } j \in \{0,\ldots,p-2\}.
\end{array} \right.
\end{align*}
Then, proceeding as in the derivations of \eqref{variance asymp s1 neq 0} and \eqref{asymp for the cov}, we obtain the following new asymptotic formulas
\begin{align*}
& \mathbb{E}\big[\widehat{N}_{r \mathcal{I}_{j}}\big] = \hat{\mu}_{j}(r) + \bigO \Big( \frac{\log r}{r} \Big), \\
& \mbox{Var}\big[\widehat{N}_{r \mathcal{I}_{j}}\big] = \hat{\sigma}_{j}^{2}(r) + \frac{1 + \gamma_{\mathrm{E}}}{\pi^{2}} + \bigO \Big( \frac{\log r}{r} \Big), \\
& \mbox{Cov}[ \widehat{N}_{r \mathcal{I}_{j}},\widehat{N}_{r \mathcal{I}_{k}} ] = \hat{\Sigma}_{j,k} + \bigO \Big( \frac{\log r}{r} \Big),
\end{align*}
as $r \to + \infty$, for any $j,k \in \{0,\ldots,m \}\setminus \{p-1,p\}$, and where $\hat{\mu}_{j}$, $\hat{\sigma}_{j}^{2}$ and $\hat{\Sigma}_{j,k}$ are given by \eqref{mean sp=0}, \eqref{var sp=0} and \eqref{cov sp=0}, respectively. Note that, in contrast to \eqref{asymp for the cov}, $\mbox{Cov}[ \widehat{N}_{r \mathcal{I}_{j}},\widehat{N}_{r \mathcal{I}_{k}} ]$ remains bounded as $r \to + \infty$.


\paragraph{Outline.} The rest of the paper is organized as follows. In Section \ref{Section: model RH problem}, using the fact the sine kernel is \textit{integrable} in the sense of Its, Izergin, Korepin and Slavnov (IIKS) \cite{IIKS}, we express the kernel $K_{\vec{x},\vec{s}}$ of the resolvant operator associated to $F$ in terms of an RH problem whose solution is denoted $Y$ (following \cite{DeiftItsZhou, BDIK2015, BDIK2017, BDIK2019}). Next, we transform the RH problem for $Y$ into a new RH problem with constant jumps whose solution is denoted $\Phi$. In Section \ref{Section: diff id}, we obtain a differential identity which expresses $\partial_{s_k} \log F(r \vec{x},\vec{s})$ (for an arbitrary $k\in\{1,\ldots,m\}$) in terms of $\Phi$. We obtain large $r$ asymptotics for $\Phi$ with $s_{1}, \ldots,s_{m} \in (0,+\infty)$ in Section \ref{Section: Steepest descent with s1>0} via the Deift/Zhou steepest descent method. In Section \ref{Section: integration s1 >0}, we substitute the asymptotics of $\Phi$ in the differential identity to obtain large $r$ asymptotics for $\partial_{s_k} \log F(r \vec{x},\vec{s})$. Then, we proceed with the successive integrations of these asymptotics in $s_{1},\ldots,s_{m}$, which finishes the proof of Theorem \ref{thm:s1 neq 0}. Sections \ref{Section: Steepest descent with sp=0} and \ref{Section: integration sp=0} are devoted to the proof of Theorem \ref{thm:sp=0} (with $s_{p} = 0$), and are organized in the same way as Sections \ref{Section: Steepest descent with s1>0} and \ref{Section: integration s1 >0}.


\section{Model RH problem}\label{Section: model RH problem}
Let us denote $\mathcal{K}_{\vec{x},\vec{s}}$ for the integral operator that appears in the definition \eqref{F Fredholm} of $F(\vec{x},\vec{s})$, that is, 
\begin{equation}\label{integral operator K vec x}
\mathcal{K}_{\vec{x},\vec{s}} = \sum_{j=1}^{m}(1-s_{j})\mathcal{K}|_{(x_{j-1},x_{j})}.
\end{equation}
In Section \ref{Section: diff id}, we will express $\partial_{s_{k}} \log F(\vec{x},\vec{s})$, $k =1,\ldots,m$, in terms of the resolvent operator 
\begin{equation}\label{def resolvent}
\mathcal{R}_{\vec{x},\vec{s}} = (1-\mathcal{K}_{\vec{x},\vec{s}})^{-1}-1 = (1-\mathcal{K}_{\vec{x},\vec{s}})^{-1}\mathcal{K}_{\vec{x},\vec{s}}.
\end{equation}
The goal of this section is to relate $\mathcal{R}_{\vec{x},\vec{s}}$ to a convenient model RH problem. 

\medskip We will proceed in three steps: 
\begin{enumerate}
\item \vspace{-0.1cm} The kernel $K_{\vec{x},\vec{s}}$ of the operator $\mathcal{K}_{\vec{x},\vec{s}}$ is \textit{integrable} in the sense of IIKS \cite{IIKS}, which means that it can be written in the form
\begin{equation}\label{integrable kernel}
K_{\vec{x},\vec{s}}(u,v) = \frac{f^{T}(u)g(v)}{u-v}, 
\end{equation}
for suitable vector valued functions $f$ and $g$ which are written down in \eqref{def of f and g} below. This fact will allow us to use a result of Deift, Its and Zhou \cite{DeiftItsZhou} to express the resolvent operator in terms of an RH problem whose solution is denoted $Y$.
\item As a preparation for the third step, we will consider another RH problem, whose solution $\Phi_{\sin}$ can be explicitly written in terms of elementary functions.
\item Finally, using the properties of $\Phi_{\sin}$, we will transform the RH problem for $Y$ into a new RH problem with constant jumps. The solution to this RH problem is denoted $\Phi$ and will play a central role in the next sections.
\end{enumerate}
\begin{remark}
The above steps 2 and 3 will allow us to work with $\Phi$ instead of $Y$. In Section \ref{Section: diff id}, we will take advantage of the fact that $\Phi$ has constant jumps to simplify the differential identity using a Lax pair. In the same spirit, other RH problems with constant jumps related to the Airy and Bessel processes have also been used in \cite{ClaeysDoeraene, ChDoe} to simplify the analysis. However, we mention that if $m=1$ our RH problem for $Y$ reduces to the RH problem considered by Bothner et al. in \cite{BDIK2015, BDIK2017}, and that their approach is different from ours and does not rely on the steps 2 and 3 above; instead they have successfully performed a Deift--Zhou steepest descent analysis directly on $Y$ (though in a different regime of the parameters than in this paper). 
\end{remark}

It is directly seen from \eqref{integral operator K vec x} and \eqref{Sine kernel} that $K_{\vec{x},\vec{s}}$ can be written in the form \eqref{integrable kernel} with
\begin{equation}\label{def of f and g}
f(u) = \begin{pmatrix}
\sin(u) \\
-\cos(u)
\end{pmatrix}, \qquad g(v) = \frac{1}{\pi} \begin{pmatrix}
\sum_{j=1}^{m} \chi_{(x_{j-1},x_{j})}(v)(1-s_{j})\cos v  \\[0.2cm]
\sum_{j=1}^{m} \chi_{(x_{j-1},x_{j})}(v) (1-s_{j})\sin v  
\end{pmatrix},
\end{equation}
where we recall that for any Borel set $A \subset \mathbb{R}$, $\chi_{A}(u) = 1$ if $u \in A$ and $\chi_{A}(u) = 0$ otherwise. 

\medskip In the sine point process, for all bounded Borel set $B$ with non-zero Lebesgue measure, we have $\mathbb{P}(N_{B} = 0) > 0$. Therefore, from \eqref{F Fredholm} and \eqref{F generating function}, we have 
\begin{equation}\label{observation}
F(\vec{x},\vec{s}) = \det(1-\mathcal{K}_{\vec{x},\vec{s}}) \geq \mathbb{P}(N_{(x_{0},x_{1})} =0) >0,
\end{equation}
which implies in particular that $1-\mathcal{K}_{\vec{x},\vec{s}}$ is invertible and that $\mathcal{R}_{\vec{x},\vec{s}}$ exists. Let us now define the matrix $Y$ by
\begin{equation}\label{def of Y}
Y(z) = I - \int_{x_{0}}^{x_m} \frac{\widetilde{f}(u)g^T(u)}{u - z} du, \qquad \widetilde{f}(u) = \big((1-\mathcal{K}_{\vec{x},\vec{s}})^{-1}f\big)(u).
\end{equation}
The function $Y$ satisfies the following RH problem \cite[Lemma 2.12]{DeiftItsZhou}.
\subsubsection*{RH problem for Y}
\begin{itemize}
\item[(a)] $Y : \mathbb{C}\setminus [x_{0},x_{m}] \to \mathbb{C}^{2\times 2}$ is analytic
\item[(b)] For $u \in (x_{0},x_{m}) \setminus \{x_1,\ldots,x_{m-1}\}$, the limits $\lim_{\epsilon \to 0_{+}} Y(u\pm i \epsilon)$ exist, are denoted $Y_{+}(u)$ and $Y_{-}(u)$ respectively, are continuous as functions of $u$, and satisfy furthermore the jump relation
\begin{equation}
Y_{+}(u) = Y_{-}(u)J_{Y}(u), \qquad J_{Y}(u) = I - 2\pi i f(u)g^{T}(u).
\end{equation}
\item[(c)] $Y(z) = I + \bigO(z^{-1})$ as $z \to \infty$.
\item[(d)] $Y(z) = \bigO(\log(z-x_j))$ as $z \to x_{j}$, for each $j = 0,\ldots,m$.
\end{itemize}
From \cite{DeiftItsZhou}, the kernel $R_{\vec{x},\vec{s}}$ of the resolvent operator $\mathcal{R}_{\vec{x},\vec{s}}$ can be written as 
\begin{equation}\label{R in terms of Y}
R_{\vec{x},\vec{s}}(u,v) = \frac{\widetilde{f}^{T}(u)\widetilde{g}(v)}{u-v}, \qquad u,v \in (x_{0},x_{m}),
\end{equation}
with $\widetilde{f}$ and $\widetilde{g}$ expressed in terms of $Y$ as follows:
\begin{equation}
\widetilde{f}(u) = Y_{+}(u)f(u)  \qquad \mbox{and} \qquad \widetilde{g}(v) = (Y_{+}^{-1}(v))^{T}g(v).
\end{equation}
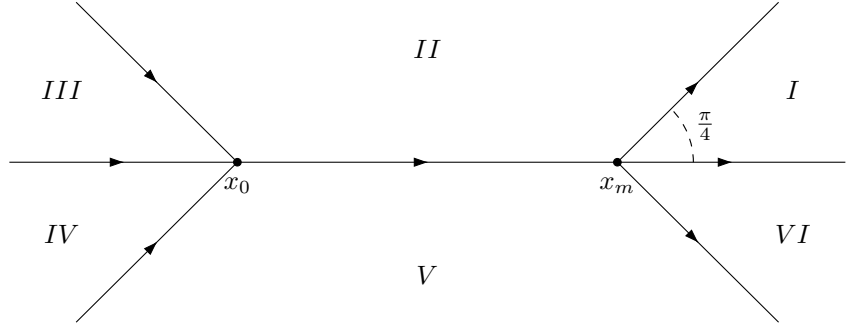
\begin{figure}
\centering
\begin{tikzpicture}
\node at (3,0) {};
\draw (0,0) -- (11,0);
\draw (3,0) -- ($(3,0)+(135:3)$);
\draw (3,0) -- ($(3,0)+(-135:3)$);
\draw (8,0) -- ($(8,0)+(45:3)$);
\draw (8,0) -- ($(8,0)+(-45:3)$);

\draw[fill] (3,0) circle (0.05);
\draw[fill] (8,0) circle (0.05);

\node at (3,-0.3) {$x_{0}$};
\node at (8,-0.3) {$x_{m}$};

\node at ($(8,0)+(22.5:2.5)$) {$I$};
\node at ($(8,0)+(-22.5:2.5)$) {$VI$};
\node at ($(5.5,0)+(90:1.5)$) {$II$};
\node at ($(5.5,0)+(-90:1.5)$) {$V$};
\node at ($(3,0)+(180-22.5:2.5)$) {$III$};
\node at ($(3,0)+(-180+22.5:2.5)$) {$IV$};

\draw[black,arrows={-Triangle[length=0.18cm,width=0.12cm]}]
($(3,0)+(135:1.5)$) --  ++(-45:0.001);
\draw[black,arrows={-Triangle[length=0.18cm,width=0.12cm]}]
($(3,0)+(-135:1.5)$) --  ++(45:0.001);
\draw[black,arrows={-Triangle[length=0.18cm,width=0.12cm]}]
(1.5,0) --  ++(0:0.001);

\draw[black,arrows={-Triangle[length=0.18cm,width=0.12cm]}]
(5.5,0) --  ++(0:0.001);

\draw[black,arrows={-Triangle[length=0.18cm,width=0.12cm]}]
($(8,0)+(45:1.5)$) --  ++(45:0.001);
\draw[black,arrows={-Triangle[length=0.18cm,width=0.12cm]}]
($(8,0)+(-45:1.5)$) --  ++(-45:0.001);
\draw[black,arrows={-Triangle[length=0.18cm,width=0.12cm]}]
(9.5,0) --  ++(0:0.001);

\draw[black,dashed,line width=0.15 mm] ([shift=(0:1cm)]8,0) arc (0:45:1cm);
\node at ($(8,0)+(22.5:1.25)$) {$\frac{\pi}{4}$};
\end{tikzpicture}
\caption{Jump contours $\Sigma_{\sin}$ for the RH problem for $\Phi_{\sin}$.}
\label{fig:contour for Phi Sine}
\end{figure}
Let $I,\ldots,VI$ be the six regions shown in Figure \ref{fig:contour for Phi Sine}. We consider the following RH problem, whose solution is denoted $\Phi_{\sin}$. 
\subsection*{RH problem for $\Phi_{\sin}$}

\begin{itemize}
\item[(a)] $\Phi_{\sin} : \mathbb{C}\setminus \Sigma_{\sin} \to \mathbb{C}^{2\times 2}$ is analytic, where 
\begin{equation}
\Sigma_{\sin} = \mathbb{R} \cup (x_{m} + e^{\pm \frac{\pi i }{4}}\mathbb{R}^{+}) \cup (x_{0} + e^{\pm \frac{3\pi i}{4}}\mathbb{R}^{+})
\end{equation}
is oriented as shown in Figure \ref{fig:contour for Phi Sine} and $\mathbb{R}^{+}:=(0,+\infty)$.
\item[(b)] The jumps are given by
\begin{align*}
& \Phi_{\sin,+}(z) = \Phi_{\sin,-}(z) \begin{pmatrix}
1 & 1 \\
0 & 1
\end{pmatrix}, & & z \in (x_{0},x_{m}), \\
& \Phi_{\sin,+}(z) = \Phi_{\sin,-}(z) \begin{pmatrix}
1 & 0 \\
1 & 1
\end{pmatrix}, & & z \in (x_{m} + e^{\pm \frac{\pi i }{4}}\mathbb{R}^{+}) \cup (x_{0} + e^{\pm \frac{3\pi i}{4}}\mathbb{R}^{+}),  \\
& \Phi_{\sin,+}(z) = \Phi_{\sin,-}(z) \begin{pmatrix}
0 & 1 \\
-1 & 0
\end{pmatrix}, & & z \in (-\infty,x_{0})\cup (x_{m},+\infty). 
\end{align*}
\item[(c)] As $z \to \infty$, we have
\begin{equation}
\Phi_{\sin}(z) = Ne^{-\frac{\pi i}{4}\sigma_{3}} \Big(I + \bigO\big(  e^{-2|\Im z|} \chi_{\substack{ \\[0.1cm] II\cup V}}(z)\big) \Big) e^{-iz \sigma_{3}} \times \left\{  \begin{array}{l l}
I, & \mbox{if } \Im z > 0, \\
\begin{pmatrix}
0 & -1 \\
1 & 0
\end{pmatrix}, & \mbox{if } \Im z < 0,
\end{array} \right.
\end{equation}
where 
\begin{equation}
\sigma_{3} = \begin{pmatrix}
1 & 0 \\ 0 & -1
\end{pmatrix}, \qquad N = \frac{1}{\sqrt{2}}\begin{pmatrix}
1 & i \\ i & 1
\end{pmatrix}, \qquad \chi_{\substack{ \\[0.1cm] II\cup V}}(z) = \left\{ \begin{array}{l l}
1, & \mbox{if } z \in II \cup V, \\
0, & \mbox{otherwise}.
\end{array} \right.
\end{equation}

As $z \to x_{0}$ and as $z \to x_{m}$, we have $\Phi_{\sin}(z) = \bigO(1)$.
\end{itemize}
The unique solution to the above RH problem is explicitly given by
\begin{equation}\label{Phi Sine explicit}
\Phi_{\sin}(z) = N e^{-\frac{\pi i }{4}\sigma_{3}} \times \left\{ \begin{array}{l l}
\begin{pmatrix}
e^{-iz} & 0 \\ 0 & e^{iz}
\end{pmatrix}, & z \in I \cup III, \\[0.3cm]
\begin{pmatrix}
e^{-iz} & 0 \\
e^{iz} & e^{iz}
\end{pmatrix}, & z \in II, \\[0.3cm]
\begin{pmatrix}
0 & -e^{-iz} \\
e^{iz} & 0
\end{pmatrix}, & z \in IV \cup VI, \\[0.3cm]
\begin{pmatrix}
e^{-iz} & -e^{-iz} \\
e^{iz} & 0
\end{pmatrix}, & z \in V.
\end{array} \right.
\end{equation}

\hspace{-0.55cm}Now, we use $\Phi_{\sin}$ to transform the RH problem for $Y$. Let us consider 
\begin{equation}\label{Phi Y Phi Sine relation}
\Phi(z) = Y(z) \Phi_{\sin}(z).
\end{equation}
Since $Y$ is analytic on $\mathbb{C} \setminus [x_{0},x_{m}]$, the jumps $J_{\Phi}:=\Phi_{-}^{-1}\Phi_{+}$ coincide with $\Phi_{\sin,-}^{-1}\Phi_{\sin,+}$ on $\Sigma_{\sin} \setminus [x_{0},x_{m}]$. On $(x_{0},x_{m})$, we have
\begin{equation}\label{lol 12}
J_{\Phi}(u) = \Phi_{\sin,-}^{-1}(u)J_{Y}(u)\Phi_{\sin,+}(u), \qquad u \in (x_{0},x_{m}).
\end{equation}
which, using the explicit expression for $\Phi_{\sin}$ given by \eqref{Phi Sine explicit}, simplifies to
\begin{equation}
J_{\Phi}(u) = \begin{pmatrix}
1 & \sum_{j=1}^{m}s_{j}\chi_{(x_{j-1},x_{j})}(u) \\ 0 & 1
\end{pmatrix}, \qquad u \in (x_{0},x_{m}).
\end{equation}
Now, it directly follows from the properties of $Y$ and $\Phi_{\sin}$ that $\Phi$ satisfies the following RH problem. For convenience, we define $s_{0} := 1$, $s_{m+1} := 1$.
\subsection*{RH problem for $\Phi$}
\begin{itemize}
\item[(a)] $\Phi : \mathbb{C}\setminus \Sigma \to \mathbb{C}^{2\times 2}$ is analytic, where $\Sigma = \Sigma_{\sin}$ is shown in Figure \ref{fig:contour for Phi Sine}.
\item[(b)] The jumps are given by
\begin{align}
& \Phi_{+}(z) = \Phi_{-}(z) \begin{pmatrix}
1 & s_{j} \\
0 & 1
\end{pmatrix}, & & z \in (x_{j-1},x_{j}), \,\, j = 1,\ldots,m, \label{Jumps Phi on xj-1 xj} \\
& \Phi_{+}(z) = \Phi_{-}(z) \begin{pmatrix}
1 & 0 \\
1 & 1
\end{pmatrix}, & & z \in (x_{m} + e^{\pm \frac{\pi i }{4}}\mathbb{R}^{+}) \cup (x_{0} + e^{\pm \frac{3\pi i}{4}}\mathbb{R}^{+}), \\
& \Phi_{+}(z) = \Phi_{-}(z) \begin{pmatrix}
0 & 1 \\
-1 & 0
\end{pmatrix}, & & z \in (-\infty,x_{0})\cup (x_{m},+\infty). \label{Jumps Phi on R setminus support}
\end{align}
\item[(c)] As $z \to \infty$, we have
\begin{equation}\label{Asymp of Phi near infinity}
\Phi(z) = \Big(I + \frac{\Phi_{1}}{z} + \bigO(z^{-2})\Big) Ne^{-\frac{\pi i}{4}\sigma_{3}} e^{-iz \sigma_{3}} \times \left\{ \begin{array}{l l}
I, & \mbox{if } \Im z > 0, \\
\begin{pmatrix}
0 & -1 \\
1 & 0
\end{pmatrix}, & \mbox{if } \Im z < 0,
\end{array} \right.
\end{equation}
for a certain traceless matrix $\Phi_{1} = \Phi_{1}(\vec{x},\vec{s})$ independent of $z$. 

As $z \to x_{j}$, $j \in \{0,1,\ldots,m\}$, we have
\begin{equation}\label{Asymp of Phi near x_j}
\Phi(z) = G_{j}(z;\vec{x},\vec{s}) \begin{pmatrix}
1 & - \frac{s_{j+1}-s_{j}}{2\pi i} \log (z-x_{j}) \\
0 & 1
\end{pmatrix}V_{j}(z) H(z), 
\end{equation}
where $G_{j}$ is analytic in a neighborhood of $x_{j}$,  satisfies $\det G_{j} \equiv 1$ and 
\begin{equation}
V_{j}(z) = \left\{ \begin{array}{l l}
I, & \Im z > 0, \\
\begin{pmatrix}
1 & -s_{j+1} \\
0 & 1
\end{pmatrix}, & \Im z < 0,
\end{array} \right. \qquad H(z) = \left\{ \begin{array}{l l}
I, & z \in II \cup V, \\
\begin{pmatrix}
1 & 0 \\
-1 & 1
\end{pmatrix}, & z \in I \cup III, \\
\begin{pmatrix}
1 & 0 \\
1 & 1 
\end{pmatrix}, & z \in IV \cup VI.
\end{array} \right.
\end{equation}
\end{itemize}
\begin{remark}
The solution to the RH problem for $\Phi$ is unique; this follows from standard arguments and the fact that the jumps for $\Phi$ have determinant $1$ (see e.g. \cite[Theorem 7.18]{Deift}). Proving existence of a given RH problem is in general a more difficult task than proving uniqueness, but in our case this directly follows from the fact that $(1-\mathcal{K}_{\vec{x},\vec{s}})^{-1}$ exists (see \eqref{observation}), and from the explicit formulas \eqref{Phi Y Phi Sine relation} and \eqref{def of Y}. 
\end{remark}
After substituting \eqref{def of f and g}, \eqref{Phi Sine explicit} and \eqref{Phi Y Phi Sine relation} in \eqref{R in terms of Y}, we obtain 
\begin{equation}\label{R in terms of Phi}
R_{\vec{x},\vec{s}}(u,u) = \frac{1-s_{k}}{2\pi i} [\Phi^{-1}(u;\vec{x},\vec{s})\partial_{u}(\Phi(u;\vec{x},\vec{s}))]_{21}, \qquad u \in (x_{k-1},x_{k}), \quad k = 1,\ldots,m.
\end{equation}
\begin{remark}
In \eqref{R in terms of Phi}, we do not indicate whether we use the $+$ or $-$ boundary values of $\Phi$, but this is without ambiguity. Indeed, from the jumps for $\Phi$ on $(x_{0},x_{m})$ given by \eqref{Jumps Phi on xj-1 xj}, we easily verify that
\begin{equation}
[\Phi_{+}^{-1}(u;\vec{x},\vec{s}) \partial_{u} \Phi_{+}(u;\vec{x},\vec{s})]_{21} = [\Phi_{-}^{-1}(u;\vec{x},\vec{s}) \partial_{u} \Phi_{-}(u;\vec{x},\vec{s})]_{21}.
\end{equation}
\end{remark}

\section{Differential identity}\label{Section: diff id}
By standard properties of trace class operators, for any $k \in \{1,\ldots,m\}$, we have
\begin{equation}\label{lol1}
\begin{array}{r c l}
\ds \partial_{s_{k}} \log F(\vec{x},\vec{s}) & = & \ds \partial_{s_{k}} \log \det(1-\mathcal{K}_{\vec{x},\vec{s}}) =  - \mbox{Tr}\Big( (1-\mathcal{K}_{\vec{x},\vec{s}})^{-1}\partial_{s_{k}} \mathcal{K}_{\vec{x},\vec{s}} \Big) \\
& = & \ds \frac{1}{1-s_{k}} \mbox{Tr}\Big( (1-\mathcal{K}_{\vec{x},\vec{s}})^{-1}\mathcal{K}_{\vec{x},\vec{s}}|_{(x_{k-1},x_{k})} \Big) \\
& = & \ds \frac{1}{1-s_{k}} \mbox{Tr} \Big( \mathcal{R}_{\vec{x},\vec{s}}|_{(x_{k-1},x_{k})} \Big) = \frac{1}{1-s_{k}}\int_{x_{k-1}}^{x_{k}}R_{\vec{x},\vec{s}}(u,u)du,
\end{array}
\end{equation}
where we recall that $\mathcal{R}_{\vec{x},\vec{s}}$ is defined by \eqref{def resolvent}. Substituting the expression \eqref{R in terms of Phi} for $R_{\vec{x},\vec{s}}$ in \eqref{lol1}, we obtain the following differential identity
\begin{equation}\label{diff identity integral form}
\partial_{s_{k}} \log F(\vec{x},\vec{s}) = \frac{1}{2\pi i} \int_{x_{k-1}}^{x_{k}}[\Phi^{-1}(u;\vec{x},\vec{s})\partial_{u}\Phi(u;\vec{x},\vec{s})]_{21}du.
\end{equation}
We implicitly assumed $s_{k} \neq 1$ in \eqref{lol1}. However, recall that $F(\vec{x},\vec{s})$ is an entire function of $s_{k}$ (see \cite[Theorem 2]{Soshnikov}) and that $\left. \det (1-\mathcal{K}_{\vec{x},\vec{s}}) \right|_{s_{k} = 1} > 0$ (see \eqref{observation}). Therefore, the left-hand side of \eqref{diff identity integral form} is well-defined at $s_{k} = 1$, and \eqref{diff identity integral form} also holds for $s_{k} = 1$ by continuity.

\vspace{0.2cm}\hspace{-0.55cm}After replacing $\vec{x}$ by $r \vec{x}$ in \eqref{diff identity integral form}, we obtain
\begin{equation}\label{diff identity integral form with r}
\partial_{s_{k}} \log F(r\vec{x},\vec{s}) = \frac{1}{2\pi i} \int_{x_{k-1}}^{x_{k}}[\Phi^{-1}(ru;r\vec{x},\vec{s})\partial_{u}\Phi(ru;r\vec{x},\vec{s})]_{21}du.
\end{equation}
Our goal for the rest of this section is to simplify the integral on the right-hand side of \eqref{diff identity integral form with r}. For this, we will study a Lax pair associated to $\Phi$. We first focus on some properties of $\partial_{z} \Phi(rz;r\vec{x},\vec{s})$. Since the jumps for $\Phi$ are independent of $z$, we have
\begin{equation}\label{A matrix}
\partial_{z} \Phi(rz;r\vec{x},\vec{s}) = A(z) \Phi(rz;r\vec{x},\vec{s}),
\end{equation}
where $z \mapsto A(z)$ is analytic for $z \in \mathbb{C}\setminus \{x_{0},\ldots,x_{m}\}$. $A$ also depends on $r$, $\vec{x}$ and $\vec{s}$, even though this is not indicated in the notation. Furthermore, since $\det \Phi \equiv 1$, $A$ is traceless. From the asymptotics for $\Phi$ at $x_{0},\ldots,x_{m}$ and at $\infty$, we conclude that $A$ takes the form
\begin{equation}\label{A explicit recall}
A(z) = \begin{pmatrix}
0 & -r \\ r & 0
\end{pmatrix} + \sum_{j=0}^{m} \frac{A_{j}}{z-x_{j}},
\end{equation}
where the matrices $A_{j}=A_{j}(r,\vec{x},\vec{s})$ are traceless and given by
\begin{align}
A_{j} & = -\frac{s_{j+1}-s_{j}}{2\pi i}(G_{j} \sigma_{+}G_{j}^{-1})(rx_{j};r\vec{x},\vec{s}) \nonumber  \\
& = -\frac{s_{j+1}-s_{j}}{2\pi i} \begin{pmatrix}
-G_{j,11}G_{j,21} & G_{j,11}^{2} \\ -G_{j,21}^{2} & G_{j,11}G_{j,21}
\end{pmatrix}, \qquad \mbox{where } \sigma_{+} = \begin{pmatrix}
0 & 1 \\ 0 & 0
\end{pmatrix}. \label{A_j}
\end{align}
The integrand on the right-hand side of \eqref{diff identity integral form with r} can now be rewritten using \eqref{A matrix}. Since $A$ is traceless and $\det \Phi \equiv 1$, we obtain
\begin{multline}
[\Phi^{-1}(rz;r\vec{x},\vec{s})\partial_{z} \big( \Phi(rz;r\vec{x},\vec{s}) \big)]_{21} = [\Phi^{-1}(rz;r\vec{x},\vec{s})A(z)\Phi(rz;r\vec{x},\vec{s})]_{21} \\ = \Phi_{11}^{2} A_{21} - \Phi_{21}^{2}A_{12}-2\Phi_{11}\Phi_{21}A_{11}.
\end{multline}
By substituting \eqref{A explicit recall} in the above equation, we infer that
\begin{multline}\label{explicit integrant}
[\Phi^{-1}(rz;r\vec{x},\vec{s})\partial_{z} \big( \Phi(rz;r\vec{x},\vec{s}) \big)]_{21} = (\Phi\sigma_{+}\Phi^{-1})_{12}(rz;r\vec{x},\vec{s}) \Big[ r + \sum_{j=0}^{m} \frac{A_{j,21}}{z-x_{j}} \Big] \\
+ (\Phi\sigma_{+}\Phi^{-1})_{21}(rz;r\vec{x},\vec{s})\Big[ -r+\sum_{j=0}^{m} \frac{A_{j,12}}{z-x_{j}}  \Big] + 2 (\Phi\sigma_{+}\Phi^{-1})_{11}(rz;r\vec{x},\vec{s})\sum_{j=0}^{m} \frac{A_{j,11}}{z-x_{j}}.
\end{multline}
Let us define 
\begin{equation}
B(z) = \partial_{s_{k}}\Phi(rz;r\vec{x},\vec{s}) \Phi(rz;r\vec{x},\vec{s})^{-1}.
\end{equation}
From the RH problem for $\Phi$, we deduce that $B$ satisfies the following RH problem.
\subsubsection*{RH problem for $B$} 
\begin{itemize}
\item[(a)] $B:\mathbb{C}\setminus [x_{k-1},x_{k}] \to \mathbb{C}^{2\times 2}$ is analytic.
\item[(b)] $B$ satisfies the jumps
\begin{equation}\label{boundary values in B why}
B_{+}(z) = B_{-}(z) + (\Phi_{-}\sigma_{+}\Phi_{-}^{-1})(rz;r\vec{x},\vec{s}), \qquad z \in (x_{k-1},x_{k}).
\end{equation}
\item[(c)] $B$ satisfies the following asymptotic behaviors
\begin{align}
& \hspace{-1cm} B(z) = \frac{\partial_{s_{k}}\Phi_{1}(r\vec{x},\vec{s})}{rz} + \bigO(z^{-2}), & & \mbox{as } z \to \infty, \label{F at inf} \\
& \hspace{-1cm} B(z) = \frac{\partial_{s_{k}}(s_{j+1}-s_{j})}{s_{j+1}-s_{j}}A_{j} \log(r(z-x_{j})) + B_{j}+o(1), & & \mbox{as } z \to x_{j}, \, j = 0,\ldots,m, \label{B at xj lol}
\end{align}
where $B_{j} = (\partial_{s_{k}}G_{j}G_{j}^{-1})(rx_{j};r\vec{x},\vec{s})$.
\end{itemize}
Using the jumps \eqref{Jumps Phi on xj-1 xj} for $\Phi$ on $(x_{k-1},x_{k})$, we note that
\begin{equation}
(\Phi_{-}\sigma_{+}\Phi_{-}^{-1})(rz;r\vec{x},\vec{s})=(\Phi_{+}\sigma_{+}\Phi_{+}^{-1})(rz;r\vec{x},\vec{s}), \qquad z \in (x_{k-1},x_{k}),
\end{equation}
which implies that $z \mapsto (\Phi\sigma_{+}\Phi^{-1})(rz;r\vec{x},\vec{s})$ is analytic for $z \in (x_{k-1},x_{k})$. In particular, we can replace $\Phi_{-}\sigma_{+}\Phi_{-}^{-1}$ in \eqref{boundary values in B why} by $\Phi\sigma_{+}\Phi^{-1}$ without ambiguity. By \eqref{boundary values in B why} and Cauchy's formula, we have
\begin{equation}\label{integral representation of B}
B(z) = \frac{1}{2\pi i} \int_{x_{k-1}}^{x_{k}} \frac{(\Phi \sigma_{+} \Phi^{-1})(ru;r\vec{x},\vec{s})}{u-z}du.
\end{equation}
Expanding \eqref{integral representation of B} as $z \to \infty$ and then comparing with \eqref{F at inf}, we obtain
\begin{align}
& -\frac{1}{2\pi i} \int_{x_{k-1}}^{x_{k}}(\Phi\sigma_{+}\Phi^{-1})(ru;r\vec{x},\vec{s})du = \frac{\partial_{s_{k}}\Phi_{1}(r\vec{x},\vec{s})}{r}. 
\end{align}
Now, we substitute \eqref{explicit integrant} in \eqref{diff identity integral form with r}, and then use the expansions of $B$ at $\infty$ and at $x_{j}$, $j=0,1,\ldots,m$ (given by \eqref{F at inf}) to simplify the integral. Since $\det A_{j} \equiv 0$ for $j = 0,\ldots,m$, we infer that the logarithmic terms in the expansions \eqref{B at xj lol} of $B(z)$ as $z \to x_{j}$ for $j = 0,\ldots,m$ do not contribute in \eqref{diff identity integral form with r}, and after a computation we obtain
\begin{multline*}
\partial_{s_{k}} \log F(r\vec{x},\vec{s})= \partial_{s_{k}} \Phi_{1,21}(r\vec{x},\vec{s})-\partial_{s_{k}} \Phi_{1,12}(r\vec{x},\vec{s})+\sum_{j=0}^{m}\Big(A_{j,21}B_{j,12}+A_{j,12}B_{j,21}+2A_{j,11}B_{j,11} \Big).
\end{multline*}
The above formula can be further simplified by using the explicit expressions for the $A_{j}$'s and $B_{j}$'s given by \eqref{A_j} and below \eqref{B at xj lol}. After some simplifications, which use $\det G_{j} \equiv 1$, we get
\begin{equation}\label{DIFF identity final form general case}
\partial_{s_{k}} \log F(r\vec{x},\vec{s})= K_{\infty} + \sum_{j=0}^{m}K_{x_{j}},
\end{equation}
where
\begin{align}
& K_{\infty} = \partial_{s_{k}} \Phi_{1,21}(r\vec{x},\vec{s})-\partial_{s_{k}} \Phi_{1,12}(r\vec{x},\vec{s}), \label{K inf} \\
& K_{x_{j}} = -\frac{s_{j+1}-s_{j}}{2\pi i} \Big( G_{j,11}\partial_{s_{k}}G_{j,21} - G_{j,21}\partial_{s_{k}} G_{j,11} \Big)(rx_{j};r\vec{x},\vec{s}). \label{K xj} 
\end{align}
\section{Riemann-Hilbert analysis for $s_{1},\ldots,s_{m} \in (0,+\infty)$}\label{Section: Steepest descent with s1>0}
In this section, we employ the Deift/Zhou steepest descent method to obtain large $r$ asymptotics for $\Phi(rz;r\vec{x},\vec{s})$ uniformly for $z$ in different regions of the complex $z$-plane.

\medskip \noindent At the level of the parameters, we assume that
\begin{itemize}
\item \vspace{-0.1cm} $s_{1},\ldots,s_{m}$ are in a compact subset of $(0,+\infty)$,
\item \vspace{-0.1cm} $x_{0},\ldots,x_{m}$ are in a compact subset of $\mathbb{R}$, 
\item \vspace{-0.1cm} there exists $\delta > 0$ independent of $r$ such that
\begin{equation}\label{assumption delta}
\min_{0 \leq j < k \leq m} x_{k}-x_{j} \geq \delta.
\end{equation}
\end{itemize}

\subsection{Normalization of the RH problem with $g$-function}\label{subsection: g function s1 neq 0}
The main purpose of the first transformation is to obtain a new RH problem whose solution $T$ remains bounded at $\infty$. Let us define
\begin{equation}\label{def of T s1 neq 0}
T(z)= \Phi(rz;r\vec{x},\vec{s})e^{-rg(z)\sigma_{3}},
\end{equation}
where the $g$-function is given by
\begin{equation}\label{def of g s1 neq 0}
g(z)=\left\{ \begin{array}{l l}
-iz, & \mbox{if }\Im z > 0, \\
iz, & \mbox{if }\Im z < 0.
\end{array} \right.
\end{equation}
One easily verifies from \eqref{Asymp of Phi near infinity}, \eqref{def of T s1 neq 0} and \eqref{def of g s1 neq 0} that $T$ remains bounded at $\infty$, as desired. More precisely, we have
\begin{equation} \label{eq:Tasympinf s1 neq 0}
T(z) = \left( I + \frac{T_{1}}{z} + \bigO\left(z^{-2}\right) \right) Ne^{-\frac{\pi i}{4}\sigma_{3}}\left\{ \begin{array}{l l}
I, & \Im z > 0, \\
\begin{pmatrix}
0 & -1 \\ 1 & 0 
\end{pmatrix}, & \Im z < 0,
\end{array}	 \right. \qquad \mbox{as } z \to \infty,
\end{equation}
where
\begin{equation}\label{def of T1 s1 neq 0}
T_{1} = \frac{\Phi_{1}(r\vec{x},\vec{s})}{r}.
\end{equation}
We can obtain the jumps for $T$ straightforwardly from those of $\Phi$ and the relation $g_{+}(z)+g_{-}(z) = 0$ for $z \in \mathbb{R}$. For $z \in (x_{j-1},x_{j})$, $j=1,\ldots,m$, we have
\begin{equation}\label{factorization of the jump}
T_{-}(z)^{-1}T_{+}(z) = \begin{pmatrix}
e^{-2rg_{+}(z)} & s_{j} \\ 0 & e^{-2rg_{-}(z)}
\end{pmatrix} = \begin{pmatrix}
1 & 0 \\
s_{j}^{-1}e^{-2rg_{-}(z)} & 1
\end{pmatrix} \begin{pmatrix}
0 & s_{j} \\ -s_{j}^{-1} & 0
\end{pmatrix} \begin{pmatrix}
1 & 0 \\ 
s_{j}^{-1}e^{-2 r g_{+}(z)} & 1
\end{pmatrix}.
\end{equation}
where we have used $s_{j} \neq 0$ to factorize the jump matrix.
\subsection{Opening of the lenses}\label{subsection: S with s1 neq 0}
For $j = 1,\ldots,m$, we let the lenses $\gamma_{j,+}$ and $\gamma_{j,-}$ be open curves starting at $x_{j-1}$, ending at $x_{j}$ and lying in the upper and lower half plane, respectively. The region inside $\gamma_{j,+} \cup (x_{j-1},x_{j})$ is denoted $\Omega_{j,+}$, and the region inside $\gamma_{j,-} \cup (x_{j-1},x_{j})$ is denoted $\Omega_{j,-}$. In particular, $\Omega_{j,+}$ and $\Omega_{j,-}$ are subsets of the upper and lower half plane, respectively. The next transformation is defined by
\begin{equation}\label{def of S s1 neq 0}
S(z) = T(z) \prod_{j=1}^{m} \left\{ \begin{array}{l l}
\begin{pmatrix}
1 & 0 \\
-s_{j}^{-1}e^{-2rg(z)} & 1
\end{pmatrix}, & \mbox{if } z \in \Omega_{j,+}, \\
\begin{pmatrix}
1 & 0 \\
s_{j}^{-1}e^{-2rg(z)} & 1
\end{pmatrix}, & \mbox{if } z \in \Omega_{j,-}, \\
I, & \mbox{if } z \in \mathbb{C}\setminus(\Omega_{j,+}\cup \Omega_{j,-}).
\end{array} \right.
\end{equation}
\begin{figure}
\centering
\begin{tikzpicture}
\node at (3,0) {};
\draw (0,0) -- (13,0);
\draw (3,0) -- ($(3,0)+(135:3)$);
\draw (3,0) -- ($(3,0)+(-135:3)$);
\draw (10,0) -- ($(10,0)+(45:3)$);
\draw (10,0) -- ($(10,0)+(-45:3)$);

\draw[fill] (3,0) circle (0.05);
\draw[fill] (5,0) circle (0.05);
\draw[fill] (7,0) circle (0.05);
\draw[fill] (10,0) circle (0.05);

\node at (3,-0.3) {$x_{0}$};
\node at (5,-0.3) {$x_{1}$};
\node at (7,-0.3) {$x_{2}$};
\node at (10,-0.3) {$x_{m}$};

\draw[black,arrows={-Triangle[length=0.18cm,width=0.12cm]}]
($(3,0)+(135:1.5)$) --  ++(-45:0.001);
\draw[black,arrows={-Triangle[length=0.18cm,width=0.12cm]}]
($(3,0)+(-135:1.5)$) --  ++(45:0.001);
\draw[black,arrows={-Triangle[length=0.18cm,width=0.12cm]}]
(1.5,0) --  ++(0:0.001);

\draw[black,arrows={-Triangle[length=0.18cm,width=0.12cm]}]
(6.05,0) --  ++(0:0.001);
\draw[black,arrows={-Triangle[length=0.18cm,width=0.12cm]}]
(4.05,0) --  ++(0:0.001);
\draw[black,arrows={-Triangle[length=0.18cm,width=0.12cm]}]
(8.6,0) --  ++(0:0.001);

\draw[black,arrows={-Triangle[length=0.18cm,width=0.12cm]}]
($(10,0)+(45:1.5)$) --  ++(45:0.001);
\draw[black,arrows={-Triangle[length=0.18cm,width=0.12cm]}]
($(10,0)+(-45:1.5)$) --  ++(-45:0.001);
\draw[black,arrows={-Triangle[length=0.18cm,width=0.12cm]}]
(11.5,0) --  ++(0:0.001);

\draw (3,0) .. controls (3.7,0.8) and (4.3,0.8) .. (5,0);
\draw (3,0) .. controls (3.7,-0.8) and (4.3,-0.8) .. (5,0);
\draw (5,0) .. controls (5.7,0.8) and (6.3,0.8) .. (7,0);
\draw (5,0) .. controls (5.7,-0.8) and (6.3,-0.8) .. (7,0);
\draw (7,0) .. controls (8,1.3) and (9,1.3) .. (10,0);
\draw (7,0) .. controls (8,-1.3) and (9,-1.3) .. (10,0);

\draw[black,arrows={-Triangle[length=0.18cm,width=0.12cm]}]
(4.05,0.59) --  ++(0:0.001);
\draw[black,arrows={-Triangle[length=0.18cm,width=0.12cm]}]
(4.05,-0.59) --  ++(0:0.001);
\draw[black,arrows={-Triangle[length=0.18cm,width=0.12cm]}]
(6.05,0.59) --  ++(0:0.001);
\draw[black,arrows={-Triangle[length=0.18cm,width=0.12cm]}]
(6.05,-0.59) --  ++(0:0.001);
\draw[black,arrows={-Triangle[length=0.18cm,width=0.12cm]}]
(8.6,0.98) --  ++(0:0.001);
\draw[black,arrows={-Triangle[length=0.18cm,width=0.12cm]}]
(8.6,-0.98) --  ++(0:0.001);
\end{tikzpicture}
\caption{Jump contours $\Sigma_{S}$ for the RH problem for $S$ with $m=3$.}
\label{fig:contour for S s1 neq 0}
\end{figure}
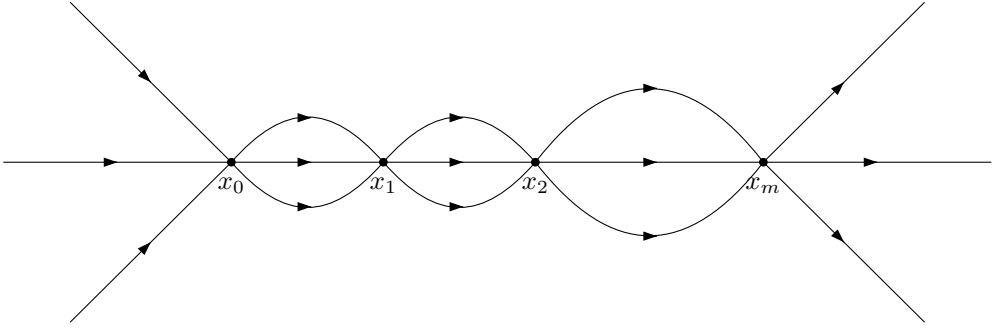
\hspace{-0.1cm}Using the factorization \eqref{factorization of the jump} and the properties of the RH problem for $\Phi$, we easily verify that $S$ satisfies the following RH problem.
\subsubsection*{RH problem for $S$}
\begin{enumerate}[label={(\alph*)}]
\item[(a)] $S : \C \backslash \Sigma_{S} \rightarrow \C^{2\times 2}$ is analytic, with
\begin{equation}\label{eq:defGamma s1 neq 0}
\Sigma_{S}=\mathbb{R}\cup \gamma_{+}\cup \gamma_{-}, \qquad \gamma_{\pm} = \bigcup_{j=0}^{m+1} \gamma_{j,\pm},
\end{equation}
where $\Sigma_{S}$ is oriented as shown in Figure \ref{fig:contour for S s1 neq 0} and
\begin{align*}
\gamma_{0,\pm} := x_{0} + e^{\pm \frac{3 \pi i}{4}}(0,+\infty), \qquad \gamma_{m+1,\pm} := x_{m} + e^{\pm \frac{\pi i}{4}}(0,+\infty).
\end{align*}
\item[(b)] The jumps for $S$ are given by
\begin{align}
& S_{+}(z) = S_{-}(z)\begin{pmatrix}
0 & s_{j} \\ -s_{j}^{-1} & 0
\end{pmatrix}, & & z \in (x_{j-1},x_{j}), \, j = 0,\ldots,m+1, \\
& S_{+}(z) = S_{-}(z)\begin{pmatrix}
1 & 0 \\
s_{j}^{-1} e^{-2 r g(z)} & 1
\end{pmatrix}, & & z \in \gamma_{j,\pm}, \, j = 0,\ldots,m+1, \label{jumps for S 2 first steepest}
\end{align}
where $x_{-1} := -\infty$ and $x_{m+1} := +\infty$ (recall that $s_{0} = s_{m+1} = 1$).
\item[(c)] As $z \rightarrow \infty$, we have
\begin{equation}
\label{eq:Sasympinf s1 neq 0}
S(z) = \left( I + \frac{T_{1}}{z} + \bigO\left(z^{-2}\right) \right) Ne^{-\frac{\pi i}{4}\sigma_{3}}\left\{ \begin{array}{l l}
I, & \Im z > 0, \\
\begin{pmatrix}
0 & -1 \\ 1 & 0 
\end{pmatrix}, & \Im z < 0.
\end{array}	 \right.
\end{equation}
As $z \to x_j$ from outside the lenses, $j = 0,\ldots, m$, we have
\begin{equation}\label{local behavior near -xj of S s1 neq 0}
S(z) = \begin{pmatrix}
\bigO(1) & \bigO(\log(z-x_{j})) \\
\bigO(1) & \bigO(\log(z-x_{j}))
\end{pmatrix}.
\end{equation}
\end{enumerate}
Since $\Re g(z) > 0$ for all $z \in \mathbb{C}\setminus \mathbb{R}$, it follows from \eqref{jumps for S 2 first steepest} that for any $\epsilon > 0$, there exists $c>0$ such that
\begin{align}\label{estimates for jumps for S exp small}
S_{-}(z)^{-1}S_{+}(z) - I = \bigO(e^{-c |z|r}), \qquad \mbox{as } r \to + \infty, 
\end{align}
uniformly for $z \in \gamma_{+}\cup \gamma_{-}$ such that $\min_{j \in \{0,\ldots,m\}}|z-x_{j}|\geq \epsilon$. The estimate \eqref{estimates for jumps for S exp small} does not hold for $z$ close to $x_{j}$ because $\Re g_{\pm}(z) = 0$ for $z \in \mathbb{R}$. More precisely, for $z \in \gamma_{+}\cup \gamma_{-}$ such that $\min_{j \in \{0,\ldots,m\}}|z-x_{j}|\leq \epsilon$, we only have 
\begin{align*}
S_{-}(z)^{-1}S_{+}(z) - I = o(1), \qquad \mbox{as } r \to + \infty.
\end{align*}

\subsection{Global parametrix}\label{subsection: Global param s1 neq 0}
Ignoring the jumps for $S$ on the lenses $\gamma_{+}\cup \gamma_{-}$, we are led to the following RH problem, whose solution is denoted $P^{(\infty)}$. We will show in Subsection \ref{subsection Small norm s1 neq 0} that $P^{(\infty)}$ is a good approximation for $S$ away from neighborhoods of $x_{j}$, $j = 0,1,\ldots,m$. 
\subsubsection*{RH problem for $P^{(\infty)}$}
\begin{enumerate}[label={(\alph*)}]
\item[(a)] $P^{(\infty)} : \C  \setminus \mathbb{R} \rightarrow \C^{2\times 2}$ is analytic.
\item[(b)] The jumps for $P^{(\infty)}$ are given by
\begin{align*}
& P^{(\infty)}_{+}(z) = P^{(\infty)}_{-}(z)\begin{pmatrix}
0 & s_{j} \\ -s_{j}^{-1} & 0
\end{pmatrix}, & & z \in (x_{j-1},x_{j}), \, j = 0,\ldots,m+1.
\end{align*}
\item[(c)] As $z \rightarrow \infty$, we have
\begin{equation}
\label{eq:Pinf asympinf s1 neq 0}
P^{(\infty)}(z) = \left( I + \frac{P^{(\infty)}_{1}}{z} + \bigO\left(z^{-2}\right) \right) Ne^{-\frac{\pi i}{4}\sigma_{3}}\left\{ \begin{array}{l l}
I, & \Im z > 0, \\
\begin{pmatrix}
0 & -1 \\ 1 & 0 
\end{pmatrix}, & \Im z < 0.
\end{array}	 \right.
\end{equation}
for a certain matrix $P_{1}^{(\infty)}$ independent of $z$.

As $z \to x_{j}$, $j \in \{0,\ldots,m\}$, we have $P^{(\infty)}(z) = \begin{pmatrix}
\bigO(1) & \bigO(1) \\
\bigO(1) & \bigO(1)
\end{pmatrix}$.
\end{enumerate}
The behavior of $P^{(\infty)}$ near $x_{j}$, $j \in \{0,\ldots,m\}$ is added to ensure uniqueness of the solution to the RH problem for $P^{(\infty)}$. The construction of $P^{(\infty)}$ relies on the following function:
\begin{align}\label{def of D as an integral}
& D(z) = \exp \left( \frac{\theta(z)}{2\pi i} \sum_{j=1}^{m} \log s_{j} \int_{x_{j-1}}^{x_{j}}\frac{du}{u-z} \right), \qquad \mbox{where} \qquad \theta(z) = \begin{cases}
+1, & \mbox{if } \Im z > 0, \\
-1, & \mbox{if } \Im z < 0.
\end{cases}
\end{align}
$D$ satisfies
\begin{align*}
& D_{+}(z)D_{-}(z) = s_{j}, & &  \mbox{for } z \in (x_{j-1},x_{j}), \, j = 0,\ldots,m+1,
\end{align*}
and has the following behavior at $\infty$:
\begin{equation}
 D(z) =  \exp \left( \theta(z) \sum_{\ell = 1}^{k} \frac{d_{\ell}}{z^{\ell}} + \bigO(z^{-k-1}) \right), \quad \mbox{ as } z \to \infty,
\end{equation}
where $k \in \mathbb{N}_{>0}$ is arbitrary and
\begin{align}\label{def of d ell in terms of sj}
& d_{\ell} = -\frac{1}{2\pi i} \sum_{j=1}^{m}  \log s_{j} \int_{x_{j-1}}^{x_{j}} u^{\ell - 1}du = -\frac{1}{2\pi i \ell} \sum_{j=1}^{m}  \log s_{j} \Big( x_{j}^{\ell}-x_{j-1}^{\ell} \Big).
\end{align}
Using the above properties of $D$, we verify that
\begin{equation}\label{def of Pinf s1 neq 0}
P^{(\infty)}(z) = Ne^{-\frac{\pi i}{4}\sigma_{3}}\left\{ \begin{array}{l l}
I, & \Im z > 0 \\
\begin{pmatrix}
0 & -1 \\ 1 & 0 
\end{pmatrix}, & \Im z < 0
\end{array}	 \right\}D(z)^{-\sigma_{3}}
\end{equation}
satisfies the RH problem for $P^{(\infty)}$ with
\begin{equation}\label{Pinf 1 12 s1 neq 0}
P_{1}^{(\infty)} = \begin{pmatrix}
0 & i d_{1} \\
-id_{1} & 0
\end{pmatrix}.
\end{equation}
In preparation for the analysis of Section \ref{Section: integration s1 >0}, we now compute the first terms in the asymptotics of $D(z)$ as $z \to x_{j}$, $j = 0,1,\ldots,m$. It is straightforward to see from \eqref{def of D as an integral} that $D$ can be rewritten as
\begin{equation}\label{lol9}
D(z) = \prod_{j=0}^{m} (z-x_{j})^{\theta(z)\beta_{j}},
\end{equation}
where $\beta_{0},\ldots,\beta_{m}$ are defined by
\begin{equation}\label{def of beta_j s1 neq 0}
\beta_{j} = \frac{1}{2\pi i}\log \frac{s_{j}}{s_{j+1}} \quad \mbox{ or equivalently } \quad e^{-2i\pi \beta_{j}} = \frac{s_{j+1}}{s_{j}}, \qquad j = 0,\ldots,m.
\end{equation}
Note that, since $s_{0} = s_{m+1} = 1$, we have
\begin{equation}
\beta_{0} + \ldots + \beta_{m} = 0.
\end{equation}
As $z \to x_{j}$, $j \in \{0,1,\ldots,m\}$, $\Im z > 0$, we infer from \eqref{lol9} that
\begin{equation}\label{lol6}
D(z) = \sqrt{s_{j+1}} \, (z-x_{j})^{\beta_{j}} \prod_{\substack{k=0 \\ k \neq j}}^{m} |x_{j}-x_{k}|^{\beta_{k}}  (1+\bigO(z-x_{j})).
\end{equation}
Finally, we note that the constants $d_{\ell}$ defined in \eqref{def of d ell in terms of sj} can be rewritten in terms of the $\beta_{j}$'s as follows
\begin{equation}\label{d_ell in terms of beta_j s1 neq 0}
d_{\ell} = -\frac{1}{\ell}\sum_{j=0}^{m} \beta_{j} x_{j}^{\ell}.
\end{equation}

\subsection{Local parametrices}\label{Section: local param s1 neq 0}
Let $\mathcal{D}_{x_{j}}$ be small open disks centered at $x_{j}$, $j=0,1,\ldots,m$ whose radii are equal to $\frac{\delta}{3}$, where $\delta$ is defined in \eqref{assumption delta}. The definition of the radii ensures that the disks do not intersect each other. 

\medskip The local parametrix $P^{(x_{j})}$ is defined in $\mathcal{D}_{x_{j}}$ and satisfies an RH problem with the same jumps as $S$ inside $\mathcal{D}_{x_{j}}$. On the boundary of the disk, $P^{(x_{j})}$ ``matches" with $P^{(\infty)}$ in the sense that
\begin{equation}\label{matching weak s1 neq 0}
P^{(x_{j})}(z) = (I+o(1))P^{(\infty)}(z), \qquad \mbox{ as } r \to +\infty,
\end{equation}
uniformly for $z \in \partial \mathcal{D}_{x_{j}}$. Furthermore, we require that 
\begin{equation}\label{match at the center s1 neq 0}
S(z) P^{(x_{j})}(z)^{-1} = \bigO(1), \qquad \mbox{ as } z \to x_{j}.
\end{equation}
The construction of $P^{(x_{j})}$ is similar for all $j \in \{0,1,\ldots,m\}$ and relies on the confluent hypergeometric functions. This type of local parametrix is well-understood \cite{ItsKrasovsky,FoulquieMartinezSousa,Charlier} and involves the solution $\Phi_{\mathrm{HG}}$ to a model RH problem, which we recall in Subsection \ref{subsection: model RHP with HG functions}. The function
\begin{equation}\label{def conformal map s1 neq 0}
f_{x_{j}}(z) = -2 \left\{ \begin{array}{l l}
g(z)-g_{+}(x_{j}), & \mbox{if } \Im z > 0 \\
-(g(z)-g_{-}(x_{j})), & \mbox{if } \Im z < 0
\end{array} \right. = 2i(z-x_{j})
\end{equation}
is a conformal map from $\mathcal{D}_{x_{j}}$ to a neighborhood of $0$ which maps the real line on the imaginary axis, that is, $f_{x_{j}}(\mathbb{R}\cap \mathcal{D}_{x_{j}})\subset i \mathbb{R}$. Let us deform the lenses in a small neighborhood of $x_{j}$ such that
\begin{equation}\label{deformation of the lenses local param xj s1 neq 0}
f_{x_{j}}((\gamma_{j,+}\cup \gamma_{j+1,+})\cap \mathcal{D}_{x_{j}}) \subset \Gamma_{3} \cup \Gamma_{2}, \qquad f_{x_{j}}((\gamma_{j,-}\cup \gamma_{j+1,-})\cap \mathcal{D}_{x_{j}}) \subset \Gamma_{5} \cup \Gamma_{6},
\end{equation}
where $\Gamma_{3}$, $\Gamma_{2}$, $\Gamma_{5}$ and $\Gamma_{6}$ are as shown in Figure \ref{Fig:HG}. In this way, $f_{x_{j}}$ maps the jump contour for $P^{(x_{j})}$ in a subset of the jump contour for $\Phi_{\mathrm{HG}}$. We seek for $P^{(x_{j})}$ in the form
\begin{equation}\label{lol10}
P^{(x_{j})}(z) = E_{x_{j}}(z) \Phi_{\mathrm{HG}}(rf_{x_{j}}(z);\beta_{j})(s_{j}s_{j+1})^{-\frac{\sigma_{3}}{4}}e^{-rg(z)\sigma_{3}},
\end{equation}
where we recall that $\beta_{j}$ is given by \eqref{def of beta_j s1 neq 0}. If $E_{x_{j}}$ is analytic, then it is straightforward to verify from the jumps for $\Phi_{\mathrm{HG}}$ (given by \eqref{jumps PHG3}) that $P^{(x_{j})}$ satisfies the same jumps as $S$ inside $\mathcal{D}_{x_{j}}$. From the asymptotics \eqref{Asymptotics HG} of $\Phi_{\mathrm{HG}}$, we see that in order to satisfy \eqref{matching weak s1 neq 0}, we are forced to define $E_{x_{j}}$ by
\begin{equation}\label{def of Ej s1 neq 0}
E_{x_{j}}(z) = P^{(\infty)}(z) (s_{j} s_{j+1})^{\frac{\sigma_{3}}{4}} \left\{ \begin{array}{l l}
\ds \sqrt{\frac{s_{j+1}}{s_{j}}}^{\sigma_{3}}, & \Im z > 0 \\
\begin{pmatrix}
0 & 1 \\ -1 & 0
\end{pmatrix}, & \Im z < 0
\end{array} \right\} e^{rg_{+}(x_{j})\sigma_{3}}(rf_{x_{j}}(z))^{\beta_{j}\sigma_{3}}.
\end{equation}
Using the jumps for $P^{(\infty)}$, we verify that $E_{x_{j}}$ has no jump inside $\mathcal{D}_{x_{j}}$. Also, since $P^{(\infty)}(z)$ remains bounded as $z \to x_{j}$, and since $\beta_{j} \in i \mathbb{R}$, it is directly seen from \eqref{def of Ej s1 neq 0} that $E_{x_{j}}(z)$ remains bounded as $z \to x_{j}$. We conclude that $E_{x_{j}}$ is analytic in the whole disk $\mathcal{D}_{x_{j}}$, as desired. Also, since the jumps of $P^{(x_{j})}$ coincide with those of $S$ on $(\mathbb{R}\cup \gamma_{+}\cup \gamma_{-})\cap \mathcal{D}_{x_{j}}$, $S(z)P^{(x_{j})}(z)^{-1}$ is analytic in $\mathcal{D}_{x_{j}} \setminus \{x_{j}\}$. Using \eqref{lol 35} and condition (d) of the RH problem for $S$, we obtain that $S(z)P^{(x_{j})}(z)^{-1} = \bigO(\log(z-x_{j}))$, which means that $x_{j}$ is a removable singularity and in particular \eqref{match at the center s1 neq 0} holds. In Section \ref{Section: integration s1 >0}, we will need more precise information about the matching condition \eqref{matching weak s1 neq 0}. Using \eqref{Asymptotics HG}, we obtain
\begin{equation}\label{matching strong -x_j s1 neq 0}
P^{(x_{j})}(z)P^{(\infty)}(z)^{-1} = I + \frac{1}{rf_{x_{j}}(z)}E_{x_{j}}(z) \Phi_{\mathrm{HG},1}(\beta_{j})E_{x_{j}}(z)^{-1} + \bigO(r^{-2}),
\end{equation}
as $r \to + \infty$, uniformly for $z \in \partial \mathcal{D}_{x_{j}}$, where $\Phi_{\mathrm{HG},1}(\beta_{j})$ is given by \eqref{def of tau}. Also, using \eqref{def of Pinf s1 neq 0}, \eqref{lol6} and \eqref{def conformal map s1 neq 0} we obtain
\begin{equation}\label{E_j at x_j s1 neq 0}
E_{x_{j}}(x_{j}) = N\Lambda_{j}^{\sigma_{3}}, \quad \mbox{ where } \quad \Lambda_{j} = e^{-\frac{\pi i}{4}} e^{r g_{+}(x_{j})}(2r)^{\beta_{j}}  \prod_{\substack{k=0 \\ k \neq j}}^{m} |x_{j}-x_{k}|^{-\beta_{k}}.
\end{equation}

\subsection{Small norm problem}\label{subsection Small norm s1 neq 0}

\begin{figure}
\centering
\begin{tikzpicture}
\node at (3,0) {};
\draw ($(3,0)+(135:0.5)$) -- ($(3,0)+(135:3)$);
\draw ($(3,0)+(-135:0.5)$) -- ($(3,0)+(-135:3)$);
\draw ($(10,0)+(45:0.5)$) -- ($(10,0)+(45:3)$);
\draw ($(10,0)+(-45:0.5)$) -- ($(10,0)+(-45:3)$);

\draw[fill] (3,0) circle (0.05);
\draw (3,0) circle (0.5);
\draw[fill] (5,0) circle (0.05);
\draw (5,0) circle (0.5);
\draw[fill] (7,0) circle (0.05);
\draw (7,0) circle (0.5);
\draw[fill] (10,0) circle (0.05);
\draw (10,0) circle (0.5);

\node at (3,-0.3) {$x_{0}$};
\node at (5,-0.3) {$x_{1}$};
\node at (7,-0.3) {$x_{2}$};
\node at (10,-0.3) {$x_{m}$};

\draw[black,arrows={-Triangle[length=0.18cm,width=0.12cm]}]
($(3,0)+(135:1.5)$) --  ++(-45:0.001);
\draw[black,arrows={-Triangle[length=0.18cm,width=0.12cm]}]
($(3,0)+(-135:1.5)$) --  ++(45:0.001);

\draw[black,arrows={-Triangle[length=0.18cm,width=0.12cm]}]
($(10,0)+(45:1.5)$) --  ++(45:0.001);
\draw[black,arrows={-Triangle[length=0.18cm,width=0.12cm]}]
($(10,0)+(-45:1.5)$) --  ++(-45:0.001);

\draw ($(3,0)+(45:0.5)$) .. controls (3.7,0.7) and (4.3,0.7) .. ($(5,0)+(135:0.5)$);
\draw ($(3,0)+(-45:0.5)$) .. controls (3.7,-0.7) and (4.3,-0.7) .. ($(5,0)+(-135:0.5)$);
\draw ($(5,0)+(45:0.5)$) .. controls (5.7,0.7) and (6.3,0.7) .. ($(7,0)+(135:0.5)$);
\draw ($(5,0)+(-45:0.5)$) .. controls (5.7,-0.7) and (6.3,-0.7) .. ($(7,0)+(-135:0.5)$);
\draw ($(7,0)+(45:0.5)$) .. controls (8,1.1) and (9,1.1) .. ($(10,0)+(135:0.5)$);
\draw ($(7,0)+(-45:0.5)$) .. controls (8,-1.1) and (9,-1.1) .. ($(10,0)+(-135:0.5)$);

\draw[black,arrows={-Triangle[length=0.18cm,width=0.12cm]}]
(4.05,0.6) --  ++(0:0.001);
\draw[black,arrows={-Triangle[length=0.18cm,width=0.12cm]}]
(4.05,-0.6) --  ++(0:0.001);
\draw[black,arrows={-Triangle[length=0.18cm,width=0.12cm]}]
(6.05,0.6) --  ++(0:0.001);
\draw[black,arrows={-Triangle[length=0.18cm,width=0.12cm]}]
(6.05,-0.6) --  ++(0:0.001);
\draw[black,arrows={-Triangle[length=0.18cm,width=0.12cm]}]
(8.6,0.9) --  ++(0:0.001);
\draw[black,arrows={-Triangle[length=0.18cm,width=0.12cm]}]
(8.6,-0.9) --  ++(0:0.001);

\draw[black,arrows={-Triangle[length=0.18cm,width=0.12cm]}]
(3.1,0.5) --  ++(0:0.001);
\draw[black,arrows={-Triangle[length=0.18cm,width=0.12cm]}]
(5.1,0.5) --  ++(0:0.001);
\draw[black,arrows={-Triangle[length=0.18cm,width=0.12cm]}]
(7.1,0.5) --  ++(0:0.001);
\draw[black,arrows={-Triangle[length=0.18cm,width=0.12cm]}]
(10.1,0.5) --  ++(0:0.001);
\end{tikzpicture}
\caption{\label{fig:contour for R s1 neq 0}Jump contours $\Sigma_{R}$ for the RH problem for $R$ with $m=3$.}
\end{figure}
In this subsection, we show that $P^{(\infty)}$ and $P^{(x_{j})}$ approximate $S$ as $r \to + \infty$. For this, we define
\begin{equation}\label{def of R s1 neq 0}
R(z) = \left\{ \begin{array}{l l}
S(z)P^{(\infty)}(z)^{-1}, & \mbox{for } z \in \mathbb{C}\setminus \bigcup_{j=0}^{m}\mathcal{D}_{x_{j}}, \\
S(z)P^{(x_{j})}(z)^{-1}, & \mbox{for } z \in \mathcal{D}_{x_{j}}, \, j \in \{0,1,\ldots,m\}.
\end{array} \right.
\end{equation}
It follows from the analysis of Subsection \ref{Section: local param s1 neq 0} that $R$ is analytic inside the $m+1$ disks. Since the jumps of $P^{(\infty)}$ and of $S$ are the same on $(x_{j-1},x_{j})$, $j=1,\ldots,m$, we conclude that $R$ is analytic on $\mathbb{C}\setminus \Sigma_{R}$, where 
\begin{align*}
\Sigma_{R} = \bigcup_{j=0}^{m} \partial \mathcal{D}_{x_{j}} \cup \bigg( (\gamma_{+}\cup  \gamma_{-}) \setminus \bigcup_{j=0}^{m} \mathcal{D}_{x_{j}} \bigg),
\end{align*}
see Figure \ref{fig:contour for R s1 neq 0}. Also, from \eqref{estimates for jumps for S exp small} and \eqref{def of Pinf s1 neq 0}, we infer that the jumps $J_{R} := R_{-}^{-1}R_{+}$ satisfy
\begin{equation}\label{estimates jumps for R lenses}
J_{R}(z) = P^{(\infty)}(z)S_{-}(z)^{-1}S_{+}(z)P^{(\infty)}(z)^{-1} = I + \bigO(e^{-c |z|  r}), \qquad \mbox{as } r \to + \infty,
\end{equation}
uniformly for $z \in \Sigma_{R} \cap (\gamma_{+} \cup  \gamma_{-})$, for a certain $c>0$ independent of $z$ and $r$. As shown in Figure \ref{fig:contour for R s1 neq 0}, we orient the boundaries of the disks in the clockwise direction. It follows from \eqref{matching strong -x_j s1 neq 0} that
\begin{equation}\label{estimates jumps for R disks}
J_{R}(z) = P^{(x_{j})}(z)P^{(\infty)}(z)^{-1} = I + \bigO \Big(\frac{1}{r} \Big), \qquad \mbox{ as } r \to  +\infty
\end{equation}
uniformly $z \in \bigcup_{j=0}^{m} \partial \mathcal{D}_{x_{j}}$. Furthermore, from the behavior of $S(z)$ and $P^{(\infty)}(z)$ as $z \to \infty$ given by \eqref{eq:Sasympinf s1 neq 0} and \eqref{eq:Pinf asympinf s1 neq 0}, we have
\begin{equation}\label{R at inf s1 neq 0}
R(z) = I + \bigO(z^{-1}), \qquad \mbox{as } z \to \infty.
\end{equation}
By standard theory for RH problems \cite{DeiftZhou1992,DKMVZ2,DKMVZ1}, it follows that $R$ exists for sufficiently large $r$ and satisfies 
\begin{align}
& R(z) = I + \frac{R^{(1)}(z)}{r} + \bigO(r^{-2}), \qquad R^{(1)}(z) = \bigO(1), & & \mbox{ as } r \to  +\infty \label{eq: asymp R inf s1 neq 0} 
\end{align}
uniformly for $z \in \mathbb{C}\setminus \Sigma_{R}$. Note that the presence of $r^{\pm \beta_{j}}$ in the entries of $E_{x_{j}}$ (see \eqref{def of Ej s1 neq 0}) implies, by \eqref{matching strong -x_j s1 neq 0}, that the entries of $J_{R}$ contain factors of the form $r^{\pm 2\beta_{j}}$. Thus, a standard analysis of the Cauchy operator associated to $R$ (see e.g. \cite{ItsKrasovsky, ChCl3} for similar situations) shows that
\begin{equation}\label{eq: asymp der beta R inf s1 neq 0}
\partial_{\beta_{j}}R(z) = \frac{\partial_{\beta_{j}}R^{(1)}(z)}{r} + \bigO \Big( \frac{\log r}{r^{2}} \Big), \qquad \partial_{\beta_{j}}R^{(1)}(z) = \bigO(\log r), \qquad \mbox{ as } r \to  +\infty.
\end{equation}
Moreover, since the asymptotics \eqref{estimates jumps for R lenses} and \eqref{estimates jumps for R disks} are uniform in $x_{0},\ldots,x_{m}$ in compact subsets of $\mathbb{R}$ such that \eqref{assumption delta} holds, and uniform for $\beta_{1},\ldots,\beta_{m}$ in compact subsets of $i \mathbb{R}$, the asymptotics \eqref{eq: asymp R inf s1 neq 0} and \eqref{eq: asymp der beta R inf s1 neq 0} also hold uniformly in $x_{0},\ldots,x_{m},\beta_{1},\ldots,\beta_{m}$ in the same way.

\vspace{0.2cm}\hspace{-0.55cm}Now, we compute explicitly $R^{(1)}(z)$ for $z \in \mathbb{C}\setminus \bigcup_{j=0}^{m}\mathcal{D}_{x_{j}}$. For this, we first note that
\begin{equation}\label{integral rep for R loool}
R(z) = I + \frac{1}{2\pi i} \int_{\Sigma_{R}} \frac{R_{-}(s)(J_{R}(s)-I)}{s-z}ds. 
\end{equation}
The above formula directly follows from $R_{+}=R_{-}J_{R}$ and the asymptotics \eqref{R at inf s1 neq 0}. Also, we know from \eqref{matching strong -x_j s1 neq 0} that for
\begin{equation}\label{asymp for JR loool}
J_{R}(z) = I + \frac{J_{R}^{(1)}(z)}{r} + \bigO(r^{-2}), \qquad J_{R}^{(1)}(z) = \frac{1}{f_{x_{j}}(z)}E_{x_{j}}(z) \Phi_{\mathrm{HG},1}(\beta_{j})E_{x_{j}}(z)^{-1},
\end{equation}
as $r \to \infty$ uniformly for $z \in \mathcal{D}_{x_{j}}$, $j=0,\ldots,m$. By substituting \eqref{asymp for JR loool} in \eqref{integral rep for R loool}, we obtain 
\begin{equation}\label{integral form for R1 sp neq 0}
R^{(1)}(z) = \frac{1}{2\pi i}\int_{\bigcup_{j=0}^{m}\partial\mathcal{D}_{x_{j}}} \frac{J_{R}^{(1)}(s)}{s-z}ds.
\end{equation}
Note that the expression \eqref{asymp for JR loool} for $J_{R}^{(1)}$ can be analytically continued from $\partial \mathcal{D}_{x_{j}}$ to $\overline{\mathcal{D}_{x_{j}}}\setminus \{x_{j}\}$, and that $J_{R}^{(1)}$ has a simple pole at $x_{j}$. Recalling that the disks are oriented in the clockwise direction, and using \eqref{def conformal map s1 neq 0}, \eqref{matching strong -x_j s1 neq 0}-\eqref{E_j at x_j s1 neq 0} and \eqref{def of tau}, we obtain
\begin{equation}\label{expression for R^1 s1 neq 0}
R^{(1)}(z) = \sum_{j=0}^{m} \frac{1}{z-x_{j}}\mbox{Res}(J_{R}^{(1)}(s),s = x_{j}), \qquad \mbox{ for } z \in \mathbb{C}\setminus \bigcup_{j=0}^{m}\mathcal{D}_{x_{j}},
\end{equation}
where for $j \in \{0,\ldots,m\}$ we have
\begin{equation*}
\mbox{Res}\left( J_{R}^{(1)}(s),s=x_{j} \right) = \frac{\beta_{j}^{2}}{2i} N \begin{pmatrix}
-1 & \widetilde{\Lambda}_{j,1} \\ -\widetilde{\Lambda}_{j,2} & 1
\end{pmatrix} N^{-1} = \frac{\beta_{j}^{2}}{4}\begin{pmatrix}
-\widetilde{\Lambda}_{j,1}-\widetilde{\Lambda}_{j,2} & -i(\widetilde{\Lambda}_{j,1}-\widetilde{\Lambda}_{j,2}+2i) \\
-i(\widetilde{\Lambda}_{j,1}-\widetilde{\Lambda}_{j,2}-2i) & \widetilde{\Lambda}_{j,1}+\widetilde{\Lambda}_{j,2}
\end{pmatrix},
\end{equation*}
with
\begin{equation}
\widetilde{\Lambda}_{j,1} = \tau(\beta_{j})\Lambda_{j}^{2} \qquad \mbox{ and } \qquad \widetilde{\Lambda}_{j,2} = \tau(-\beta_{j})\Lambda_{j}^{-2}.
\end{equation}
\section{Proof of Theorem \ref{thm:s1 neq 0}}\label{Section: integration s1 >0}
We prove Theorem \ref{thm:s1 neq 0} in two steps. First, we use the RH analysis of Section \ref{Section: Steepest descent with s1>0} to obtain large $r$ asymptotics for the quantities $K_{\infty}$ and $K_{x_{j}}$ defined in \eqref{K inf}-\eqref{K xj}. By substituting these asymptotics in the differential identity
\begin{equation}
\partial_{s_{k}} \log F(r\vec{x},\vec{s})= K_{\infty} + \sum_{j=0}^{m}K_{x_{j}},  \qquad k \in \{1,\ldots,m\},
\end{equation}
we obtain large $r$ asymptotics for $\partial_{s_{k}} \log F(r\vec{x},\vec{s})$. Second, we integrate these asymptotics over the parameters $s_{1},\ldots,s_{m}$ to obtain large $r$ asymptotics for $\log F(r\vec{x},\vec{s})$.

\subsection[lol]{Large $r$ asymptotics for $\partial_{s_{k}} \log F(r\vec{x},\vec{s})$}

\paragraph{Asymptotics for $K_{\infty}$.} By \eqref{def of R s1 neq 0}, \eqref{eq:Sasympinf s1 neq 0} and \eqref{eq:Pinf asympinf s1 neq 0}, we have
\begin{equation}
R(z) = S(z)P^{(\infty)}(z)^{-1} = I + \frac{R_{1}}{z} + \bigO(z^{-2}), \qquad \mbox{ as } z \to \infty,
\end{equation}
for a certain matrix $R_{1}$ satisfying $T_{1} = R_{1} + P_{1}^{(\infty)}$. Hence, by \eqref{eq: asymp R inf s1 neq 0},
\begin{align*}
T_{1} = P_{1}^{(\infty)} + \frac{R_{1}^{(1)}}{r} + \bigO(r^{-2}), \qquad \mbox{ as } r \to + \infty,
\end{align*}
where $R_{1}^{(1)}$ is the $z^{-1}$ coefficient in the large $z$ expansion of $R^{(1)}(z)$. Using \eqref{Pinf 1 12 s1 neq 0} and \eqref{expression for R^1 s1 neq 0}, we find the following large $r$ asymptotics for $T_{1}$:
\begin{equation}\label{asymp for T1 block matrix}
T_{1} = \begin{pmatrix}
0 & id_{1} \\
-id_{1} & 0
\end{pmatrix} + \sum_{j=0}^{m} \frac{\beta_{j}^{2}}{4r}\begin{pmatrix}
-\widetilde{\Lambda}_{j,1}-\widetilde{\Lambda}_{j,2} & -i(\widetilde{\Lambda}_{j,1}-\widetilde{\Lambda}_{j,2}+2i) \\
-i(\widetilde{\Lambda}_{j,1}-\widetilde{\Lambda}_{j,2}-2i) & \widetilde{\Lambda}_{j,1}+\widetilde{\Lambda}_{j,2}
\end{pmatrix} + \bigO(r^{-2}).
\end{equation}
By \eqref{K inf}, \eqref{def of T1 s1 neq 0}, \eqref{eq: asymp der beta R inf s1 neq 0} and \eqref{asymp for T1 block matrix}, we obtain
\begin{equation}\label{K inf asymp s1 neq 0}
K_{\infty} = r \big( \partial_{s_{k}} T_{1,21} - \partial_{s_{k}}T_{1,12} \big) = -2i \partial_{s_{k}}d_{1}r - \sum_{j=0}^{m} \partial_{s_{k}}(\beta_{j}^{2}) + \bigO \Big( \frac{\log r}{r} \Big), \qquad \mbox{ as } r \to + \infty.
\end{equation}
\paragraph{Asymptotics for $K_{x_{j}}$ with $j \in \{0,\ldots,m\}$.} For $z$ outside the lenses and inside $\mathcal{D}_{x_{j}}$, by \eqref{def of S s1 neq 0}, \eqref{def of R s1 neq 0} and \eqref{lol10}, we have
\begin{equation}\label{lol11}
T(z) = R(z)E_{x_{j}}(z)\Phi_{\mathrm{HG}}(rf_{x_{j}}(z);\beta_{j})(s_{j}s_{j+1})^{-\frac{\sigma_{3}}{4}}e^{- r g(z)\sigma_{3}},
\end{equation}
and by \eqref{def conformal map s1 neq 0} and \eqref{model RHP HG in different sector}, we also have
\begin{equation}
\Phi_{\mathrm{HG}}(rf_{x_{j}}(z);\beta_{j}) = \widehat{\Phi}_{\mathrm{HG}}(rf_{x_{j}}(z);\beta_{j}), \qquad \mbox{for } \Im z > 0.
\end{equation}
Using \eqref{def of beta_j s1 neq 0} and Euler's reflection formula for the $\Gamma$-function (see e.g. \cite[equation 5.5.3]{NIST}), we verify that
\begin{equation}\label{relation Gamma beta_j and s_j s1 neq 0}
\frac{\sin (\pi \beta_{j})}{\pi} = \frac{1}{\Gamma(\beta_{j})\Gamma(1-\beta_{j})} = -\frac{s_{j+1}-s_{j}}{2\pi i \sqrt{s_{j}s_{j+1}}}.
\end{equation}
This relation, combined with \eqref{def conformal map s1 neq 0} and \eqref{precise asymptotics of Phi HG near 0}, allows to verify that
\begin{equation}\label{lol 2 s1 neq 0}
\Phi_{\mathrm{HG}}(rf_{x_{j}}(z);\beta_{j})(s_{j}s_{j+1})^{-\frac{\sigma_{3}}{4}} = \begin{pmatrix}
\Psi_{j,11} & \Psi_{j,12} \\
\Psi_{j,21} & \Psi_{j,22}
\end{pmatrix} (I + \bigO(z-x_{j})) \begin{pmatrix}
1 & -\frac{s_{j+1}-s_{j}}{2\pi i}\log(r(z - x_{j})) \\
0 & 1
\end{pmatrix} ,
\end{equation}
as $z \to x_{j}$ from $\Im z >0$ and $z$ outside the lenses, where the principal branch is taken for the log and
\begin{align}
& \Psi_{j,11} = \frac{\Gamma(1-\beta_{j})}{(s_{j}s_{j+1})^{\frac{1}{4}}}, \qquad \Psi_{j,12} = \frac{(s_{j}s_{j+1})^{\frac{1}{4}}}{\Gamma(\beta_{j})} \left( \log 2 - \frac{i\pi}{2} + \frac{\Gamma^{\prime}(1-\beta_{j})}{\Gamma(1-\beta_{j})}+2\gamma_{\mathrm{E}} \right), \nonumber \\
& \Psi_{j,21} = \frac{\Gamma(1+\beta_{j})}{(s_{j}s_{j+1})^{\frac{1}{4}}}, \qquad \Psi_{j,22} = \frac{-(s_{j}s_{j+1})^{\frac{1}{4}}}{\Gamma(-\beta_{j})} \left( \log 2 - \frac{i\pi}{2} + \frac{\Gamma^{\prime}(-\beta_{j})}{\Gamma(-\beta_{j})} + 2\gamma_{\mathrm{E}} \right). \label{Psi j entries s1 neq 0}
\end{align}
In particular, we note that
\begin{equation}\label{Psi_j first column connection formula s1 neq 0}
\Psi_{j,11}\Psi_{j,21} = -\beta_{j} \frac{2\pi i}{s_{j+1}-s_{j}}, \qquad j =0,\ldots,m,
\end{equation}
where we have used the well-known formula $\Gamma(1+z)=z\Gamma(z)$ and \eqref{relation Gamma beta_j and s_j s1 neq 0}. We deduce from \eqref{Asymp of Phi near x_j}, \eqref{def of T s1 neq 0}, \eqref{lol11} and \eqref{lol 2 s1 neq 0} that
\begin{equation}\label{formula for Gj lol}
G_{j}(rx_{j};r\vec{x},\vec{s}) = R(x_{j})E_{x_{j}}(x_{j})\begin{pmatrix}
\Psi_{j,11} & \Psi_{j,12} \\ \Psi_{j,21} & \Psi_{j,22}
\end{pmatrix}.
\end{equation}
Also, from \eqref{E_j at x_j s1 neq 0}, we have
\begin{align}
& \partial_{s_{k}}E_{x_{j},11}(x_{j}) = E_{x_{j},11}(x_{j}) \partial_{s_{k}} \log \Lambda_{j}, \qquad  \partial_{s_{k}}E_{x_{j},12}(x_{j}) = -E_{x_{j},12}(x_{j}) \partial_{s_{k}} \log \Lambda_{j}, \nonumber \\[0.2cm]
& \partial_{s_{k}}E_{x_{j},21}(x_{j}) = E_{x_{j},21}(x_{j}) \partial_{s_{k}} \log \Lambda_{j}, \qquad  \partial_{s_{k}}E_{x_{j},22}(x_{j}) = -E_{x_{j},22}(x_{j}) \partial_{s_{k}} \log \Lambda_{j}. \nonumber 
\end{align}
Substituting \eqref{formula for Gj lol} in the formula for $K_{x_{j}}$ given by \eqref{K xj}, and using \eqref{eq: asymp R inf s1 neq 0}, \eqref{eq: asymp der beta R inf s1 neq 0}, the above relations and the fact that $\det E_{x_{j}}(x_{j})=1$, we obtain the following large $r$ asymptotics
\begin{multline}\label{K xj part 1 asymp s1 neq 0}
\sum_{j=0}^{m} K_{x_{j}} = \sum_{j=0}^{m} -\frac{s_{j+1}-s_j}{2\pi i} \Big( \Psi_{j,11} \partial_{s_{k}}\Psi_{j,21} - \Psi_{j,21}\partial_{s_{k}}\Psi_{j,11} \Big) - \sum_{j=0}^{m} 2\beta_{j} \partial_{s_{k}} \log \Lambda_{j} + \bigO \Big( \frac{\log r}{r} \Big).
\end{multline}
\paragraph{Asymptotics for $\partial_{s_k}\log F(r\vec{x},\vec{s})$.} By summing the large $r$ asymptotics of $K_{x_{j}}$, $j=0,\ldots,m$ and $K_{\infty}$ using \eqref{K inf asymp s1 neq 0} and \eqref{K xj part 1 asymp s1 neq 0}, we obtain
\begin{multline}\label{lol5 s1 neq 0}
\partial_{s_k}\log F(r\vec{x},\vec{s}) = -2i\partial_{s_{k}}d_{1} \, r - \sum_{j=0}^{m} \Big( 2\beta_{j} \partial_{s_{k}} \log \Lambda_{j} + \partial_{s_{k}} (\beta_{j}^{2}) \Big) \\ + \sum_{j=0}^{m}  -\frac{s_{j+1}-s_{j}}{2\pi i}\big( \Psi_{j,11}\partial_{s_{k}}\Psi_{j,21}-\Psi_{j,21}\partial_{s_{k}}\Psi_{j,11} \big) + \bigO\Big( \frac{\log r}{r} \Big), \qquad \mbox{as } r \to + \infty.
\end{multline}
This last sum can be simplified using the expressions \eqref{Psi j entries s1 neq 0} for $\Psi_{j,11}$ and $\Psi_{j,21}$ together with \eqref{Psi_j first column connection formula s1 neq 0}:
\begin{equation}\label{lol3 s1 neq 0}
\sum_{j=0}^{m} -\frac{s_{j+1}-s_{j}}{2\pi i}\big( \Psi_{j,11}\partial_{s_{k}}\Psi_{j,21}-\Psi_{j,21}\partial_{s_{k}}\Psi_{j,11} \big) = \sum_{j=0}^{m}\beta_{j} \partial_{s_{k}} \log \frac{\Gamma(1+\beta_{j})}{\Gamma(1-\beta_{j})}.
\end{equation}
Also, using \eqref{E_j at x_j s1 neq 0}, we have
\begin{equation}\label{lol4 s1 neq 0}
\sum_{j=0}^{m} - 2\beta_{j} \partial_{s_{k}} \log \Lambda_{j} = -2 \sum_{j=0}^{m} \beta_{j} \partial_{s_{k}}(\beta_{j}) \log (2r) -2\sum_{j=0}^{m} \beta_{j} \sum_{\substack{\ell = 0 \\ \ell \neq j}}^{m} \partial_{s_{k}}(\beta_{\ell})\log |x_{j}-x_{\ell}|^{-1}.
\end{equation}
For convenience, we will integrate with respect to $\beta_{1},\ldots,\beta_{m}$ rather than in the variables $s_{1},\ldots,s_{m}$ (recall that $\beta_{0}=-\beta_{1}-\ldots - \beta_{m}$). Let us define
\begin{equation}\label{def of F tilde}
\widetilde{F}(r \vec{x}, \vec{\beta}) = F(r \vec{x},\vec{s}),
\end{equation}
where $\vec{\beta} = (\beta_{1},\ldots,\beta_{m})$ and $\vec{s} = (s_{1},\ldots,s_{m})$ are related via \eqref{def of beta_j s1 neq 0}. Substituting \eqref{lol3 s1 neq 0} and \eqref{lol4 s1 neq 0} into \eqref{lol5 s1 neq 0}, and taking the derivative with respect to $\beta_{k}$ instead of $s_{k}$, $k \in \{1,\ldots,m\}$, we obtain
\begin{multline}\label{lol5 part 2 s1 neq 0}
\partial_{\beta_k}\log\widetilde{F}(r \vec{x}, \vec{\beta}) = -2 i\partial_{\beta_{k}}d_{1} \, r -2 \sum_{j=0}^{m} \beta_{j} \partial_{\beta_{k}}(\beta_{j}) \log (2r) -2\sum_{j=0}^{m} \beta_{j} \sum_{\substack{\ell = 0 \\ \ell \neq j}}^{m} \partial_{\beta_{k}}(\beta_{\ell})\log  |x_{j}-x_{\ell}|^{-1} \\  - \sum_{j=0}^{m} \partial_{\beta_{k}}(\beta_{j}^{2}) + \sum_{j=0}^{m} \beta_{j} \partial_{\beta_{k}} \log \frac{\Gamma(1+\beta_{j})}{\Gamma(1-\beta_{j})} + \bigO\Big(\frac{\log r}{r}\Big), \qquad \mbox{as } r \to + \infty.
\end{multline}
Using $\beta_{0}=-\beta_{1}-\ldots - \beta_{m}$ and the explicit expression \eqref{d_ell in terms of beta_j s1 neq 0} of $d_{1}$, we can simplify the different terms that appear on the right-hand side of \eqref{lol5 part 2 s1 neq 0}:
\begin{align*}
& -2i \partial_{\beta_{k}}d_{1}r = 2i(x_{k}-x_{0})r, \\
& -2 \sum_{j=0}^{m} \beta_{j} \partial_{\beta_{k}}(\beta_{j}) \log (2r) = -2 \beta_{k} \log(2r) - 2(\beta_{1}+\ldots+\beta_{m})\log(2r), \\[-0.05cm]
& -2\sum_{j=0}^{m} \beta_{j} \sum_{\substack{\ell = 0 \\ \ell \neq j}}^{m} \partial_{\beta_{k}}(\beta_{\ell})\log  |x_{j}-x_{\ell}|^{-1} = -2 \sum_{\substack{j=1 \\ j \neq k}}^{m} \beta_{j} \log \left| \frac{(x_{j}-x_{0})(x_{k}-x_{0})}{x_{j}-x_{k}}\right| - 4 \beta_{k} \log |x_{k}-x_{0}|, \\[-0.2cm]
& - \sum_{j=0}^{m} \partial_{\beta_{k}}(\beta_{j}^{2}) = -2\beta_{k} - 2 \sum_{j=1}^{m} \beta_{j}, \\
& \sum_{j=0}^{m} \beta_{j} \partial_{\beta_{k}} \log \frac{\Gamma(1+\beta_{j})}{\Gamma(1-\beta_{j})} = \beta_{k} \partial_{\beta_{k}} \log \frac{\Gamma(1+\beta_{k})}{\Gamma(1-\beta_{k})} + (\beta_{1}+\ldots+\beta_{m})\partial_{\beta_{k}}\log \frac{\Gamma(1+\beta_{1}+\ldots+\beta_{m})}{\Gamma(1-\beta_{1}-\ldots-\beta_{m})}.
\end{align*}
These identities allow to rewrite \eqref{lol5 part 2 s1 neq 0} as follows
\begin{multline}\label{DIFF IDENTITY s1 neq 0}
\partial_{\beta_k}\log \widetilde{F}(r \vec{x}, \vec{\beta}) = 2i (x_{k}-x_{0})r - 4\beta_{k} \log \big( 2r(x_{k}-x_{0}) \big) - 2 \sum_{\substack{j=1 \\ j \neq k}}^{m}\beta_{j}\log\left( \frac{2r(x_{j}-x_{0})(x_{k}-x_{0})}{|x_{j}-x_{k}|}\right) \\
\hspace{0.55cm}-2\beta_{k} - 2 \sum_{j=1}^{m} \beta_{j} + \beta_{k} \partial_{\beta_{k}} \log \frac{\Gamma(1+\beta_{k})}{\Gamma(1-\beta_{k})} + (\beta_{1}+\ldots+\beta_{m})\partial_{\beta_{k}}\log \frac{\Gamma(1+\beta_{1}+\ldots+\beta_{m})}{\Gamma(1-\beta_{1}-\ldots-\beta_{m})} + \bigO \Big( \frac{\log r}{r} \Big), \\
\mbox{as } r \to + \infty.
\end{multline}
The analysis of Subsection \ref{subsection Small norm s1 neq 0} implies that the error term in \eqref{DIFF IDENTITY s1 neq 0} is uniform in $x_{0},\ldots,x_{m}$ in compact subsets of $\mathbb{R}$ such that \eqref{assumption delta} holds, and uniform in $\beta_{1},\ldots,\beta_{m}$ in compact subsets of $i \mathbb{R}$. 
\subsection{Integration of the differential identity}
We first find an explicit formula for
\begin{equation}\label{def of I ell}
I_{\ell}(\beta_{\ell};\beta_{1},\ldots,\beta_{\ell-1}) = \int_{0}^{\beta_{\ell}}(\beta_{1}+\ldots+\beta_{\ell-1}+x)\partial_{x} \log \frac{\Gamma(1+\beta_{1}+\ldots+\beta_{\ell-1}+x)}{\Gamma(1-\beta_{1}-\ldots-\beta_{\ell-1}-x)}dx,
\end{equation}
with $\ell \in \{1,\ldots,m\}$. Integrating \eqref{def of I ell} by parts, we obtain
\begin{multline}\label{first expr for I ell}
I_{\ell}(\beta_{\ell};\beta_{1},...,\beta_{\ell-1}) = (\beta_{1}+...+\beta_{\ell})\log \frac{\Gamma(1+\beta_{1}+...+\beta_{\ell})}{\Gamma(1-\beta_{1}-...-\beta_{\ell})} - (\beta_{1}+...+\beta_{\ell-1})\log \frac{\Gamma(1+\beta_{1}+...+\beta_{\ell-1})}{\Gamma(1-\beta_{1}-...-\beta_{\ell-1})} \\
- \int_{0}^{\beta_{\ell}} \log \Gamma(1+\beta_{1}+\ldots+\beta_{\ell-1}+x)dx + \int_{0}^{\beta_{\ell}} \log \Gamma(1-\beta_{1}-\ldots-\beta_{\ell-1}-x)dx.
\end{multline}
We recall the following integral relation for the $\Gamma$ function (see e.g. \cite[formula 5.17.4]{NIST}):
\begin{equation}\label{integral of Gamma function}
\int_{0}^{z}\log \Gamma (1+x) dx = \frac{z}{2} \log 2\pi - \frac{z(z+1)}{2} + z \log \Gamma(z+1) - \log G(z+1),
\end{equation}
where $G$ is Barnes' $G$-function. Using twice \eqref{integral of Gamma function} with suitable changes of variables, we obtain
\begin{multline*}
\int_{0}^{\beta_{\ell}} \log \frac{\Gamma(1-\beta_{1}-...-\beta_{\ell-1}-x)}{\Gamma(1+\beta_{1}+...+\beta_{\ell-1}+x)}dx = \beta_{\ell}^{2} + 2 \beta_{\ell}(\beta_{1}+...+\beta_{\ell-1}) + (\beta_{1}+...+\beta_{\ell})\log \frac{\Gamma(1-\beta_{1}-...-\beta_{\ell})}{\Gamma(1+\beta_{1}+...+\beta_{\ell})} \\ 
- (\beta_{1}+...+\beta_{\ell-1})\log \frac{\Gamma(1-\beta_{1}-...-\beta_{\ell-1})}{\Gamma(1+\beta_{1}+...+\beta_{\ell-1})} + \log \frac{G(1+\beta_{1}+...+\beta_{\ell})G(1-\beta_{1}-...-\beta_{\ell})}{G(1+\beta_{1}+...+\beta_{\ell-1})G(1-\beta_{1}-...-\beta_{\ell-1})}.
\end{multline*}
Substituting this identity in \eqref{first expr for I ell} and simplifying, we arrive at
\begin{multline}\label{final expression for I ell}
I_{\ell}(\beta_{\ell};\beta_{1},...,\beta_{\ell-1}) = \beta_{\ell}^{2} + 2 \beta_{\ell}(\beta_{1}+...+\beta_{\ell-1}) + \log \frac{G(1+\beta_{1}+...+\beta_{\ell})G(1-\beta_{1}-...-\beta_{\ell})}{G(1+\beta_{1}+...+\beta_{\ell-1})G(1-\beta_{1}-...-\beta_{\ell-1})}.
\end{multline}
Now, we will use the identity \eqref{DIFF IDENTITY s1 neq 0} for $k=1,\ldots,m$. We start with $k = 1$ and $\beta_{2} = 0 = \beta_{3} = \ldots = \beta_{m}$. With the notation $\vec{\beta}_{1}=(\beta_{1},0,\ldots,0)$, \eqref{DIFF IDENTITY s1 neq 0} becomes
\begin{multline*}
\partial_{\beta_1}\log \widetilde{F}(r \vec{x}, \vec{\beta}_{1}) = 2i (x_{1}-x_{0})r - 4 \beta_{1} \log \big( 2r |x_{1}-x_{0}|\big) -4\beta_{1} + 2\beta_{1} \partial_{\beta_{1}} \log \frac{\Gamma(1+\beta_{1})}{\Gamma(1-\beta_{1})} + \bigO \Big( \frac{\log r}{r} \Big),
\end{multline*}
as $r \to + \infty$. Since the above asymptotics are uniform for $\beta_{1}$ in compact subsets of $i \mathbb{R}$, we can integrate over $\beta_{1}$ from $\beta_{1} = 0$ to an arbitrary $\beta_{1} \in i \mathbb{R}$ without changing the order of the error term. Using formula \eqref{final expression for I ell} with $\ell = 1$, we obtain
\begin{equation}
\log \frac{\widetilde{F}(r\vec{x},\vec{\beta}_{1})}{\widetilde{F}(r\vec{x},\vec{0})} = 2i\beta_{1}(x_{1}-x_{0})r - 2\beta_{1}^{2} \log \big(2r (x_{1}-x_{0})\big) + 2 \log \big(G(1+\beta_{1})G(1-\beta_{1})\big) + \bigO \Big( \frac{\log r}{r} \Big)
\end{equation}
as $r \to + \infty$, where $\vec{0}=(0,\ldots,0)$. This result matches with the known asymptotics \eqref{known result m=1 s1 neq 0}, with a slightly worse error term. Now, we use \eqref{DIFF IDENTITY s1 neq 0} with $k = 2$,  $\beta_{3}=\ldots=\beta_{m} = 0$, and with $\beta_{1} \in i \mathbb{R}$ fixed. With the notation $\vec{\beta}_{2} = (\beta_{1},\beta_{2},0,\ldots,0)$, we first rewrite \eqref{DIFF IDENTITY s1 neq 0} as follows
\begin{multline}\label{DIFF IDENTITY s1 neq 0 k = 2}
\partial_{\beta_2}\log \widetilde{F}(r \vec{x}, \vec{\beta}_{2}) = 2i (x_{2}-x_{0})r - 4\beta_{2} \log \big( 2r(x_{2}-x_{0}) \big) - 2 \beta_{1}\log\left( \frac{2r(x_{1}-x_{0})(x_{2}-x_{0})}{x_{2}-x_{1}}\right) \\
\hspace{0.55cm}-2\beta_{2} - 2 (\beta_{1}+\beta_{2}) + \beta_{2} \partial_{\beta_{2}} \log \frac{\Gamma(1+\beta_{2})}{\Gamma(1-\beta_{2})} + (\beta_{1}+\beta_{2})\partial_{\beta_{2}}\log \frac{\Gamma(1+\beta_{1}+\beta_{2})}{\Gamma(1-\beta_{1}-\beta_{2})} + \bigO \Big( \frac{\log r}{r} \Big).
\end{multline}
Again, since the above asymptotics are uniform in $\beta_{2}$ in compact subsets of $i \mathbb{R}$, they can be integrated over $\beta_{2}$ from $\beta_{2} = 0$ to an arbitrary $\beta_{2} \in i \mathbb{R}$ without changing the order of the error term. Using twice formula \eqref{final expression for I ell} with $\ell = 2$ (once for $I_{2}(\beta_{2};0)$ and once for $I_{2}(\beta_{2};\beta_{1})$), we obtain
\begin{multline}
\log \frac{\widetilde{F}(r\vec{x},\vec{\beta}_{2})}{\widetilde{F}(r\vec{x},\vec{\beta}_{1})} = 2i  \beta_{2}(x_{2}-x_{0})r - 2 \beta_{2}^{2} \log \big( 2r(x_{2}-x_{0}) \big) - 2 \beta_{1}\beta_{2}\log\left( \frac{2r(x_{1}-x_{0})(x_{2}-x_{0})}{x_{2}-x_{1}}\right) \\ 
+ \log \big( G(1+\beta_{2})G(1-\beta_{2}) \big) + \log \frac{G(1+\beta_{1}+\beta_{2})G(1-\beta_{1}-\beta_{2})}{G(1+\beta_{1})G(1-\beta_{1})} + \bigO \Big( \frac{\log r}{r} \Big).
\end{multline} 
We proceed similarly to integrate over the variables $\beta_{3},\ldots,\beta_{m}$. At the last step, we use \eqref{DIFF IDENTITY s1 neq 0} with $k = m$ and $\beta_{1},\ldots,\beta_{m-1}$ arbitrary. The integration of \eqref{DIFF IDENTITY s1 neq 0} over $\beta_{m}$ gives
\begin{multline}
\log \frac{\widetilde{F}(r\vec{x},\vec{\beta})}{\widetilde{F}(r\vec{x},\vec{\beta}_{m-1})} = 2i  \beta_{m}(x_{m}-x_{0})r - 2 \beta_{m}^{2} \log \big( 2r(x_{m}-x_{0}) \big) - 2 \sum_{j=1}^{m-1} \beta_{j}\beta_{m}\log\left( \frac{2r(x_{j}-x_{0})(x_{m}-x_{0})}{x_{m}-x_{j}}\right) \\ 
+ \log \big( G(1+\beta_{m})G(1-\beta_{m}) \big) + \log \frac{G(1+\beta_{1}+...+\beta_{m})G(1-\beta_{1}-...-\beta_{m})}{G(1+\beta_{1}+...+\beta_{m-1})G(1-\beta_{1}-...-\beta_{m-1})}+ \bigO \Big( \frac{\log r}{r} \Big),
\end{multline}
as $r \to + \infty$, where we have used the notation $\vec{\beta}_{m-1}=(\beta_{1},\ldots,\beta_{m-1},0)$. By summing the asymptotics obtained after each integration, we obtain
\begin{multline}
\log \frac{\widetilde{F}(r\vec{x},\vec{\beta})}{\widetilde{F}(r\vec{x},\vec{0})} = 2i \sum_{j=1}^{m} \beta_{j}(x_{j}-x_{0})r - \sum_{j=1}^{m} 2 \beta_{j}^{2} \log \big( 2r(x_{j}-x_{0}) \big) \\ - 2 \sum_{1 \leq j < k \leq m} \beta_{j}\beta_{k}\log\left( \frac{2r(x_{j}-x_{0})(x_{k}-x_{0})}{x_{k}-x_{j}}\right) + \sum_{j=1}^{m} \log \big( G(1+\beta_{j})G(1-\beta_{j}) \big) \\ + \log \big(G(1+\beta_{1}+...+\beta_{m})G(1-\beta_{1}-...-\beta_{m})\big) + \bigO \Big( \frac{\log r}{r} \Big),
\end{multline}
as $r \to + \infty$. Note from \eqref{def beta thm s1 neq 0} and \eqref{def of beta_j s1 neq 0} that $u_{j} = 2\pi i \beta_{j}$. Since $\widetilde{F}(r\vec{x},\vec{0}) = F(r\vec{x},\vec{1}) = 1$ (by \eqref{def of F tilde} and \eqref{F Fredholm}), this finishes the proof of Theorem \ref{thm:s1 neq 0}.
\section{Riemann-Hilbert analysis for $s_{p} = 0$}\label{Section: Steepest descent with sp=0}
In this section, $p \in \{1,\ldots,m\}$ is fixed and we obtain large $r$ asymptotics for $\Phi(rz;r\vec{x},\vec{s})$, in the case where the parameters are such that
\begin{itemize}
\item \vspace{-0.1cm} $s_{p}=0$ and $s_{1},\ldots,s_{p-1},s_{p+1},\ldots,s_{m}$ are in a compact subset of $(0,+\infty)$,
\item \vspace{-0.1cm} $x_{0},\ldots,x_{m}$ are in a compact subset of $\mathbb{R}$,
\item \vspace{-0.1cm} there exists $\delta > 0$ independent of $r$ such that
\begin{equation}\label{assumption xj delta sp=0}
\min_{0 \leq j < k \leq m} x_{k}-x_{j} \geq \delta.
\end{equation}
\end{itemize}
\subsection{Normalization of the RH problem with $g$-function}\label{subsection: g function sp=0}
Let us define
\begin{equation}\label{def of g sp=0}
g(z)=-i \theta(z)\sqrt{z-x_{p-1}}\sqrt{z-x_{p}},
\end{equation}
where the principal branches are taken for the square roots, and $\theta$ is defined as in \eqref{def of D as an integral}. The $g$-function satisfies the jumps
\begin{equation}\label{jumps for g sp=0}
g_{+}(z)+g_{-}(z) = 0 \qquad  \mbox{for} \qquad z \in (-\infty,x_{p-1})\cup (x_{p},+\infty),
\end{equation}
with an asymptotic behavior at $\infty$ given by
\begin{equation}\label{g at inf sp=0}
g(z) \sim \left\{ \begin{array}{l l}
-iz, & \Im z > 0, \\
iz, & \Im z < 0,
\end{array} \right. \qquad \mbox{ as } z \to \infty.
\end{equation}
We define the first transformation by
\begin{equation}\label{def of T sp=0}
T(z)= \begin{pmatrix}
\cos \Big( \frac{r}{2}(x_{p-1}+x_{p}) \Big) & \sin \Big( \frac{r}{2}(x_{p-1}+x_{p}) \Big) \\
- \sin \Big( \frac{r}{2}(x_{p-1}+x_{p}) \Big) & \cos \Big( \frac{r}{2}(x_{p-1}+x_{p}) \Big)
\end{pmatrix} \Phi(rz;r\vec{x},\vec{s})e^{-rg(z)\sigma_{3}}.
\end{equation}
The purpose of the constant prefactor matrix in \eqref{def of T sp=0} is to simplify the behavior of $T$ at $\infty$. After a computation using the behavior of $\Phi(rz;r\vec{x},\vec{s})$ as $z \to \infty$ given by \eqref{Asymp of Phi near infinity}, we obtain
\begin{equation} \label{eq:Tasympinf sp=0}
T(z) = \left( I + \frac{T_{1}}{z} + \bigO\left(z^{-2}\right) \right) Ne^{-\frac{\pi i}{4}\sigma_{3}}\left\{ \begin{array}{l l}
I, & \Im z > 0, \\
\begin{pmatrix}
0 & -1 \\ 1 & 0 
\end{pmatrix}, & \Im z < 0,
\end{array}	 \right. \qquad \mbox{as } z \to \infty,
\end{equation}
where
\begin{equation}\label{Kinf in terms of T1 sp=0}
\Phi_{1,21}(r\vec{x},\vec{s})-\Phi_{1,12}(r\vec{x},\vec{s}) = - \frac{r^{2}}{4}(x_{p}-x_{p-1})^{2} + r(T_{1,21}-T_{1,12}).
\end{equation}
The jumps for $T$ are obtained straightforwardly from those of $\Phi$ and from \eqref{jumps for g sp=0}. Since $s_{j} \neq 0$ for $j \neq p$, we verify that the jump matrix $T_{-}^{-1}T_{+}$ on $(x_{j-1},x_{j})$, $j \neq p$, can be factorized as in \eqref{factorization of the jump}. 
\subsection{Opening of the lenses}\label{subsection: S with sp=0}
Around each interval $(x_{j-1},x_{j})$, $j = 1,\ldots,m$, $j \neq p$, we let the lenses $\gamma_{j,+}$ and $\gamma_{j,-}$ denote open curves in the upper and lower half plane respectively, as shown in Figure \ref{fig:contour for S sp=0}. We also let $\Omega_{j,+}$ denote the region inside $\gamma_{j,+}\cup(x_{j-1},x_{j})$, and we let $\Omega_{j,-}$ denote the region inside $\gamma_{j,-}\cup(x_{j-1},x_{j})$. We define the $T \mapsto S$ transformation by
\begin{equation}\label{def of S sp=0}
S(z) = T(z) \prod_{\substack{j=1 \\ j \neq p}}^{m} \left\{ \begin{array}{l l}
\begin{pmatrix}
1 & 0 \\
-s_{j}^{-1}e^{-2rg(z)} & 1
\end{pmatrix}, & \mbox{if } z \in \Omega_{j,+}, \\
\begin{pmatrix}
1 & 0 \\
s_{j}^{-1}e^{-2rg(z)} & 1
\end{pmatrix}, & \mbox{if } z \in \Omega_{j,-}, \\
I, & \mbox{if } z \in \mathbb{C}\setminus(\Omega_{j,+}\cup \Omega_{j,-}).
\end{array} \right.
\end{equation}
\begin{figure}
\centering
\begin{tikzpicture}
\node at (3,0) {};
\draw (0,0) -- (5,0);
\draw (7,0) -- (13,0);
\draw (3,0) -- ($(3,0)+(135:3)$);
\draw (3,0) -- ($(3,0)+(-135:3)$);
\draw (10,0) -- ($(10,0)+(45:3)$);
\draw (10,0) -- ($(10,0)+(-45:3)$);

\draw[fill] (3,0) circle (0.05);
\draw[fill] (5,0) circle (0.05);
\draw[fill] (7,0) circle (0.05);
\draw[fill] (10,0) circle (0.05);

\node at (3,-0.3) {$x_{0}$};
\node at (5,-0.3) {$x_{1}$};
\node at (7,-0.3) {$x_{2}$};
\node at (10,-0.3) {$x_{m}$};

\draw[black,arrows={-Triangle[length=0.18cm,width=0.12cm]}]
($(3,0)+(135:1.5)$) --  ++(-45:0.001);
\draw[black,arrows={-Triangle[length=0.18cm,width=0.12cm]}]
($(3,0)+(-135:1.5)$) --  ++(45:0.001);
\draw[black,arrows={-Triangle[length=0.18cm,width=0.12cm]}]
(1.5,0) --  ++(0:0.001);

\draw[black,arrows={-Triangle[length=0.18cm,width=0.12cm]}]
(4.05,0) --  ++(0:0.001);
\draw[black,arrows={-Triangle[length=0.18cm,width=0.12cm]}]
(8.6,0) --  ++(0:0.001);

\draw[black,arrows={-Triangle[length=0.18cm,width=0.12cm]}]
($(10,0)+(45:1.5)$) --  ++(45:0.001);
\draw[black,arrows={-Triangle[length=0.18cm,width=0.12cm]}]
($(10,0)+(-45:1.5)$) --  ++(-45:0.001);
\draw[black,arrows={-Triangle[length=0.18cm,width=0.12cm]}]
(11.5,0) --  ++(0:0.001);

\draw (3,0) .. controls (3.7,0.8) and (4.3,0.8) .. (5,0);
\draw (3,0) .. controls (3.7,-0.8) and (4.3,-0.8) .. (5,0);
\draw (7,0) .. controls (8,1.3) and (9,1.3) .. (10,0);
\draw (7,0) .. controls (8,-1.3) and (9,-1.3) .. (10,0);

\draw[black,arrows={-Triangle[length=0.18cm,width=0.12cm]}]
(4.05,0.59) --  ++(0:0.001);
\draw[black,arrows={-Triangle[length=0.18cm,width=0.12cm]}]
(4.05,-0.59) --  ++(0:0.001);
\draw[black,arrows={-Triangle[length=0.18cm,width=0.12cm]}]
(8.6,0.98) --  ++(0:0.001);
\draw[black,arrows={-Triangle[length=0.18cm,width=0.12cm]}]
(8.6,-0.98) --  ++(0:0.001);
\end{tikzpicture}
\caption{Jump contours $\Sigma_{S}$ for the RH problem for $S$ with $m=3$ and $p = 2$.}
\label{fig:contour for S sp=0}
\end{figure}
\hspace{-0.1cm}In a similar way in Subsection \ref{subsection: S with s1 neq 0}, we verify that $S$ satisfies the following RH problem.
\subsubsection*{RH problem for $S$}
\begin{enumerate}[label={(\alph*)}]
\item[(a)] $S : \C \backslash \Sigma_{S} \rightarrow \C^{2\times 2}$ is analytic, with
\begin{equation}\label{eq:defGamma sp=0}
\Sigma_{S}=(-\infty,x_{p-1}]\cup [x_{p},+\infty) \cup \gamma_{+}\cup \gamma_{-}, \qquad \gamma_{\pm} = \bigcup_{\substack{j=0 \\ j \neq p}}^{m+1} \gamma_{j,\pm},
\end{equation}
where $\Sigma_{S}$ is oriented as shown in Figure \ref{fig:contour for S sp=0} and
\begin{align*}
\gamma_{0,\pm} := x_{0} + e^{\pm \frac{3 \pi i}{4}}(0,+\infty), \qquad \gamma_{m+1,\pm} := x_{m} + e^{\pm \frac{\pi i}{4}}(0,+\infty).
\end{align*}
\item[(b)] The jumps for $S$ are given by
\begin{align*}
& S_{+}(z) = S_{-}(z)\begin{pmatrix}
0 & s_{j} \\ -s_{j}^{-1} & 0
\end{pmatrix}, & & z \in (x_{j-1},x_{j}), \, j = 0,\ldots,p-1,p+1,\ldots,m+1, \\
& S_{+}(z) = S_{-}(z)\begin{pmatrix}
1 & 0 \\
s_{j}^{-1} e^{-2 r g(z)} & 1
\end{pmatrix}, & & z \in \gamma_{j,\pm}, \, j = 0,\ldots,p-1,p+1,\ldots,m+1,
\end{align*}
where $x_{-1} := -\infty$, $x_{m+1} := +\infty$, and we recall that $s_{0} = s_{m+1} = 1$.
\item[(c)] As $z \rightarrow \infty$, we have
\begin{equation}
\label{eq:Sasympinf sp=0}
S(z) = \left( I + \frac{T_{1}}{z} + \bigO\left(z^{-2}\right) \right) Ne^{-\frac{\pi i}{4}\sigma_{3}}\left\{ \begin{array}{l l}
I, & \Im z > 0, \\
\begin{pmatrix}
0 & -1 \\ 1 & 0 
\end{pmatrix}, & \Im z < 0.
\end{array}	 \right.
\end{equation}
As $z \to x_j$ from outside the lenses, $j = 0,\ldots, m$, we have
\begin{equation}\label{local behavior near -xj of S sp=0}
S(z) = \begin{pmatrix}
\bigO(1) & \bigO(\log(z-x_{j})) \\
\bigO(1) & \bigO(\log(z-x_{j}))
\end{pmatrix}.
\end{equation}
\end{enumerate}
\begin{lemma}\label{lemma: real part of g}
The $g$-function defined in \eqref{def of g sp=0} satisfies
\begin{equation}\label{real part g function sp=0}
\{z : \Re(g(z)) > 0\} = \mathbb{C}\setminus \big( (-\infty,x_{p-1}]\cup [x_{p},+\infty) \big).
\end{equation}
\end{lemma}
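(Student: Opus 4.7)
The plan is to deduce everything from the elementary identity
\begin{equation*}
g(z)^{2} = -(z-x_{p-1})(z-x_{p}),
\end{equation*}
which is immediate from \eqref{def of g sp=0} (since $\theta(z)^{2}=1$ and $(-i)^{2}=-1$), together with a sign argument. The right-hand side is entire, and a direct check on $(x_{p-1},x_{p})$ shows that both boundary values of $g$ there coincide with the positive square root $\sqrt{(z-x_{p-1})(x_{p}-z)}$, so $g$ extends analytically across that interval and is holomorphic and nonzero on the open set $\mathbb{C}\setminus\bigl((-\infty,x_{p-1}]\cup[x_{p},+\infty)\bigr)$, which is connected (the segment $(x_{p-1},x_{p})$ joins the upper and lower half-planes).

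The next step is to show that $\Re g$ is nowhere zero on this domain. Suppose $\Re g(z)=0$; then $g(z)\in i\mathbb{R}$, so $g(z)^{2}\le 0$, i.e.\ $(z-x_{p-1})(z-x_{p})\in\mathbb{R}_{\ge 0}$. Writing $z=x+iy$, the imaginary part of $(z-x_{p-1})(z-x_{p})$ equals $y(2x-x_{p-1}-x_{p})$, which forces either $y=0$ or $x=(x_{p-1}+x_{p})/2$. In the first case $(x-x_{p-1})(x-x_{p})\ge 0$ puts $x$ in $(-\infty,x_{p-1}]\cup[x_{p},+\infty)$, outside the domain; in the second case $(z-x_{p-1})(z-x_{p})=-\bigl((x_{p}-x_{p-1})/2\bigr)^{2}-y^{2}<0$, contradicting nonnegativity. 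Hence $\Re g$ never vanishes on the cut plane.

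Since $\Re g$ is continuous, nonvanishing, and defined on a connected set, it has a constant sign. To fix this sign, take $z=iR$ with $R\to+\infty$: by \eqref{g at inf sp=0}, $g(iR)\sim -i(iR)=R$, so $\Re g(iR)\to+\infty$. This pins down $\Re g>0$ throughout the domain, which is exactly \eqref{real part g function sp=0}.

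No step is a real obstacle: the substance of the argument is the observation that $g^{2}$ is a polynomial, so the locus $\{\Re g=0\}$ is cut out by the very simple real-algebraic condition $(z-x_{p-1})(z-x_{p})\in\mathbb{R}_{\ge 0}$, bypassing any delicate branch-cut bookkeeping, and a connectedness-plus-asymptotic-sign check does the rest.
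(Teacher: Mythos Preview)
Your proof is correct and follows essentially the same approach as the paper: both identify the zero set of $\Re g$ via the equivalence $\Re g(z)=0 \iff g(z)^{2}\le 0$, and then use connectedness together with the asymptotic sign at $z\to i\infty$ to conclude. The only cosmetic difference is how the condition $(z-x_{p-1})(z-x_{p})\ge 0$ is located: you write out real and imaginary parts explicitly, whereas the paper argues by counting roots of the quadratic $-(z-x_{p-1})(z-x_{p})=c$ for $c\le 0$; your version also spells out the analytic continuation of $g$ across $(x_{p-1},x_{p})$, which the paper uses tacitly.
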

\begin{proof}
Clearly, $\Re(g(z)) = 0$ if and only if $g(z)^{2} = -(z-x_{p-1})(z-x_{p}) \leq 0$. Since $g(z)^{2} \leq 0$ for $z \in (-\infty,x_{p-1}]\cup [x_{p},+\infty)$, this proves $(-\infty,x_{p-1}]\cup [x_{p},+\infty) \subseteq \{z : \Re(g(z)) = 0\}$. On the other hand, for each $c \in \mathbb{R}^{-}$, the equation $-(z-x_{p-1})(z-x_{p})=c$ admits exactly two solutions (counting multiplicities), and from the graph of $g(z)^{2}$ for $z \in \mathbb{R}$, it is immediate to verify that these two solutions lie on $(-\infty,x_{p-1}]\cup [x_{p},+\infty)$, which proves $(-\infty,x_{p-1}]\cup [x_{p},+\infty) \supseteq \{z : \Re(g(z)) = 0\}$. Since $\Re(g(z))$ is continuous, all what remains is to determine the sign of $\Re(g(z))$ on $\mathbb{C}\setminus \big( (-\infty,x_{p-1}]\cup [x_{p},+\infty) \big)$. From the behavior of $g(z)$ as $z \to i \infty$, see \eqref{g at inf sp=0}, we conclude that this sign is positive.
\end{proof}
We deduce from Lemma \ref{lemma: real part of g} that the jump matrices for $S$ tend to the identity matrix exponentially fast as $r \to + \infty$ on the lenses. This convergence is uniform for $z$ outside of fixed neighborhoods of $x_{j}$, $j \in \{0,1,\ldots,m\}$, but is not uniform as $r \to + \infty$ and simultaneously $z \to x_{j}$, $j \in \{0,1,\ldots,m\}$.

\subsection{Global parametrix}\label{subsection: Global param sp=0}
By ignoring the jumps for $S$ on the lenses, we are led to consider the following RH problem, whose solution is  denoted $P^{(\infty)}$. We will show in Subsection \ref{subsection Small norm sp=0} that $P^{(\infty)}$ is a good approximation for $S$ outside neighborhoods of $x_{j}$, $j = 0,1,\ldots,m$.
\subsubsection*{RH problem for $P^{(\infty)}$}
\begin{enumerate}[label={(\alph*)}]
\item[(a)] $P^{(\infty)} : \C  \setminus \big( (-\infty,x_{p-1}]\cup[x_{p},+\infty) \big) \rightarrow \C^{2\times 2}$ is analytic.
\item[(b)] The jumps for $P^{(\infty)}$ are given by
\begin{align*}
& P^{(\infty)}_{+}(z) = P^{(\infty)}_{-}(z)\begin{pmatrix}
0 & s_{j} \\ -s_{j}^{-1} & 0
\end{pmatrix}, & & z \in (x_{j-1},x_{j}), \, j = 0,\ldots,p-1,p+1,\ldots,m+1.
\end{align*}
\item[(c)] As $z \rightarrow \infty$, we have
\begin{equation}
\label{eq:Pinf asympinf sp=0}
P^{(\infty)}(z) = \left( I + \frac{P^{(\infty)}_{1}}{z} + \bigO\left(z^{-2}\right) \right) Ne^{-\frac{\pi i}{4}\sigma_{3}}\left\{ \begin{array}{l l}
I, & \Im z > 0, \\
\begin{pmatrix}
0 & -1 \\ 1 & 0 
\end{pmatrix}, & \Im z < 0.
\end{array}	 \right.
\end{equation}
for a certain matrix $P_{1}^{(\infty)}$ independent of $z$.
\item[(d)] As $z \to x_{j}$, $j \in \{0,\ldots,m\}\setminus \{p-1,p\}$, we have $P^{(\infty)}(z) = \begin{pmatrix}
\bigO(1) & \bigO(1) \\
\bigO(1) & \bigO(1)
\end{pmatrix}$.

As $z \to x_{j}$, $j \in \{p-1,p\}$, we have $P^{(\infty)}(z) = \begin{pmatrix}
\bigO((z-x_{j})^{-1/4}) & \bigO((z-x_{j})^{-1/4}) \\
\bigO((z-x_{j})^{-1/4}) & \bigO((z-x_{j})^{-1/4})
\end{pmatrix}$.
\end{enumerate}
The construction of $P^{(\infty)}$ relies on the following function $D$:
\begin{multline}
D(z) = \exp \Bigg( \theta(z)\sqrt{z-x_{p-1}}\sqrt{z-x_{p}} \bigg[ -\sum_{j=1}^{p-1} \frac{\log s_{j}}{2\pi i} \int_{x_{j-1}}^{x_{j}} \frac{1}{\sqrt{x_{p-1}-u}\sqrt{x_{p}-u}}\frac{du}{u-z} \\ + \sum_{j=p+1}^{m} \frac{\log s_{j}}{2\pi i} \int_{x_{j-1}}^{x_{j}} \frac{1}{\sqrt{u-x_{p-1}}\sqrt{u-x_{p}}} \frac{du}{u-z} \bigg] \Bigg),
\end{multline}
where the principal branches are taken for $\sqrt{z-x_{p-1}}$ and $\sqrt{z-x_{p}}$. $D$ satisfies the following jumps
\begin{align*}
& D_{+}(z)D_{-}(z) = s_{j}, & &  \mbox{for } z \in (x_{j-1},x_{j}), \, j \in \{ 0,\ldots,m+1\}\setminus \{p\}.
\end{align*}
Using primitives, one can rewrite $D$ as follows
\begin{align*}
D(z) = & \prod_{j=0}^{p-2}\Bigg( \frac{\sqrt{z-x_{p-1}}\sqrt{x_{p}-x_{j}}-\sqrt{z-x_{p}}\sqrt{x_{p-1}-x_{j}}}{\sqrt{z-x_{p-1}}\sqrt{x_{p}-x_{j}}+\sqrt{z-x_{p}}\sqrt{x_{p-1}-x_{j}}} \Bigg)^{\beta_{j}\theta(z)} \\
& \times \prod_{j=p+1}^{m}\Bigg( \frac{\sqrt{z-x_{p}}\sqrt{x_{j}-x_{p-1}}-\sqrt{z-x_{p-1}}\sqrt{x_{j}-x_{p}}}{\sqrt{z-x_{p}}\sqrt{x_{j}-x_{p-1}}+\sqrt{z-x_{p-1}}\sqrt{x_{j}-x_{p}}} \Bigg)^{\beta_{j}\theta(z)}, 
\end{align*}
where again the principal branches are taken for $\sqrt{z-x_{p}}$ and $\sqrt{z-x_{p-1}}$, and
\begin{align}
& \beta_{j} = \frac{1}{2\pi i} \log \frac{s_{j}}{s_{j+1}}, \qquad j \in \{0,\ldots,m \}\setminus \{p-1,p\} \label{def of beta sp=0} \\
& s_{0} = s_{m+1} = 1.
\end{align}
As $z \to \infty$, $\Im z > 0$, $D(z) = D_{\infty}(1+d_{1}z^{-1}+\bigO(z^{-2}))$, where
\begin{equation}
D_{\infty} = \prod_{j=0}^{p-2} \Bigg( \frac{\sqrt{x_{p}-x_{j}}-\sqrt{x_{p-1}-x_{j}}}{\sqrt{x_{p}-x_{j}}+\sqrt{x_{p-1}-x_{j}}} \Bigg)^{\beta_{j}} \times  \prod_{j=p+1}^{m} \Bigg( \frac{\sqrt{x_{j}-x_{p-1}}-\sqrt{x_{j}-x_{p}}}{\sqrt{x_{j}-x_{p-1}}+\sqrt{x_{j}-x_{p}}} \Bigg)^{\beta_{j}},
\end{equation}
and
\begin{equation}\label{def of d1}
d_{1} = \sum_{j=0}^{p-2} \beta_{j} \sqrt{x_{p}-x_{j}}\sqrt{x_{p-1}-x_{j}}-\sum_{j=p+1}^{m} \beta_{j} \sqrt{x_{j}-x_{p}}\sqrt{x_{j}-x_{p-1}}.
\end{equation}
Let us define
\begin{equation}\label{def of Pinf sp=0}
P^{(\infty)}(z) = \widehat{D} \begin{pmatrix}
\frac{\beta(z)}{\sqrt{2}} & -\frac{\beta(z)^{-1}}{\sqrt{2}} \\
\frac{\beta(z)}{\sqrt{2}} & \frac{\beta(z)^{-1}}{\sqrt{2}}
\end{pmatrix} N D(z)^{-\sigma_{3}}, \qquad \widehat{D} = \begin{pmatrix}
\frac{D_{\infty}+D_{\infty}^{-1}}{2} & -\frac{i(D_{\infty}-D_{\infty}^{-1})}{2} \\
\frac{i(D_{\infty}-D_{\infty}^{-1})}{2} & \frac{D_{\infty}+D_{\infty}^{-1}}{2}
\end{pmatrix},
\end{equation}
where $\beta(z) = \sqrt[4]{\frac{z-x_{p}}{z-x_{p-1}}}$ has branch cuts on $(-\infty,x_{p-1})\cup (x_{p},+\infty)$ and  satisfies 
\begin{align*}
\beta(z) \sim 1 \quad \mbox{as }z \to \infty, \Im z > 0 \qquad \mbox{ and } \qquad \beta(z) \sim i \quad \mbox{as }z \to \infty, \Im z < 0.
\end{align*}
We verify that $P^{(\infty)}$ satisfies the properties (a), (b) and (c) of the RH problem for $P^{(\infty)}$. Furthermore, after a computation we obtain an explicit expression for $P_{1}^{(\infty)}$:
\begin{equation}\label{Pinf 1 12 sp=0}
P_{1}^{(\infty)} = \begin{pmatrix}
\frac{i}{8}(x_{p}-x_{p-1})(D_{\infty}^{2}-D_{\infty}^{-2}) & i d_{1} - \frac{1}{8}(x_{p}-x_{p-1})(D_{\infty}^{2}+D_{\infty}^{-2}) \\
-id_{1}- \frac{1}{8}(x_{p}-x_{p-1})(D_{\infty}^{2}+D_{\infty}^{-2}) & -\frac{i}{8}(x_{p}-x_{p-1})(D_{\infty}^{2}-D_{\infty}^{-2})
\end{pmatrix}.
\end{equation}
In the rest of this subsection, we compute the leading terms in the asymptotics of $D(z)$ as $z \to x_{j}$, $j=0,\ldots,m$. 
As $z \to x_{j}$, $j \neq p,p-1$, $\Im z > 0$, we have
\begin{equation}\label{lol6 sp=0}
D(z) = \sqrt{s_{j+1}} \, (z-x_{j})^{\beta_{j}} \prod_{\substack{k=0 \\ k \neq p-1,p}}^{m} T_{k,j}^{-\beta_{k}}  (1+\bigO(z-x_{j})),
\end{equation}
where
\begin{align}
& T_{k,j} = \frac{\sqrt{|x_{k}-x_{p}|}\sqrt{|x_{j}-x_{p-1}|}+\sqrt{|x_{k}-x_{p-1}|}\sqrt{|x_{j}-x_{p}|}}{\big|\sqrt{|x_{k}-x_{p}|}\sqrt{|x_{j}-x_{p-1}|}-\sqrt{|x_{k}-x_{p-1}|}\sqrt{|x_{j}-x_{p}|}\big|}, \quad k \neq j, \\
& T_{j,j} = \frac{4|x_{j}-x_{p-1}| \, |x_{j}-x_{p}|}{x_{p}-x_{p-1}}. \label{T diagonal sp=0}
\end{align}
As $z \to x_{p}$, $\Im z > 0$, we have
\begin{equation}\label{asymp D at xp sp=0}
D(z) = \sqrt{s_{p+1}}\Big(1-\frac{2 d_{x_{p}}}{\sqrt{x_{p}-x_{p-1}}}\sqrt{z-x_{p}}+\bigO(z-x_{p})\Big),
\end{equation}
with
\begin{equation}\label{def of dxp}
d_{x_{p}} = \sum_{\substack{j=0 \\ j \neq p-1,p}}^{m} \beta_{j} \frac{\sqrt{|x_{j}-x_{p-1}|}}{\sqrt{|x_{j}-x_{p}|}},
\end{equation}
and as $z \to x_{p-1}$, $\Im z > 0$, we have
\begin{equation}\label{asymp D at xp-1 sp=0}
D(z) = \sqrt{s_{p-1}}  \Big(1+\frac{2 i d_{x_{p-1}}}{\sqrt{x_{p}-x_{p-1}}}\sqrt{z-x_{p-1}}+\bigO(z-x_{p-1})\Big),
\end{equation}
with
\begin{equation}\label{def of dxp-1}
d_{x_{p-1}} = \sum_{\substack{j=0 \\ j \neq p-1,p}}^{m} \beta_{j}\frac{\sqrt{|x_{j}-x_{p}|}}{\sqrt{|x_{j}-x_{p-1}|}}.
\end{equation}

\subsection{Local parametrices}\label{subsection: local param sp = 0}

In this subsection, we construct local parametrices $P^{(x_{j})}$ around $x_{j}$, $j \in \{0,\ldots,m\}$. To be more precise, let $\mathcal{D}_{x_{j}}$ be small open disks centered at $x_{j}$, $j=0,1,\ldots,m$ whose radii are equal to $\frac{\delta}{3}$, where $\delta$ is defined in \eqref{assumption xj delta sp=0}. The definition of the radii ensures that the disks do not intersect each other.  We require $P^{(x_{j})}$ to satisfy the same jumps as $S$ in $\mathcal{D}_{x_{j}}$, and to match with $P^{(\infty)}$ on $\partial \mathcal{D}_{x_{j}}$ in the sense that
\begin{equation}\label{matching weak sp=0}
P^{(x_{j})}(z) = (I+o(1))P^{(\infty)}(z), \qquad \mbox{ as } r \to +\infty,
\end{equation}
is required to hold uniformly for $z \in \partial \mathcal{D}_{x_{j}}$. Finally, we also require 
\begin{equation}\label{match at the center sp=0}
S(z) P^{(x_{j})}(z)^{-1} = \bigO(1), \qquad \mbox{ as } z \to x_{j}.
\end{equation}
\subsubsection{Local parametrices around $x_{j}$, $j \in \{ 0,\ldots,m\}\setminus \{p-1,p\}$}\label{subsection: local param HG sp=0}
The construction of $P^{(x_{j})}$ for $j \in \{0,1,\ldots,m\}\setminus \{p-1,p\}$ is similar to the one done in Subsection \ref{Section: local param s1 neq 0}, and involves confluent hypergeometric functions. The function
\begin{equation}\label{def conformal map sp=0}
f_{x_{j}}(z) = -2 \left\{ \begin{array}{l l}
g(z)-g_{+}(x_{j}), & \mbox{if } \Im z > 0, \\
-(g(z)-g_{-}(x_{j})), & \mbox{if } \Im z < 0,
\end{array} \right. 
\end{equation}
is a conformal map from $\mathcal{D}_{x_{j}}$ to a neighborhood of $0$ whose expansion as $z \to x_{j}$ is given by
\begin{equation}\label{expansion conformal map xj sp=0}
f_{x_{j}}(z) = i c_{x_{j}}(z-x_{j})(1+\bigO(z-x_{j})), \qquad c_{x_{j}} = \left\{ \begin{array}{l l}
\ds \frac{x_{p-1}+x_{p}-2x_{j}}{\sqrt{x_{p-1}-x_{j}}\sqrt{x_{p}-x_{j}}}, & \mbox{if } j = 0,\ldots,p-2, \\[0.3cm]
\ds \frac{2x_{j}-x_{p-1}-x_{p}}{\sqrt{x_{j}-x_{p-1}}\sqrt{x_{j}-x_{p}}}, & \mbox{if } j = p+1,\ldots,m.
\end{array} \right.
\end{equation}
Note that $f_{x_{j}}(\mathbb{R}\cap \mathcal{D}_{x_{j}})\subset i \mathbb{R}$. In order to use the model RH problem for $\Phi_{\mathrm{HG}}$, we deform the lenses is a small neighborhood of $x_{j}$ such that the jump contour for $P^{(x_{j})}$ is mapped by $f_{x_{j}}$ to a subset of $\Sigma_{\mathrm{HG}}$ (see Figure \ref{Fig:HG}), i.e.
\begin{equation}\label{deformation of the lenses local param xj sp=0}
f_{x_{j}}((\gamma_{j,+}\cup \gamma_{j+1,+})\cap \mathcal{D}_{x_{j}}) \subset \Gamma_{3} \cup \Gamma_{2}, \qquad f_{x_{j}}((\gamma_{j,-}\cup \gamma_{j+1,-})\cap \mathcal{D}_{x_{j}}) \subset \Gamma_{5} \cup \Gamma_{6},
\end{equation}
where $\Gamma_{3}$, $\Gamma_{2}$, $\Gamma_{5}$ and $\Gamma_{6}$ are as shown in Figure \ref{Fig:HG}.
We seek for $P^{(x_{j})}$ in the form
\begin{equation}\label{lol10 sp=0}
P^{(x_{j})}(z) = E_{x_{j}}(z) \Phi_{\mathrm{HG}}(rf_{x_{j}}(z);\beta_{j})(s_{j}s_{j+1})^{-\frac{\sigma_{3}}{4}}e^{-rg(z)\sigma_{3}},
\end{equation}
for a suitable analytic matrix valued function $E_{x_{j}}$. We recall that the parameter $\beta_{j}$ in \eqref{lol10 sp=0} is given by \eqref{def of beta sp=0}. Since $E_{x_{j}}$ is analytic, it is straightforward to see from the jumps of $\Phi_{\mathrm{HG}}$ (given by \eqref{jumps PHG3}) that $P^{(x_{j})}$ given by \eqref{lol10 sp=0} satisfies the same jumps as $S$ inside $\mathcal{D}_{x_{j}}$. In view of \eqref{Asymptotics HG}, we see that to satisfy the matching condition \eqref{matching weak sp=0}, we are forced to define $E_{x_{j}}$ by
\begin{equation}\label{def of Ej sp=0}
E_{x_{j}}(z) = P^{(\infty)}(z) (s_{j} s_{j+1})^{\frac{\sigma_{3}}{4}} \left\{ \begin{array}{l l}
\ds \sqrt{\frac{s_{j+1}}{s_{j}}}^{\sigma_{3}}, & \Im z > 0 \\
\begin{pmatrix}
0 & 1 \\ -1 & 0
\end{pmatrix}, & \Im z < 0
\end{array} \right\} e^{rg_{+}(x_{j})\sigma_{3}}(rf_{x_{j}}(z))^{\beta_{j}\sigma_{3}}.
\end{equation}
It can be verified from the jumps for $P^{(\infty)}$ that $E_{x_{j}}$ defined by \eqref{def of Ej sp=0} has no jump at all inside $\mathcal{D}_{x_{j}}$. Furthermore, using \eqref{lol6 sp=0}, we verify that $E_{x_{j}}(z)$ is bounded as $z \to x_{j}$ and $E_{x_{j}}$ is then analytic in the whole disk $\mathcal{D}_{x_{j}}$, as desired. Since $P^{(x_{j})}$ and $S$ have exactly the same jumps on $(\mathbb{R}\cup \gamma_{+}\cup \gamma_{-})\cap \mathcal{D}_{x_{j}}$, $S(z)P^{(x_{j})}(z)^{-1}$ is analytic in $\mathcal{D}_{x_{j}} \setminus \{x_{j}\}$. As $z \to x_{j}$ from outside the lenses, by condition (d) in the RH problem for $S$ and by \eqref{lol 35}, $S(z)P^{(x_{j})}(z)^{-1}$ behaves as $\bigO(\log(z-x_{j}))$. This means that $x_{j}$ is a removable singularity of $S(z)P^{(x_{j})}(z)^{-1}$ and therefore \eqref{match at the center sp=0} holds. We will need later a more detailed matching condition than \eqref{matching weak sp=0}, which can be obtained from \eqref{Asymptotics HG}:
\begin{equation}\label{matching strong -x_j sp=0}
P^{(x_{j})}(z)P^{(\infty)}(z)^{-1} = I + \frac{1}{rf_{x_{j}}(z)}E_{x_{j}}(z) \Phi_{\mathrm{HG},1}(\beta_{j})E_{x_{j}}(z)^{-1} + \bigO(r^{-2}),
\end{equation}
as $r \to + \infty$, uniformly for $z \in \partial \mathcal{D}_{x_{j}}$, where $\Phi_{\mathrm{HG},1}(\beta_{j})$ is given by \eqref{def of tau}. Also, using \eqref{def of Pinf sp=0}, \eqref{lol6 sp=0}-\eqref{T diagonal sp=0} and \eqref{expansion conformal map xj sp=0}, we note for later use that 
\begin{equation}\label{E_j at x_j sp=0}
E_{x_{j}}(x_{j}) = \frac{1}{\sqrt{2}} \widehat{D}\begin{pmatrix}
1 & - 1 \\
1 & 1
\end{pmatrix} \left( \frac{|x_{j}-x_{p}|}{|x_{j}-x_{p-1}|} \right)^{\frac{\sigma_{3}}{4}}
N\Lambda_{j}^{\sigma_{3}},
\end{equation}
where
\begin{equation}\label{def of Lambda sp=0}
\Lambda_{j} = e^{r g_{+}(x_{j})} \left( T_{j,j} c_{x_{j}}r \right)^{\beta_{j}} \prod_{\substack{k=0 \\ k \neq j,p-1,p}}^{m} T_{k,j}^{\beta_{k}}.
\end{equation}
\subsubsection{Local parametrix around $x_{p}$}
For the local parametrix $P^{(x_{p})}$, we need to use another model RH problem whose solution $\Phi_{\mathrm{Be}}$ is expressed in terms of Bessel functions. This model RH problem is well-known, see e.g. \cite{KMcLVAV}, and is recalled in Subsection \ref{subsection:Model Bessel} for the convenience of the reader. Consider the function
\begin{equation}\label{conformal map near xp sp=0}
f_{x_{p}}(z) = -\frac{g(z)^{2}}{4} = \frac{(z-x_{p-1})(z-x_{p})}{4}.
\end{equation}
This is a conformal map from $\mathcal{D}_{x_{p}}$ to a neighborhood of $0$ whose expansion as $z \to x_{p}$ is given by
\begin{equation}\label{expansion fxp at xp sp=0}
f_{x_{p}}(z) = c_{x_{p}}^{2}(z-x_{p})\left(1+\frac{z-x_{p}}{x_{p}-x_{p-1}}+\bigO\big((z-x_{p})^{2}\big)\right), \qquad c_{x_{p}} = \frac{\sqrt{x_{p}-x_{p-1}}}{2} > 0.
\end{equation}
We choose the lenses such that they are mapped by $-f_{x_{p}}$ onto a subset of $\Sigma_{\mathrm{Be}}$ ($\Sigma_{\mathrm{Be}}$ is the jump contour for $\Phi_{\mathrm{Be}}$, see Figure \ref{figBessel}):
\begin{equation*}
-f_{x_{p}}(\gamma_{p+1,+}) \subset e^{-\frac{2\pi i}{3}}\mathbb{R}^{+}, \qquad -f_{x_{p}}(\gamma_{p+1,-}) \subset e^{\frac{2\pi i}{3}}\mathbb{R}^{+}.
\end{equation*} 
If we take $P^{(x_{p})}$ of the form
\begin{equation}\label{def of P^xp sp=0}
P^{(x_{p})}(z) = E_{x_{p}}(z)\sigma_{3}\Phi_{\mathrm{Be}}(-r^{2}f_{x_{p}}(z))\sigma_{3}s_{p+1}^{-\frac{\sigma_{3}}{2}}e^{-rg(z)\sigma_{3}},
\end{equation}
with $E_{x_{p}}$ analytic in $\mathcal{D}_{x_{p}}$, then it is straightforward to verify from \eqref{Jump for P_Be} that $P^{(x_{p})}$ has the same jumps as $S$ in $\mathcal{D}_{x_{p}}$. To satisfy the matching condition, by \eqref{large z asymptotics Bessel}, we need to define $E_{x_{p}}$ by
\begin{equation}\label{def E at xp sp=0}
E_{x_{p}}(z) = P^{(\infty)}(z)s_{p+1}^{\frac{\sigma_{3}}{2}}N \left( 2\pi r (-f_{x_{p}}(z))^{1/2} \right)^{\frac{\sigma_{3}}{2}},
\end{equation}
where we take the principal branches for the square roots. We verify from the jumps for $P^{(\infty)}$ that $E_{x_{p}}$ has no jumps in $\mathcal{D}_{x_{p}}$, and has a removable singularity at $x_{p}$; therefore $E_{x_{p}}$ is analytic in $\mathcal{D}_{x_{p}}$, as required. We will need later a more detailed matching condition than \eqref{matching weak sp=0}, which can be obtained using \eqref{large z asymptotics Bessel}:
\begin{equation}\label{matching strong xp sp=0}
P^{(x_{p})}(z)P^{(\infty)}(z)^{-1} = I + \frac{1}{r(-f_{x_{p}}(z))^{1/2}}P^{(\infty)}(z)s_{p+1}^{\frac{\sigma_{3}}{2}}\sigma_{3}\Phi_{\mathrm{Be},1}\sigma_{3}s_{p+1}^{-\frac{\sigma_{3}}{2}}P^{(\infty)}(z)^{-1} + \bigO(r^{-2}),
\end{equation}
as $r \to +\infty$ uniformly for $z \in \partial \mathcal{D}_{x_{p}}$, where $\Phi_{\mathrm{Be},1}$ is given below \eqref{large z asymptotics Bessel}. Using \eqref{def of Pinf sp=0}, \eqref{asymp D at xp sp=0}, \eqref{expansion fxp at xp sp=0} and the expansion
\begin{equation}\label{explicit asymp for branch cut at xp sp=0}
(-f_{x_{p}}(z))^{1/2} = -i \, c_{x_{p}} \sqrt{z-x_{p}}(1+\bigO(z-x_{p})), \qquad \mbox{as } z \to x_{p}, \; \Im z >0,
\end{equation}
we obtain $E_{x_{p}}(x_{p})$ by taking the limit of $E_{x_{p}}(z)$ as $z \to x_{p}$ from the upper half plane:
\begin{equation}\label{E0 at 0 sp=0}
E_{x_{p}}(x_{p}) = \frac{1}{\sqrt{2}} \widehat{D}  \begin{pmatrix}
1 & -1 \\
1 & 1
\end{pmatrix} \begin{pmatrix}
0 & i \\
i & -2d_{x_{p}}
\end{pmatrix} e^{-\frac{\pi i}{4}\sigma_{3}}\big( \pi(x_{p}-x_{p-1})r \big)^{\frac{\sigma_{3}}{2}}.
\end{equation}

\subsubsection{Local parametrix around $x_{p-1}$}
The local parametrix $P^{(x_{p-1})}$ is also constructed in terms of Bessel functions, and relies on the model RH problem $\Phi_{\mathrm{Be}}$. The function
\begin{equation}\label{conformal map near xp-1 sp=0}
f_{x_{p-1}}(z) = \frac{g(z)^{2}}{4} = -\frac{(z-x_{p-1})(z-x_{p})}{4}
\end{equation}
is a conformal map from $\mathcal{D}_{x_{p-1}}$ to a neighborhood of $0$ whose expansion as $z \to x_{p-1}$ is given by
\begin{equation}\label{expansion fxp-1 at xp-1 sp=0}
f_{x_{p-1}}(z) = c_{x_{p-1}}^{2}(z-x_{p-1})\left(1-\frac{z-x_{p-1}}{x_{p}-x_{p-1}}+\bigO\big((z-x_{p-1})^{2}\big)\right), \qquad c_{x_{p-1}} = \frac{\sqrt{x_{p}-x_{p-1}}}{2} > 0.
\end{equation}
In a neighborhood of $x_{p-1}$, we deform the lenses such that
\begin{equation*}
f_{x_{p-1}}(\gamma_{p-1,+}) \subset e^{\frac{2\pi i}{3}}\mathbb{R}^{+}, \qquad f_{x_{p-1}}(\gamma_{p-1,-}) \subset e^{-\frac{2\pi i}{3}}\mathbb{R}^{+}.
\end{equation*} 
In this way, the jump contour for $P^{(x_{p-1})}$ is mapped by $f_{x_{p-1}}$ onto a subset of $\Sigma_{\mathrm{Be}}$. We take $P^{(x_{p-1})}$ of the form
\begin{equation}\label{def of P^xp-1 sp=0}
P^{(x_{p-1})}(z) = E_{x_{p-1}}(z)\Phi_{\mathrm{Be}}(r^{2}f_{x_{p-1}}(z))s_{p-1}^{-\frac{\sigma_{3}}{2}}e^{-rg(z)\sigma_{3}},
\end{equation}
where $E_{x_{p-1}}$ is analytic in $\mathcal{D}_{x_{p-1}}$. Using \eqref{Jump for P_Be}, it is straightforward to see that $P^{(x_{p-1})}$ has the same jumps as $S$ in $\mathcal{D}_{x_{p-1}}$. To satisfy the matching condition \eqref{matching weak sp=0}, using \eqref{large z asymptotics Bessel} we conclude that $E_{x_{p-1}}$ needs to be defined as follows
\begin{equation}\label{def of E xp-1 sp=0}
E_{x_{p-1}}(z) = P^{(\infty)}(z)s_{p-1}^{\frac{\sigma_{3}}{2}}N^{-1} \left( 2\pi r (f_{x_{p-1}}(z))^{1/2} \right)^{\frac{\sigma_{3}}{2}}.
\end{equation}
It can be verified from the jumps for $P^{(\infty)}$ that $E_{x_{p-1}}$ has no jumps in $\mathcal{D}_{x_{p-1}}$ and has a removable singularity at $x_{p-1}$. We conclude that $E_{x_{p-1}}$ is analytic in $\mathcal{D}_{x_{p-1}}$, as required. We will need later a more detailed matching condition than \eqref{matching weak sp=0}, which can be obtained using \eqref{large z asymptotics Bessel}:
\begin{equation}\label{matching strong xp-1 sp=0}
P^{(x_{p-1})}(z)P^{(\infty)}(z)^{-1} = I + \frac{1}{r(f_{x_{p-1}}(z))^{1/2}}P^{(\infty)}(z)s_{p-1}^{\frac{\sigma_{3}}{2}}\Phi_{\mathrm{Be},1}s_{p-1}^{-\frac{\sigma_{3}}{2}}P^{(\infty)}(z)^{-1} + \bigO(r^{-2}),
\end{equation}
as $r \to +\infty$ uniformly for $z \in \partial \mathcal{D}_{x_{p-1}}$, where $\Phi_{\mathrm{Be},1}$ is given below \eqref{large z asymptotics Bessel}. Furthermore, using \eqref{def of Pinf sp=0}, \eqref{asymp D at xp-1 sp=0} and \eqref{conformal map near xp-1 sp=0}, one shows that
\begin{equation}\label{E xp-1 at xp-1 sp=0}
E_{x_{p-1}}(x_{p-1}) = \frac{1}{\sqrt{2}}\widehat{D}  \begin{pmatrix}
1 & -1 \\
1 & 1
\end{pmatrix}\begin{pmatrix}
1 & -2id_{x_{p-1}} \\
0 & 1
\end{pmatrix} e^{\frac{\pi i}{4}\sigma_{3}}\big( \pi(x_{p}-x_{p-1})r \big)^{\frac{\sigma_{3}}{2}}.
\end{equation}
\subsection{Small norm problem}\label{subsection Small norm sp=0}
The last transformation of the steepest descent is defined by
\begin{equation}\label{def of R sp=0}
R(z) = \left\{ \begin{array}{l l}
S(z)P^{(\infty)}(z)^{-1}, & \mbox{for } z \in \mathbb{C}\setminus \bigcup_{j=0}^{m}\mathcal{D}_{x_{j}}, \\
S(z)P^{(x_{j})}(z)^{-1}, & \mbox{for } z \in \mathcal{D}_{x_{j}}, \, j \in \{0,1,\ldots,m\}.
\end{array} \right.
\end{equation}
It follows from the analysis of Subsection \ref{subsection: local param sp = 0} that $R$ is analytic inside the $m+1$ disks. Since the jumps of $P^{(\infty)}$ and of $S$ are the same on $(x_{j-1},x_{j})$, $j=1,\ldots,m$, we conclude that $R$ is analytic on $\mathbb{C}\setminus \Sigma_{R}$, where 
\begin{align*}
\Sigma_{R} = \bigcup_{j=0}^{m} \partial \mathcal{D}_{x_{j}} \cup \bigg( (\gamma_{+}\cup  \gamma_{-}) \setminus \bigcup_{j=0}^{m} \mathcal{D}_{x_{j}} \bigg),
\end{align*}
see Figure \ref{fig:contour for R sp=0}.
\begin{figure}
\centering
\begin{tikzpicture}
\node at (3,0) {};
\draw ($(3,0)+(135:0.5)$) -- ($(3,0)+(135:3)$);
\draw ($(3,0)+(-135:0.5)$) -- ($(3,0)+(-135:3)$);
\draw ($(10,0)+(45:0.5)$) -- ($(10,0)+(45:3)$);
\draw ($(10,0)+(-45:0.5)$) -- ($(10,0)+(-45:3)$);

\draw[fill] (3,0) circle (0.05);
\draw (3,0) circle (0.5);
\draw[fill] (5,0) circle (0.05);
\draw (5,0) circle (0.5);
\draw[fill] (7,0) circle (0.05);
\draw (7,0) circle (0.5);
\draw[fill] (10,0) circle (0.05);
\draw (10,0) circle (0.5);

\node at (3,-0.3) {$x_{0}$};
\node at (5,-0.3) {$x_{1}$};
\node at (7,-0.3) {$x_{2}$};
\node at (10,-0.3) {$x_{m}$};

\draw[black,arrows={-Triangle[length=0.18cm,width=0.12cm]}]
($(3,0)+(135:1.5)$) --  ++(-45:0.001);
\draw[black,arrows={-Triangle[length=0.18cm,width=0.12cm]}]
($(3,0)+(-135:1.5)$) --  ++(45:0.001);

\draw[black,arrows={-Triangle[length=0.18cm,width=0.12cm]}]
($(10,0)+(45:1.5)$) --  ++(45:0.001);
\draw[black,arrows={-Triangle[length=0.18cm,width=0.12cm]}]
($(10,0)+(-45:1.5)$) --  ++(-45:0.001);

\draw ($(3,0)+(45:0.5)$) .. controls (3.7,0.7) and (4.3,0.7) .. ($(5,0)+(135:0.5)$);
\draw ($(3,0)+(-45:0.5)$) .. controls (3.7,-0.7) and (4.3,-0.7) .. ($(5,0)+(-135:0.5)$);
\draw ($(7,0)+(45:0.5)$) .. controls (8,1.1) and (9,1.1) .. ($(10,0)+(135:0.5)$);
\draw ($(7,0)+(-45:0.5)$) .. controls (8,-1.1) and (9,-1.1) .. ($(10,0)+(-135:0.5)$);

\draw[black,arrows={-Triangle[length=0.18cm,width=0.12cm]}]
(4.05,0.6) --  ++(0:0.001);
\draw[black,arrows={-Triangle[length=0.18cm,width=0.12cm]}]
(4.05,-0.6) --  ++(0:0.001);
\draw[black,arrows={-Triangle[length=0.18cm,width=0.12cm]}]
(8.6,0.9) --  ++(0:0.001);
\draw[black,arrows={-Triangle[length=0.18cm,width=0.12cm]}]
(8.6,-0.9) --  ++(0:0.001);

\draw[black,arrows={-Triangle[length=0.18cm,width=0.12cm]}]
(3.1,0.5) --  ++(0:0.001);
\draw[black,arrows={-Triangle[length=0.18cm,width=0.12cm]}]
(5.1,0.5) --  ++(0:0.001);
\draw[black,arrows={-Triangle[length=0.18cm,width=0.12cm]}]
(7.1,0.5) --  ++(0:0.001);
\draw[black,arrows={-Triangle[length=0.18cm,width=0.12cm]}]
(10.1,0.5) --  ++(0:0.001);
\end{tikzpicture}
\caption{Jump contours $\Sigma_{R}$ for the RH problem for $R$ with $m=3$ and $p=2$.}
\label{fig:contour for R sp=0}
\end{figure}
From Lemma \ref{lemma: real part of g} and the fact that $P^{(\infty)}$ is independent of $r$ (see \eqref{def of Pinf sp=0}), we infer that the jumps $J_{R} := R_{-}^{-1}R_{+}$ satisfy
\begin{equation}\label{estimates jumps for R lenses sp=0}
J_{R}(z) = P^{(\infty)}(z)S_{-}(z)^{-1}S_{+}(z)P^{(\infty)}(z)^{-1} = I + \bigO(e^{-c |z| r}), \qquad \mbox{as } r \to + \infty,
\end{equation}
uniformly for $z \in \Sigma_{R} \cap (\gamma_{+} \cup  \gamma_{-})$, for a certain $c>0$ independent of $z$ and $r$. Let us orient the boundaries of the disks in the clockwise direction as shown in Figure \ref{fig:contour for R sp=0}. For $z \in \bigcup_{j=0}^{m} \partial \mathcal{D}_{x_{j}}$, from \eqref{matching strong -x_j sp=0}, \eqref{matching strong xp sp=0} and \eqref{matching strong xp-1 sp=0}, we have
\begin{equation}\label{estimates jumps for R disks sp=0}
J_{R}(z) = P^{(\infty)}(z)P^{(x_{j})}(z)^{-1} = I + \bigO \Big(\frac{1}{r} \Big), \qquad \mbox{ as } r \to  +\infty.
\end{equation}
Therefore, $R$ satisfies a small norm RH problem. By standard theory for small norm RH problems \cite{DKMVZ2,DKMVZ1}, $R$ exists for sufficiently large $r$ and satisfies 
\begin{align}
& R(z) = I + \frac{R^{(1)}(z)}{r} + \bigO(r^{-2}), \qquad R^{(1)}(z) = \bigO(1), & & \mbox{ as } r \to  +\infty \label{eq: asymp R inf sp=0} 
\end{align}
uniformly for $z \in \mathbb{C}\setminus \Sigma_{R}$. For any $j \in \{0,\ldots,m\}\setminus \{p-1,p\}$, we see from \eqref{def of Ej sp=0} that some factors $r^{\pm \beta_{j}}$ are present in the entries of $E_{x_{j}}$. Hence, by \eqref{matching strong -x_j sp=0}, some factors of the form $r^{\pm 2\beta_{j}}$ also appear in the entries of $J_{R}$, and therefore
\begin{equation}\label{eq: asymp der beta R inf sp=0}
\partial_{\beta_{j}}R(z) = \frac{\partial_{\beta_{j}}R^{(1)}(z)}{r} + \bigO \Big( \frac{\log r}{r^{2}} \Big), \qquad \partial_{\beta_{j}}R^{(1)}(z) = \bigO(\log r), \qquad \mbox{ as } r \to  +\infty.
\end{equation}
Furthermore, since the asymptotics \eqref{estimates jumps for R lenses sp=0} and \eqref{estimates jumps for R disks sp=0} hold uniformly for $\beta_{0},\ldots,\beta_{p-2},\beta_{p+1},\ldots,\beta_{m}$ in compact subsets of $i \mathbb{R}$, and uniformly in $x_{0},x_{1},\ldots,x_{m}$ in compact subsets of $\mathbb{R}$ such that \eqref{assumption xj delta sp=0} holds, the asymptotics \eqref{eq: asymp R inf sp=0} and \eqref{eq: asymp der beta R inf sp=0} also hold uniformly in $\beta_{0},\ldots,\beta_{p-2},\beta_{p+1},\ldots,\beta_{m},x_{0},\ldots,x_{m}$ in the same way.

\vspace{0.2cm}\hspace{-0.55cm}Now, we compute explicitly $R^{(1)}(x_{p})$, $R^{(1)}(x_{p-1})$ and $R^{(1)}(z)$ for $z \in \mathbb{C}\setminus \bigcup_{j=0}^{m}\mathcal{D}_{x_{j}}$. As in \eqref{integral form for R1 sp neq 0}, $R^{(1)}$ admits the following integral representation
\begin{equation}
R^{(1)}(z) = \frac{1}{2\pi i}\int_{\bigcup_{j=0}^{m}\partial\mathcal{D}_{x_{j}}} \frac{J_{R}^{(1)}(s)}{s-z}ds,
\end{equation}
where $J_{R}^{(1)}$ is defined via the expansion
\begin{equation}
J_{R}(z) = I + \frac{J_{R}^{(1)}(z)}{r} + \bigO(r^{-2}), \qquad J_{R}^{(1)}(z) = \bigO(1), \qquad \mbox{ as } r \to + \infty, \quad z \in \bigcup_{j=0}^{m}\partial\mathcal{D}_{x_{j}}.
\end{equation}
Recall that $J_{R}^{(1)}(z)$ is defined only for $z$ on the boundaries of the disks. However, from the explicit expressions for $J_{R}^{(1)}$ given by \eqref{matching strong -x_j sp=0}, \eqref{matching strong xp sp=0} and \eqref{matching strong xp-1 sp=0}, we see that $J_{R}^{(1)}$ can be analytically continued on $\bigcup_{j=0}^{m} \overline{\mathcal{D}_{x_{j}}}\setminus \{x_{j}\}$, and that  $J_{R}^{(1)}$ has a pole of order 1 at each of the $x_{j}$'s. Since the disks are oriented in the clockwise direction, a direct residue calculation shows that
\begin{align}
& R^{(1)}(z) = \sum_{j=0}^{m} \frac{1}{z-x_{j}}\mbox{Res}(J_{R}^{(1)}(s),s = x_{j}), \qquad \mbox{ for } z \in \mathbb{C}\setminus \bigcup_{j=0}^{m}\mathcal{D}_{x_{j}}, \label{expression for R^1 sp=0} \\
& R^{(1)}(x_{p}) = \sum_{\substack{j=0 \\ j \neq p}}^{m} \frac{1}{x_{p}-x_{j}}\mbox{Res}(J_{R}^{(1)}(s),s = x_{j}) - \mbox{Res} \Big( \frac{J_{R}^{(1)}(s)}{s-x_{p}}, s = x_{p} \Big), \label{expression for R^1 at xp sp=0} \\
& R^{(1)}(x_{p-1}) = \sum_{\substack{j=0 \\ j \neq p-1}}^{m} \frac{1}{x_{p-1}-x_{j}}\mbox{Res}(J_{R}^{(1)}(s),s = x_{j}) - \mbox{Res} \Big( \frac{J_{R}^{(1)}(s)}{s-x_{p-1}}, s = x_{p-1} \Big). \label{expression for R^1 at xp-1 sp=0}
\end{align}
Using \eqref{expansion conformal map xj sp=0} and \eqref{matching strong -x_j sp=0}-\eqref{E_j at x_j sp=0}, for $j \in \{0,\ldots,m\}\setminus \{p-1,p\}$ we obtain
\begin{equation*}
\begin{array}{r c l}
\ds \mbox{Res}\left( J_{R}^{(1)}(s),s=x_{j} \right) & = & \hspace{-0.25cm} \ds \frac{\beta_{j}^{2}}{2i c_{x_{j}}} \widehat{D} \begin{pmatrix}
1 & -1 \\ 1 & 1
\end{pmatrix} \left( \frac{|x_{j}-x_{p}|}{|x_{j}-x_{p-1}|} \right)^{\frac{\sigma_{3}}{4}} N \begin{pmatrix}
-1 & \widetilde{\Lambda}_{j,1} \\ -\widetilde{\Lambda}_{j,2} & 1
\end{pmatrix} N^{-1} \\[0.35cm]
& & \ds \times \left( \frac{|x_{j}-x_{p}|}{|x_{j}-x_{p-1}|} \right)^{-\frac{\sigma_{3}}{4}} \begin{pmatrix}
1 & 1 \\ -1 & 1
\end{pmatrix} \widehat{D}^{-1} \\[0.5cm]
& = & \ds \hspace{-0.25cm} \frac{\beta_{j}^{2}}{4c_{x_{j}}}\widehat{D} \begin{pmatrix}
1 & -1 \\ 1 & 1
\end{pmatrix} \left( \frac{|x_{j}-x_{p}|}{|x_{j}-x_{p-1}|} \right)^{\frac{\sigma_{3}}{4}}
\hspace{-0.1cm}
\begin{pmatrix}
-\widetilde{\Lambda}_{j,1}-\widetilde{\Lambda}_{j,2} & \hspace{-0.25cm}-i(\widetilde{\Lambda}_{j,1}-\widetilde{\Lambda}_{j,2}+2i) \\
-i(\widetilde{\Lambda}_{j,1}-\widetilde{\Lambda}_{j,2}-2i) & \hspace{-0.25cm}\widetilde{\Lambda}_{j,1}+\widetilde{\Lambda}_{j,2}
\end{pmatrix} \\[0.35cm]
& & \times \ds  \left( \frac{|x_{j}-x_{p}|}{|x_{j}-x_{p-1}|} \right)^{-\frac{\sigma_{3}}{4}} \begin{pmatrix}
1 & 1 \\ -1 & 1
\end{pmatrix}\widehat{D}^{-1},
\end{array}
\end{equation*}
where
\begin{equation}\label{def of Lambda tilde}
\widetilde{\Lambda}_{j,1} = \tau(\beta_{j})\Lambda_{j}^{2} \qquad \mbox{ and } \qquad \widetilde{\Lambda}_{j,2} = \tau(-\beta_{j})\Lambda_{j}^{-2}.
\end{equation}
Using \eqref{def of Pinf sp=0}, \eqref{asymp D at xp sp=0}, \eqref{expansion fxp at xp sp=0}, \eqref{matching strong xp sp=0} and \eqref{explicit asymp for branch cut at xp sp=0}, we obtain
\begin{equation}\label{residu xp easy sp=0}
\mbox{Res}\left( J_{R}^{(1)}(s),s=x_{p} \right) = \frac{1}{16}\widehat{D} \begin{pmatrix}
1 & 1 \\ -1 & -1
\end{pmatrix}\widehat{D}^{-1},
\end{equation}
and by \eqref{def of Pinf sp=0}, \eqref{asymp D at xp-1 sp=0}, \eqref{expansion fxp-1 at xp-1 sp=0} and \eqref{matching strong xp-1 sp=0}, we have
\begin{equation}\label{residu xp-1 easy sp=0}
\mbox{Res}\left( J_{R}^{(1)}(s),s=x_{p-1} \right) = \frac{1}{16}\widehat{D} \begin{pmatrix}
-1 & 1 \\ -1 & 1
\end{pmatrix}\widehat{D}^{-1}.
\end{equation}
In the same way as we derived the residues \eqref{residu xp easy sp=0} and \eqref{residu xp-1 easy sp=0}, but with more efforts, we also obtain
\begin{multline}
\mbox{Res} \Big( \frac{J_{R}^{(1)}(s)}{s-x_{p}}, s = x_{p} \Big) = \frac{\widehat{D}}{16(x_{p}-x_{p-1})} \begin{pmatrix}
3 + 16 d_{x_{p}}^{2} & \hspace{-0.4cm}-3 + 16 d_{x_{p}}^{2} + 16 i d_{x_{p}} \\[0.2cm]
3 - 16 d_{x_{p}}^{2} + 16 i d_{x_{p}} & \hspace{-0.4cm}-3 - 16 d_{x_{p}}^{2}
\end{pmatrix}\widehat{D}^{-1}
\end{multline}
and
\begin{multline}
\mbox{Res} \Big( \frac{J_{R}^{(1)}(s)}{s-x_{p-1}}, s = x_{p-1} \Big) = \frac{\widehat{D}}{16(x_{p}-x_{p-1})} \begin{pmatrix}
3 + 16 d_{x_{p-1}}^{2} & \hspace{-0.4cm}3 - 16 d_{x_{p-1}}^{2} + 16 i d_{x_{p-1}} \\[0.2cm]
-3 + 16 d_{x_{p-1}}^{2} + 16 i d_{x_{p-1}} & \hspace{-0.4cm} -3 - 16 d_{x_{p-1}}^{2}
\end{pmatrix}\widehat{D}^{-1}.
\end{multline}
\section{Proof of Theorem \ref{thm:sp=0}}\label{Section: integration sp=0}
We prove Theorem \ref{thm:sp=0} using the same strategy as in Section \ref{Section: integration s1 >0}. First, we use the RH analysis done in Section \ref{Section: Steepest descent with sp=0} to find large $r$ asymptotics for 
\begin{equation}
\partial_{s_{k}} \log F(r\vec{x},\vec{s})= K_{\infty} + \sum_{j=0}^{m}K_{x_{j}}, \qquad k=1,\ldots,p-1,p+1,\ldots,m.
\end{equation}
The above identity was obtained in \eqref{DIFF identity final form general case} and the quantities $K_{\infty}$ and $K_{x_{j}}$ are defined in \eqref{K inf}-\eqref{K xj}. Then, we integrate these asymptotics over the parameters $s_{1},\ldots,s_{p-1},s_{p+1},\ldots,s_{m}$. 

\subsection[]{Large $r$ asymptotics for $\partial_{s_{k}} \log F(r\vec{x},\vec{s})$}

\paragraph{Asymptotics for $K_{\infty}$.} Using \eqref{eq:Sasympinf sp=0}, \eqref{def of R sp=0} and \eqref{eq:Pinf asympinf sp=0}, we obtain
\begin{align*}
& T_{1} = R_{1} + P_{1}^{(\infty)}, 
\end{align*}
where $R_{1}$ is the $z^{-1}$ coefficient in the large $z$ expansion of $R(z)$. Hence, by \eqref{eq: asymp R inf sp=0}, we have
\begin{align*}
& T_{1} = P_{1}^{(\infty)} + \frac{R_{1}^{(1)}}{r} + \bigO(r^{-2}), \qquad \mbox{ as } r \to + \infty,
\end{align*}
where $R_{1}^{(1)}$ is defined through the expansion
\begin{equation}
R^{(1)}(z) = \frac{R_{1}^{(1)}}{z} + \bigO(z^{-2}), \qquad \mbox{ as } z \to \infty.
\end{equation}
Hence, using \eqref{Pinf 1 12 sp=0} and \eqref{expression for R^1 sp=0}, we get
\begin{multline}
T_{1} = \begin{pmatrix}
\frac{i}{8}(x_{p}-x_{p-1})(D_{\infty}^{2}-D_{\infty}^{-2}) & i d_{1} - \frac{1}{8}(x_{p}-x_{p-1})(D_{\infty}^{2}+D_{\infty}^{-2}) \\
-id_{1}- \frac{1}{8}(x_{p}-x_{p-1})(D_{\infty}^{2}+D_{\infty}^{-2}) & -\frac{i}{8}(x_{p}-x_{p-1})(D_{\infty}^{2}-D_{\infty}^{-2})
\end{pmatrix} + \sum_{\substack{j=0 \\ j\neq p-1,p}}^{m}\frac{\beta_{j}^{2}}{4c_{x_{j}}r}\widehat{D} \begin{pmatrix}
1 & -1 \\ 1 & 1
\end{pmatrix} \\ 
\times \left( \frac{|x_{j}-x_{p}|}{|x_{j}-x_{p-1}|} \right)^{\frac{\sigma_{3}}{4}}
\hspace{-0.1cm}
\begin{pmatrix}
-\widetilde{\Lambda}_{j,1}-\widetilde{\Lambda}_{j,2} & \hspace{-0.25cm}-i(\widetilde{\Lambda}_{j,1}-\widetilde{\Lambda}_{j,2}+2i) \\
-i(\widetilde{\Lambda}_{j,1}-\widetilde{\Lambda}_{j,2}-2i) & \hspace{-0.25cm}\widetilde{\Lambda}_{j,1}+\widetilde{\Lambda}_{j,2}
\end{pmatrix} \left( \frac{|x_{j}-x_{p}|}{|x_{j}-x_{p-1}|} \right)^{-\frac{\sigma_{3}}{4}} \begin{pmatrix}
1 & 1 \\ -1 & 1
\end{pmatrix}\widehat{D}^{-1} \\
+\frac{1}{16r}\widehat{D} \begin{pmatrix}
1 & 1 \\ -1 & -1
\end{pmatrix}\widehat{D}^{-1} + 
\frac{1}{16r}\widehat{D} \begin{pmatrix}
-1 & 1 \\ -1 & 1
\end{pmatrix}\widehat{D}^{-1} 
+ \bigO(r^{-2}), \qquad \mbox{as } r \to + \infty,
\end{multline}
which implies, by \eqref{K inf}, \eqref{Kinf in terms of T1 sp=0} and \eqref{eq: asymp der beta R inf sp=0}, that
\begin{multline}\label{K inf asymp sp=0}
K_{\infty} = r \big( \partial_{s_{k}} T_{1,21} - \partial_{s_{k}}T_{1,12} \big) = -2i \partial_{s_{k}}d_{1}r \\ + \sum_{\substack{j=0 \\ j \neq p-1,p}}^{m} \frac{|x_{j}-x_{p-1}| \partial_{s_{k}}\big( \beta_{j}^{2}(\widetilde{\Lambda}_{j,1}-\widetilde{\Lambda}_{j,2}-2i) \big)-|x_{j}-x_{p}| \partial_{s_{k}}\big( \beta_{j}^{2}(\widetilde{\Lambda}_{j,1}-\widetilde{\Lambda}_{j,2}+2i) \big)}{2 i c_{x_{j}} \sqrt{|x_{j}-x_{p-1}|\, |x_{j}-x_{p}|}} + \bigO \Big( \frac{\log r}{r} \Big)
\end{multline} \\[-0.4cm]
as $r \to + \infty$.
\paragraph{Asymptotics for $K_{x_{j}}$ with $j \in \{0,\ldots,p-2,p+1,\ldots,m\}$.} For $z$ outside the lenses and inside $\mathcal{D}_{x_{j}}$, by \eqref{def of S sp=0}, \eqref{def of R sp=0}, and \eqref{lol10 sp=0}, we have
\begin{equation}\label{lol11 2}
T(z) = R(z)E_{x_{j}}(z)\Phi_{\mathrm{HG}}(rf_{x_{j}}(z);\beta_{j})(s_{j}s_{j+1})^{-\frac{\sigma_{3}}{4}}e^{- r g(z)\sigma_{3}},
\end{equation}
and by \eqref{expansion conformal map xj sp=0} and \eqref{model RHP HG in different sector}, we also have
\begin{equation}
\Phi_{\mathrm{HG}}(rf_{x_{j}}(z);\beta_{j}) = \widehat{\Phi}_{\mathrm{HG}}(rf_{x_{j}}(z);\beta_{j}), \qquad \mbox{for } \Im z > 0.
\end{equation}
Using \eqref{def of beta sp=0} and Euler's reflection formula (see e.g. \cite[equation 5.5.3]{NIST}), we note that
\begin{equation}\label{relation Gamma beta_j and s_j sp=0}
\frac{\sin (\pi \beta_{j})}{\pi} = \frac{1}{\Gamma(\beta_{j})\Gamma(1-\beta_{j})} = -\frac{s_{j+1}-s_{j}}{2\pi i \sqrt{s_{j}s_{j+1}}},\qquad j =0,\ldots,p-2,p+1,\ldots,m.
\end{equation}
This identity, combined with \eqref{expansion conformal map xj sp=0} and \eqref{precise asymptotics of Phi HG near 0}, implies that
\begin{equation}\label{lol 2 sp=0}
\Phi_{\mathrm{HG}}(rf_{x_{j}}(z);\beta_{j})(s_{j}s_{j+1})^{-\frac{\sigma_{3}}{4}} = \begin{pmatrix}
\Psi_{j,11} & \Psi_{j,12} \\
\Psi_{j,21} & \Psi_{j,22}
\end{pmatrix} (I + \bigO(z-x_{j})) \begin{pmatrix}
1 & -\frac{s_{j+1}-s_{j}}{2\pi i}\log(r(z - x_{j})) \\
0 & 1
\end{pmatrix} ,
\end{equation}
as $z \to x_{j}$ from $\Im z >0$ and outside the lenses,
where the principal branch is taken for the log and
\begin{align}
& \Psi_{j,11} = \frac{\Gamma(1-\beta_{j})}{(s_{j}s_{j+1})^{\frac{1}{4}}}, \qquad \Psi_{j,12} = \frac{(s_{j}s_{j+1})^{\frac{1}{4}}}{\Gamma(\beta_{j})} \left( \log c_{x_{j}} - \frac{i\pi}{2} + \frac{\Gamma^{\prime}(1-\beta_{j})}{\Gamma(1-\beta_{j})}+2\gamma_{\mathrm{E}} \right), \nonumber \\
& \Psi_{j,21} = \frac{\Gamma(1+\beta_{j})}{(s_{j}s_{j+1})^{\frac{1}{4}}}, \qquad \Psi_{j,22} = \frac{-(s_{j}s_{j+1})^{\frac{1}{4}}}{\Gamma(-\beta_{j})} \left( \log c_{x_{j}} - \frac{i\pi}{2} + \frac{\Gamma^{\prime}(-\beta_{j})}{\Gamma(-\beta_{j})} + 2\gamma_{\mathrm{E}} \right), \label{Psi j entries}
\end{align}
and using $\Gamma(1+z)=z\Gamma(z)$ and \eqref{relation Gamma beta_j and s_j sp=0}, we verify that
\begin{equation}\label{Psi_j first column connection formula}
\Psi_{j,11}\Psi_{j,21} = -\beta_{j} \frac{2\pi i}{s_{j+1}-s_{j}}, \qquad j =0,\ldots,p-2,p+1,\ldots,m.
\end{equation}
From \eqref{Asymp of Phi near x_j}, \eqref{def of T sp=0}, \eqref{lol11 2} and \eqref{lol 2 sp=0}, we get
\begin{equation*}
G_{j}(rx_{j};r\vec{x},\vec{s}) = \begin{pmatrix}
\cos \Big( \frac{r}{2}(x_{p-1}+x_{p}) \Big) & -\sin \Big( \frac{r}{2}(x_{p-1}+x_{p}) \Big) \\
\sin \Big( \frac{r}{2}(x_{p-1}+x_{p}) \Big) & \cos \Big( \frac{r}{2}(x_{p-1}+x_{p}) \Big)
\end{pmatrix}R(x_{j})E_{x_{j}}(x_{j})\begin{pmatrix}
\Psi_{j,11} & \Psi_{j,12} \\ \Psi_{j,21} & \Psi_{j,22}
\end{pmatrix}.
\end{equation*}
Also, from \eqref{def of Pinf sp=0}, we see that
\begin{align}
\widehat{D}_{11} \partial_{s_{k}}\widehat{D}_{21}-\widehat{D}_{21} \partial_{s_{k}}\widehat{D}_{11} = i \partial_{s_{k}} \log D_{\infty}.
\end{align}
Therefore, using \eqref{eq: asymp R inf sp=0}, \eqref{eq: asymp der beta R inf sp=0} and $\det E_{x_{j}}(x_{j})=1$ in the definition of $K_{x_{j}}$ given by \eqref{K xj}, we obtain after a long calculation that
\begin{multline}\label{K xj part 1 asymp sp=0}
\hspace{-0.4cm}\sum_{\substack{j=0 \\ j \neq p-1,p}}^{m} K_{x_{j}} = \sum_{\substack{j=0 \\ j \neq p-1,p}}^{m} -\frac{s_{j+1}-s_j}{2\pi i} \Big( \Psi_{j,11} \partial_{s_{k}}\Psi_{j,21} - \Psi_{j,21}\partial_{s_{k}}\Psi_{j,11} \Big) - \sum_{\substack{j=0 \\ j \neq p-1,p}}^{m} 2\beta_{j}  \partial_{s_{k}} \log \Lambda_{j} + \frac{i}{2} \partial_{s_{k}} \log D_{\infty} \\
\times \sum_{\substack{j=0 \\ j \neq p-1,p}}^{m} - \frac{s_{j+1}-s_{j}}{2\pi i}\left( \frac{\sqrt{|x_{j}-x_{p}|}}{\sqrt{|x_{j}-x_{p-1}|}}(\Lambda_{j} \Psi_{j,11} + i \Lambda_{j}^{-1} \Psi_{j,21})^{2}-\frac{\sqrt{|x_{j}-x_{p-1}|}}{\sqrt{|x_{j}-x_{p}|}}(\Lambda_{j} \Psi_{j,11} - i \Lambda_{j}^{-1} \Psi_{j,21})^{2} \right) \\
 + \bigO \Big( \frac{\log r}{r} \Big), \qquad \mbox{ as } r \to + \infty.
\end{multline}
Using \eqref{relation Gamma beta_j and s_j sp=0}, \eqref{Psi j entries}, \eqref{Psi_j first column connection formula} and the definitions \eqref{def of Lambda tilde} of $\widetilde{\Lambda}_{j,1}$ and $\widetilde{\Lambda}_{j,2}$, for $j = 1,\ldots,p-2,p+1,\ldots,m$ we get
\begin{align*}
& - \frac{s_{j+1}-s_{j}}{2\pi i}(\Lambda_{j} \Psi_{j,11} + i \Lambda_{j}^{-1} \Psi_{j,21})^{2} = \beta_{j}^{2} (\widetilde{\Lambda}_{j,1}+\widetilde{\Lambda}_{j,2}) + 2 i \beta_{j}, \\
& - \frac{s_{j+1}-s_{j}}{2\pi i}(\Lambda_{j} \Psi_{j,11} - i \Lambda_{j}^{-1} \Psi_{j,21})^{2} = \beta_{j}^{2} (\widetilde{\Lambda}_{j,1}+\widetilde{\Lambda}_{j,2}) - 2 i \beta_{j}, \\
& -\frac{s_{j+1}-s_j}{2\pi i} \Big( \Psi_{j,11} \partial_{s_{k}}\Psi_{j,21} - \Psi_{j,21}\partial_{s_{k}}\Psi_{j,11} \Big) = \beta_{j} \partial_{s_{k}} \log \frac{\Gamma(1+\beta_{j})}{\Gamma(1-\beta_{j})}.
\end{align*}
Substituting the above identities in \eqref{K xj part 1 asymp sp=0}, we finally arrive at
\begin{multline}\label{K xj part 2 asymp sp=0}
\hspace{-0.4cm}\sum_{\substack{j=0 \\ j \neq p-1,p}}^{m} K_{x_{j}} = \sum_{\substack{j=0 \\ j \neq p-1,p}}^{m} \beta_{j} \partial_{s_{k}} \log \frac{\Gamma(1+\beta_{j})}{\Gamma(1-\beta_{j})} - \sum_{\substack{j=0 \\ j \neq p-1,p}}^{m} 2\beta_{j}  \partial_{s_{k}} \log \Lambda_{j} \\ + \frac{i}{2} \partial_{s_{k}} \log D_{\infty} \sum_{\substack{j=0 \\ j \neq p-1,p}}^{m} \beta_{j}^{2} (\widetilde{\Lambda}_{j,1}+\widetilde{\Lambda}_{j,2}) \left( \frac{\sqrt{|x_{j}-x_{p}|}}{\sqrt{|x_{j}-x_{p-1}|}} - \frac{\sqrt{|x_{j}-x_{p-1}|}}{\sqrt{|x_{j}-x_{p}|}} \right)  \\
+ \frac{i}{2} \partial_{s_{k}} \log D_{\infty} \sum_{\substack{j=0 \\ j \neq p-1,p}}^{m} 2 i \beta_{j} \left( \frac{\sqrt{|x_{j}-x_{p}|}}{\sqrt{|x_{j}-x_{p-1}|}} + \frac{\sqrt{|x_{j}-x_{p-1}|}}{\sqrt{|x_{j}-x_{p}|}} \right)  + \bigO \Big( \frac{\log r}{r} \Big),
\end{multline}
as $r \to + \infty$.
\paragraph{Asymptotics for $K_{x_{p}}$.} It follows from \eqref{def of S sp=0}, \eqref{def of R sp=0}, and \eqref{def of P^xp sp=0} that for $z \in \mathcal{D}_{x_{p}}$, $z$ outside the lenses, we have
\begin{equation}\label{expr for T near xp form 1 sp=0}
T(z) = R(z)E_{x_{p}}(z) \sigma_{3} \Phi_{\mathrm{Be}}(-r^{2}f_{x_{p}}(z))\sigma_{3} \sqrt{s_{p+1}}^{-\sigma_{3}}e^{-rg(z)\sigma_{3}}.
\end{equation}
Using \eqref{precise asymptotics of Phi Bessel near 0} and \eqref{expansion fxp at xp sp=0}, we obtain
\begin{equation*}
\sigma_{3}\Phi_{\mathrm{Be}}(-r^{2}f_{x_{p}}(z))\sqrt{s_{p+1}}^{-\sigma_{3}}\sigma_{3} = \begin{pmatrix}
\Psi_{p,11} & \Psi_{p,12} \\
\Psi_{p,21} & \Psi_{p,22}
\end{pmatrix}(I + \bigO(z-x_{p})) \begin{pmatrix}
1 & -\frac{s_{p+1}}{2\pi i} \log (r(z-x_{p})) \\
0 & 1 
\end{pmatrix}
\end{equation*}
as $z \to x_{p}$ from $\Im z > 0$ and outside the lenses, where
\begin{align}
& \Psi_{p,11} = s_{p+1}^{-1/2}, \qquad \quad \Psi_{p,12} = -s_{p+1}^{1/2}\Big(\frac{\gamma_{\mathrm{E}}}{\pi i} + \frac{\log(c_{x_{p}}^{2}r)-\pi i}{2\pi i}\Big), \nonumber \\
& \Psi_{p,21} = 0, \qquad \qquad \hspace{0.3cm} \Psi_{p,22} = s_{p+1}^{1/2}. \label{Psi p entries}
\end{align}
On the other hand, using \eqref{Asymp of Phi near x_j} and \eqref{def of T sp=0}, as $z \to x_{p}$, $\Im z >0$, we also have
\begin{equation}\label{expr for T near xp form 2 sp=0}
T(z) = \begin{pmatrix}
\cos \Big( \frac{r}{2}(x_{p-1}+x_{p}) \Big) & \sin \Big( \frac{r}{2}(x_{p-1}+x_{p}) \Big) \\
- \sin \Big( \frac{r}{2}(x_{p-1}+x_{p}) \Big) & \cos \Big( \frac{r}{2}(x_{p-1}+x_{p}) \Big)
\end{pmatrix} \hspace{-0.1cm} G_{p}(rz; r \vec{x},\vec{s}) \begin{pmatrix}
1 & \hspace{-0.1cm}- \frac{s_{p+1}}{2\pi i} \log(r(z-x_{p})) \\
0 & \hspace{-0.1cm}1
\end{pmatrix}
e^{-rg(z)\sigma_{3}}.
\end{equation}
By combining \eqref{expr for T near xp form 1 sp=0} with \eqref{expr for T near xp form 2 sp=0}, we arrive at
\begin{equation*}
G_{p}(rx_{p};r\vec{x},\vec{s}) = \begin{pmatrix}
\cos \Big( \frac{r}{2}(x_{p-1}+x_{p}) \Big) & -\sin \Big( \frac{r}{2}(x_{p-1}+x_{p}) \Big) \\
\sin \Big( \frac{r}{2}(x_{p-1}+x_{p}) \Big) & \cos \Big( \frac{r}{2}(x_{p-1}+x_{p}) \Big)
\end{pmatrix}R(x_{p})E_{x_{p}}(x_{p})\begin{pmatrix}
\Psi_{p,11} & \Psi_{p,12} \\
\Psi_{p,21} & \Psi_{p,22}
\end{pmatrix}.
\end{equation*}
From the definition \eqref{K xj} of $K_{x_{p}}$ and the explicit expressions for $E_{x_{p}}(x_{p})$ and $R^{(1)}(x_{p})$ given by \eqref{E0 at 0 sp=0} and \eqref{expression for R^1 at xp sp=0}, we find after a computation that
\begin{multline}\label{K xp asymp sp=0}
K_{x_{p}} = - \frac{ir}{2}(x_{p}-x_{p-1})\partial_{s_{k}} \log D_{\infty}  + \partial_{s_{k}} \log D_{\infty} \bigg( d_{x_{p}} + \sum_{\substack{j=0 \\ j \neq p-1,p}}^{m} \frac{x_{p}-x_{p-1}}{2ic_{x_{j}}(x_{p}-x_{j})}\beta_{j}^{2}(\widetilde{\Lambda}_{j,1}+\widetilde{\Lambda}_{j,2}) \bigg) \\
+ \sum_{\substack{j=0 \\ j \neq p-1,p}}^{m} \frac{\sqrt{|x_{j}-x_{p}|}(x_{p}-x_{p-1})}{4i \sqrt{|x_{j}-x_{p-1}|}(x_{p}-x_{j})c_{x_{j}}}\partial_{s_{k}}\big( \beta_{j}^{2}(\widetilde{\Lambda}_{j,1}-\widetilde{\Lambda}_{j,2}+2i) \big)
+ \bigO\Big( \frac{\log r}{r} \Big)
\end{multline}
as $r \to + \infty$.
\paragraph{Asymptotics for $K_{x_{p-1}}$.} For $z$ outside the lenses and inside $\mathcal{D}_{x_{p-1}}$, we deduce from \eqref{def of S sp=0}, \eqref{def of P^xp-1 sp=0} and \eqref{def of R sp=0} that
\begin{equation}\label{expr for T near xp-1 form 1 sp=0}
T(z) = R(z)E_{x_{p-1}}(z) \Phi_{\mathrm{Be}}(r^{2}f_{x_{p-1}}(z)) \sqrt{s_{p-1}}^{-\sigma_{3}}e^{-rg(z)\sigma_{3}}.
\end{equation}
Also, using \eqref{precise asymptotics of Phi Bessel near 0} and \eqref{expansion fxp-1 at xp-1 sp=0}, we get
\begin{equation*}
\Phi_{\mathrm{Be}}(r^{2}f_{x_{p-1}}(z))\sqrt{s_{p-1}}^{-\sigma_{3}} = \begin{pmatrix}
\Psi_{p-1,11} & \Psi_{p-1,12} \\
\Psi_{p-1,21} & \Psi_{p-1,22}
\end{pmatrix}(I + \bigO(z-x_{p-1})) \begin{pmatrix}
1 & \frac{s_{p-1}}{2\pi i} \log (r(z-x_{p-1})) \\
0 & 1 
\end{pmatrix},
\end{equation*}
as $z \to x_{p-1}$ from $\Im z > 0$ and outside the lenses, where
\begin{align}
& \Psi_{p-1,11} = s_{p-1}^{-1/2}, \qquad \quad \Psi_{p-1,12} = s_{p-1}^{1/2}\Big(\frac{\gamma_{\mathrm{E}}}{\pi i} + \frac{\log(c_{x_{p-1}}^{2}r)}{2\pi i}\Big), \nonumber \\
& \Psi_{p-1,21} = 0, \qquad \qquad \hspace{0.3cm} \Psi_{p-1,22} = s_{p-1}^{1/2}. \label{Psi p-1 entries}
\end{align}
On the other hand, using \eqref{Asymp of Phi near x_j} and \eqref{def of T sp=0}, as $z \to x_{p-1}$ from $\Im z >0$ we also have that
\begin{multline}
T(z) = \begin{pmatrix}
\cos \Big( \frac{r}{2}(x_{p-1}+x_{p}) \Big) & \sin \Big( \frac{r}{2}(x_{p-1}+x_{p}) \Big) \\
- \sin \Big( \frac{r}{2}(x_{p-1}+x_{p}) \Big) & \cos \Big( \frac{r}{2}(x_{p-1}+x_{p}) \Big)
\end{pmatrix} \nonumber \\
 \times  G_{p-1}(rz; r \vec{x},\vec{s}) \begin{pmatrix}
1 & \hspace{-0.1cm}\frac{s_{p-1}}{2\pi i} \log(r(z-x_{p-1})) \\
0 & \hspace{-0.1cm}1
\end{pmatrix}
e^{-rg(z)\sigma_{3}}. \label{expr for T near xp-1 form 2 sp=0}
\end{multline}
Combining \eqref{expr for T near xp-1 form 1 sp=0} with \eqref{expr for T near xp-1 form 2 sp=0}, we arrive at
\begin{multline}
G_{p-1}(rx_{p-1};r\vec{x},\vec{s}) = \begin{pmatrix}
\cos \Big( \frac{r}{2}(x_{p-1}+x_{p}) \Big) & -\sin \Big( \frac{r}{2}(x_{p-1}+x_{p}) \Big) \\
\sin \Big( \frac{r}{2}(x_{p-1}+x_{p}) \Big) & \cos \Big( \frac{r}{2}(x_{p-1}+x_{p}) \Big)
\end{pmatrix} \\ \times R(x_{p-1})E_{x_{p-1}}(x_{p-1})\begin{pmatrix}
\Psi_{p-1,11} & \Psi_{p-1,12} \\
\Psi_{p-1,21} & \Psi_{p-1,22}
\end{pmatrix}.
\end{multline}
Using \eqref{K xj} and the explicit expressions for $E_{x_{p-1}}(x_{p-1})$ and $R^{(1)}(x_{p-1})$ given by \eqref{E xp-1 at xp-1 sp=0} and \eqref{expression for R^1 at xp-1 sp=0}, we find after a computation that
\begin{multline}\label{K xp-1 asymp sp=0}
K_{x_{p-1}} = \frac{ir}{2}(x_{p}-x_{p-1})\partial_{s_{k}} \log D_{\infty} + \partial_{s_{k}} \log D_{\infty} \bigg( d_{x_{p-1}} + \sum_{\substack{j=0 \\ j \neq p-1,p}}^{m} \frac{x_{p}-x_{p-1}}{2ic_{x_{j}}(x_{p-1}-x_{j})}\beta_{j}^{2}(\widetilde{\Lambda}_{j,1}+\widetilde{\Lambda}_{j,2}) \bigg) \\
+ \sum_{\substack{j=0 \\ j \neq p-1,p}}^{m} \frac{\sqrt{|x_{j}-x_{p-1}|}(x_{p}-x_{p-1})}{4i \sqrt{|x_{j}-x_{p}|}(x_{p-1}-x_{j})c_{x_{j}}}\partial_{s_{k}}\big( \beta_{j}^{2}(\widetilde{\Lambda}_{j,1}-\widetilde{\Lambda}_{j,2}-2i) \big)
+ \bigO\Big( \frac{\log r}{r} \Big)
\end{multline}
as $r \to + \infty$.
\paragraph{Asymptotics for $\partial_{s_{k}} \log F(r\vec{x},\vec{s})$.} After substituting the explicit expression \eqref{expansion conformal map xj sp=0} for $c_{x_{j}}$, $j = 0,\ldots,p-2,p+1,\ldots,m$ into \eqref{K inf asymp sp=0}, \eqref{K xj part 1 asymp sp=0}, \eqref{K xp asymp sp=0} and \eqref{K xp-1 asymp sp=0} and simplifying, we obtain the following asymptotics as $r \to + \infty$:
\begin{align*}
& K_{\infty} = -2 i \partial_{s_{k}}d_{1} \, r + \sum_{\substack{j=0 \\ j \neq p-1,p}}^{m} \left( \frac{1}{2i}\frac{|x_{j}-x_{p-1}|-|x_{j}-x_{p}|}{|x_{p-1}+x_{p}-2x_{j}|} \partial_{s_{k}} \big( \beta_{j}^{2}(\widetilde{\Lambda}_{j,1}-\widetilde{\Lambda}_{j,2}) \big) - \partial_{s_{k}}(\beta_{j}^{2}) \right) + \bigO \Big( \frac{\log r}{r} \Big), \\
& \sum_{\substack{j=0 \\ j \neq p-1,p}}^{m} \hspace{-0.25cm} K_{x_{j}} = - \partial_{s_{k}} \log D_{\infty} \hspace{-0.25cm}\sum_{\substack{j=0 \\ j \neq p-1,p}}^{m} \hspace{-0.25cm} \left( \frac{|x_{j}-x_{p}|-|x_{j}-x_{p-1}|}{\sqrt{|x_{j}-x_{p}| \, |x_{j}-x_{p-1}|}} \frac{\beta_{j}^{2}}{2i} (\widetilde{\Lambda}_{j,1}+\widetilde{\Lambda}_{j,2}) + \beta_{j}\frac{|x_{j}-x_{p}|+|x_{j}-x_{p-1}|}{\sqrt{|x_{j}-x_{p}| \, |x_{j}-x_{p-1}|}}  \right)  \\
& \hspace{2cm} + \sum_{\substack{j=0 \\ j \neq p-1,p}}^{m} \left( \beta_{j} \partial_{s_{k}} \log \frac{\Gamma(1+\beta_{j})}{\Gamma(1- \beta_{j}} - 2 \beta_{j} \partial_{s_{k}} \log \Lambda_{j} \right) + \bigO \Big( \frac{\log r}{r} \Big), \\
& K_{x_{p}} + K_{x_{p-1}} = \partial_{s_{k}} \log D_{\infty} \Bigg( d_{x_{p-1}}+d_{x_{p}} + \sum_{\substack{j=0 \\ j \neq p-1,p}}^{m} \hspace{-0.25cm}  \frac{|x_{j}-x_{p}|-|x_{j}-x_{p-1}|}{\sqrt{|x_{j}-x_{p}| \, |x_{j}-x_{p-1}|}} \frac{\beta_{j}^{2}}{2i} (\widetilde{\Lambda}_{j,1}+\widetilde{\Lambda}_{j,2}) \Bigg) \\
& \hspace{2.5cm}+ \sum_{\substack{j=0 \\ j \neq p-1,p}}^{m}\frac{x_{p}-x_{p-1}}{2i(x_{p}+x_{p-1}-2x_{j})}\partial_{s_{k}}\big( \beta_{j}^{2}(\widetilde{\Lambda}_{j,1}-\widetilde{\Lambda}_{j,2}) \big) + \bigO \Big( \frac{\log r}{r} \Big).
\end{align*}
Also, from the definitions of $d_{x_{p}}$ and $d_{x_{p-1}}$ given by \eqref{def of dxp} and \eqref{def of dxp-1}, we have
\begin{equation}
d_{x_{p-1}}+d_{x_{p}} = \sum_{\substack{j=0 \\ j \neq p-1,p}}^{m} \beta_{j}\frac{|x_{j}-x_{p}|+|x_{j}-x_{p-1}|}{\sqrt{|x_{j}-x_{p}| \, |x_{j}-x_{p-1}|}}.
\end{equation}
Therefore, summing the above asymptotics and simplifying, we obtain
\begin{multline}\label{lol12}
\partial_{s_{k}} \log F(r\vec{x},\vec{s}) = -2 i \partial_{s_{k}}d_{1} \, r + \hspace{-0.15cm} \sum_{\substack{j=0 \\ j \neq p-1,p}}^{m} \hspace{-0.15cm} \left( \beta_{j} \partial_{s_{k}} \log \frac{\Gamma(1+\beta_{j})}{\Gamma(1- \beta_{j}} - 2 \beta_{j} \partial_{s_{k}} \log \Lambda_{j} - \partial_{s_{k}}(\beta_{j}^{2}) \right) + \bigO\Big( \frac{\log r}{r} \Big),
\end{multline} \\[-0.35cm]
as $r \to + \infty$. We also note from \eqref{def of Lambda sp=0} that
\begin{equation}\label{lol13}
\partial_{s_{k}} \log \Lambda_{j} = \partial_{s_{k}}(\beta_{j}) \log \left( \frac{4 \sqrt{|x_{j}-x_{p}| \, |x_{j}-x_{p-1}|}|2x_{j}-x_{p}-x_{p-1}|r}{x_{p}-x_{p-1}} \right) + \sum_{\substack{\ell = 0 \\ \ell \neq j,p-1,p}}^{m} \partial_{s_{k}}(\beta_{\ell})\log T_{\ell,j}.
\end{equation}\\[-0.35cm]
It is clear from \eqref{def of beta sp=0} that there is a one-to-one correspondence between 
\begin{align*}
\vec{s}=(s_{1},\ldots,s_{p-1},0,s_{p+1},\ldots,s_{m}) \in (\mathbb{R}^{+})^{p-1}\times \{0\}\times (\mathbb{R}^{+})^{m-p} 
\end{align*}
and $\vec{\beta} := (\beta_{0},\ldots,\beta_{p-2},\beta_{p+1},\ldots,\beta_{m}) \in (i\mathbb{R})^{m-1}$. Let us define $\widetilde{F}(r\vec{x},\vec{\beta}) := F(r \vec{x},\vec{s})$. By substituting \eqref{lol13} in \eqref{lol12} and then writing the derivatives with respect to $\beta_{k}$ instead of $s_{k}$, we obtain
\begin{multline}\label{diff identity asymptotics sp=0}
\partial_{\beta_{k}} \log \widetilde{F}(r\vec{x},\vec{\beta}) = -2 i \partial_{\beta_{k}}d_{1} \, r + \beta_{k} \partial_{\beta_{k}} \log \frac{\Gamma(1+\beta_{k})}{\Gamma(1- \beta_{k}} - 2 \beta_{k} - \sum_{\substack{j=0 \\ j \neq k,p-1,p}}^{m} 2 \beta_{j} \log T_{k,j} \\ -2 \beta_{k} \log \left( \frac{4 \sqrt{|x_{k}-x_{p}| \, |x_{k}-x_{p-1}|}|2x_{k}-x_{p}-x_{p-1}|r}{x_{p}-x_{p-1}} \right)
+ \bigO\Big( \frac{\log r}{r} \Big), \quad \mbox{as } r \to + \infty.
\end{multline}
It follows from the analysis of Subsection \ref{subsection Small norm sp=0} that the asymptotics \eqref{diff identity asymptotics sp=0} are valid uniformly for $\beta_{0},\ldots,\beta_{p-2},\beta_{p+1},\ldots,\beta_{m}$ in compact subsets of $i \mathbb{R}$, and uniformly in $x_{0},\ldots,x_{m}$ in compact subsets of $\mathbb{R}$ such that \eqref{assumption xj delta sp=0} holds.
\subsection{Integration of the differential identity}
For convenience, we define $\vec{\beta}_{j} \in (i \mathbb{R})^{m-1}$ by
\begin{align*}
\vec{\beta}_{j} = \begin{cases}
(\beta_{0},\ldots,\beta_{j},0,\ldots,0), & \mbox{if } j \in \{0,\ldots,p-2\}, \\
(\beta_{0},\ldots,\beta_{p-2},\beta_{p+1},\ldots,\beta_{j},0,\ldots,0), & \mbox{if } j \in \{p+1,\ldots,m\}.
\end{cases}
\end{align*}
For $k=0$ and $\beta_{1}=\ldots=\beta_{p-2}=\beta_{p+1}=\ldots=\beta_{m} = 0$, the asymptotics \eqref{diff identity asymptotics sp=0} are as follows:
\begin{multline}\label{asymp diff k=0 sp=0}
\partial_{\beta_{0}} \log \widetilde{F}(r\vec{x},\vec{\beta}_{0}) = -2 i \sqrt{x_{p}-x_{0}}\sqrt{x_{p-1}-x_{0}} \, r + \beta_{0} \partial_{\beta_{0}} \log \frac{\Gamma(1+\beta_{0})}{\Gamma(1- \beta_{0}} - 2 \beta_{0} \\ -2 \beta_{0} \log \left( \frac{4 \sqrt{|x_{p}-x_{0}| \, |x_{p-1}-x_{0}|}(x_{p}+x_{p-1}-2x_{0})r}{x_{p}-x_{p-1}} \right)
+ \bigO\Big( \frac{\log r}{r} \Big), \quad \mbox{as } r \to + \infty,
\end{multline}
where we have used the definition \eqref{def of d1} of $d_{1}$. Since these asymptotics are uniform for $\beta_{0}$ in compact subsets of $i \mathbb{R}$, we can integrate \eqref{asymp diff k=0 sp=0} from $\beta_{0} = 0$ to an arbitrary $\beta_{0} \in i \mathbb{R}$ without worsening the order of the error term. Recalling from \eqref{final expression for I ell} that
\begin{equation}\label{int gamma k=0 sp=0}
\int_{0}^{\beta_{0}} x \partial_{x} \log \frac{\Gamma(1+x)}{\Gamma(1-x)}dx = \beta_{0}^{2} + \log \big( G(1+\beta_{0})G(1-\beta_{0}) \big),
\end{equation}
an integration of \eqref{asymp diff k=0 sp=0} yields
\begin{multline}
\log \frac{\widetilde{F}(r\vec{x},\vec{\beta}_{0})}{\widetilde{F}(r\vec{x},\vec{0})} = -2 i \beta_{0} \sqrt{x_{p}-x_{0}}\sqrt{x_{p-1}-x_{0}} \, r + \log \big( G(1+\beta_{0})G(1-\beta_{0}) \big) \\ - \beta_{0}^{2} \log \left( \frac{4 \sqrt{|x_{p}-x_{0}| \, |x_{p-1}-x_{0}|}(x_{p}+x_{p-1}-2x_{0})r}{x_{p}-x_{p-1}} \right)
+ \bigO\Big( \frac{\log r}{r} \Big), \quad \mbox{as } r \to +\infty,
\end{multline}
where $\vec{0} = (0,\ldots,0)$.
In a similar way, we integrate successively in the variables $\beta_{1},\ldots,\beta_{p-2}$. At the last step, we use \eqref{diff identity asymptotics sp=0} with $k=p-2$, and with $\beta_{0},\ldots,\beta_{p-3}$ fixed but arbitrary:
\begin{multline}\label{asymp diff k=p-2 sp=0}
\partial_{\beta_{p-2}} \log \widetilde{F}(r\vec{x},\vec{\beta}_{p-2}) = -2 i \sqrt{x_{p}-x_{p-2}}\sqrt{x_{p-1}-x_{p-2}} \, r + \beta_{p-2} \partial_{\beta_{p-2}} \log \frac{\Gamma(1+\beta_{p-2})}{\Gamma(1- \beta_{p-2})} - 2 \beta_{p-2} \\ - \sum_{\substack{j=0}}^{p-3} 2 \beta_{j} \log T_{p-2,j}-2 \beta_{p-2} \log \left( \frac{4 \sqrt{|x_{p}-x_{p-2}| \, |x_{p-1}-x_{p-2}|}(x_{p}+x_{p-1}-2x_{p-2})r}{x_{p}-x_{p-1}} \right)
+ \bigO\Big( \frac{\log r}{r} \Big),
\end{multline}
as $r \to + \infty$. Since the above asymptotics are uniform for $\beta_{p-2}$ in compact subsets of $i\mathbb{R}$, an integration over $\beta_{p-2}$ from $\beta_{p-2} = 0$ to an arbitrary $\beta_{p-2} \in i \mathbb{R}$ let the order of the error term unchanged, and using again the formula \eqref{int gamma k=0 sp=0} (with $\beta_{0}$ now replaced by $\beta_{p-2}$), we obtain 
\begin{multline}\label{asymp diff k=p-2 int sp=0}
\log \frac{\widetilde{F}(r\vec{x},\vec{\beta}_{p-2})}{\widetilde{F}(r\vec{x},\vec{\beta}_{p-3})} = -2 i \beta_{p-2} \sqrt{x_{p}-x_{p-2}}\sqrt{x_{p-1}-x_{p-2}} \, r + \log \big( G(1+\beta_{p-2})G(1-\beta_{p-2}) \big) \\ - \sum_{\substack{j=0}}^{p-3} 2 \beta_{j}\beta_{p-2} \log T_{p-2,j}-\beta_{p-2}^{2} \log \left( \frac{4 \sqrt{|x_{p}-x_{p-2}| \, |x_{p-1}-x_{p-2}|}(x_{p}+x_{p-1}-2x_{p-2})r}{x_{p}-x_{p-1}} \right)
+ \bigO\Big( \frac{\log r}{r} \Big),
\end{multline}
as $r \to + \infty$. The successive integrations in $\beta_{p+1},\ldots,\beta_{m}$ can be done similarly. At the last step, we use \eqref{diff identity asymptotics sp=0} with $k=m$ and $\vec{\beta}_{m-1}$ arbitrary but fixed:
\begin{multline}\label{asymp diff k=m sp=0}
\partial_{\beta_{m}} \log \widetilde{F}(r\vec{x},\vec{\beta}_{m}) = 2 i \sqrt{x_{m}-x_{p}}\sqrt{x_{m}-x_{p-1}} \, r + \beta_{m} \partial_{\beta_{m}} \log \frac{\Gamma(1+\beta_{m})}{\Gamma(1- \beta_{m}} - 2 \beta_{m} \\ - \sum_{\substack{j=0 \\ j \neq p-1,p}}^{m-1} 2 \beta_{j} \log T_{m,j}-2 \beta_{m} \log \left( \frac{4 \sqrt{|x_{p}-x_{m}| \, |x_{p-1}-x_{m}|}|x_{p}+x_{p-1}-2x_{m}|r}{x_{p}-x_{p-1}} \right)
+ \bigO\Big( \frac{\log r}{r} \Big),
\end{multline}
as $r \to + \infty$. After an integration of the above asymptotics from $\beta_{m} = 0$ to an arbitrary $\beta_{m} \in i \mathbb{R}$, we obtain an asymptotic formula similar to \eqref{asymp diff k=p-2 int sp=0}. Finally, summing all the successive asymptotic formulas for the ratios
\begin{equation*}
\log \frac{\widetilde{F}(r\vec{x},\vec{\beta}_{0})}{\widetilde{F}(r\vec{x},\vec{0})}, \, \log \frac{\widetilde{F}(r\vec{x},\vec{\beta}_{1})}{\widetilde{F}(r\vec{x},\vec{\beta}_{0})}, \ldots, \log \frac{\widetilde{F}(r\vec{x},\vec{\beta}_{p-2})}{\widetilde{F}(r\vec{x},\vec{\beta}_{p-3})}, \log \frac{\widetilde{F}(r\vec{x},\vec{\beta}_{p+1})}{\widetilde{F}(r\vec{x},\vec{\beta}_{p-2})}, \ldots, \log \frac{\widetilde{F}(r\vec{x},\vec{\beta})}{\widetilde{F}(r\vec{x},\vec{\beta}_{m-1})},
\end{equation*}
we obtain
\begin{multline}\label{lol14}
\log \frac{\widetilde{F}(r\vec{x},\vec{\beta})}{\widetilde{F}(r\vec{x},\vec{0})} = -2 i d_{1} \, r + \sum_{\substack{j=0 \\ j \neq p-1,p}}^{m} \log \big( G(1+\beta_{j})G(1-\beta_{j}) \big) - 2 \sum_{\substack{0 \leq j < k \leq m \\ j,k \neq p-1,p}} \beta_{j} \beta_{k} \log T_{k,j} \\ - \sum_{\substack{j=0 \\ j \neq p-1,p}}^{m} \beta_{j}^{2} \log \left( \frac{4 \sqrt{|x_{j}-x_{p}| \, |x_{j}-x_{p-1}|}|2x_{j}-x_{p}-x_{p-1}|r}{x_{p}-x_{p-1}} \right)
+ \bigO\Big( \frac{\log r}{r} \Big)
\end{multline}
as $r \to + \infty$. Note from \eqref{def of beta in thm sp=0} and \eqref{def of beta sp=0} that $u_{j} = 2\pi i \beta_{j}$. We obtain \eqref{explicit asymp for F in the case where sp=0 in thm} after substituting in \eqref{lol14} the known large $r$ asymptotics of $\widetilde{F}(r\vec{x},\vec{0}) = F((rx_{p-1},rx_{p}),0)$ given by \eqref{known result m=1 s1 = 0}. This finishes the proof of Theorem \ref{thm:sp=0}.
\appendix
\section{Model RH problems}\label{Section:Appendix}
In this section, we recall two well-known RH problems.
\subsection{Bessel model RH problem}\label{subsection:Model Bessel}
\begin{itemize}
\item[(a)] $\Phi_{\mathrm{Be}} : \mathbb{C} \setminus \Sigma_{\mathrm{Be}} \to \mathbb{C}^{2\times 2}$ is analytic, where
$\Sigma_{\mathrm{Be}}$ is shown in Figure \ref{figBessel}.
\item[(b)] $\Phi_{\mathrm{Be}}$ satisfies the jump conditions
\begin{equation}\label{Jump for P_Be}
\begin{array}{l l} 
\Phi_{\mathrm{Be},+}(z) = \Phi_{\mathrm{Be},-}(z) \begin{pmatrix}
0 & 1 \\ -1 & 0
\end{pmatrix}, & z \in \mathbb{R}^{-}, \\

\Phi_{\mathrm{Be},+}(z) = \Phi_{\mathrm{Be},-}(z) \begin{pmatrix}
1 & 0 \\ 1 & 1
\end{pmatrix}, & z \in e^{ \frac{2\pi i}{3} }  \mathbb{R}^{+}, \\

\Phi_{\mathrm{Be},+}(z) = \Phi_{\mathrm{Be},-}(z) \begin{pmatrix}
1 & 0 \\ 1 & 1
\end{pmatrix}, & z \in e^{ -\frac{2\pi i}{3} }  \mathbb{R}^{+}. \\
\end{array}
\end{equation}
\item[(c)] As $z \to \infty$, $z \notin \Sigma_{\mathrm{Be}}$, we have
\begin{equation}\label{large z asymptotics Bessel}
\Phi_{\mathrm{Be}}(z) = ( 2\pi z^{\frac{1}{2}} )^{-\frac{\sigma_{3}}{2}}N
\left(I+\frac{ \Phi_{\mathrm{Be},1}}{z^{\frac{1}{2}}} + \bigO(z^{-1})\right) e^{2 z^{\frac{1}{2}}\sigma_{3}},
\end{equation}
where $\ds \Phi_{\mathrm{Be},1} = \frac{1}{16}\begin{pmatrix}
-1 & -2i \\ -2i & 1
\end{pmatrix}$.
\item[(d)] As $z$ tends to 0, the behavior of $\Phi_{\mathrm{Be}}(z)$ is
\begin{equation}\label{local behavior near 0 of P_Be}
\Phi_{\mathrm{Be}}(z) = \left\{ \begin{array}{l l}
\begin{pmatrix}
\bigO(1) & \bigO(\log z) \\
\bigO(1) & \bigO(\log z) 
\end{pmatrix}, & |\arg z| < \frac{2\pi}{3}, \\
\begin{pmatrix}
\bigO(\log z) & \bigO(\log z) \\
\bigO(\log z) & \bigO(\log z) 
\end{pmatrix}, & \frac{2\pi}{3}< |\arg z| < \pi.
\end{array}  \right.
\end{equation}
\end{itemize}
\begin{figure}[t]
    \begin{center}
    \setlength{\unitlength}{1truemm}
    \begin{picture}(100,55)(-5,10)
        \put(50,40){\line(-1,0){30}}
        \put(50,39.8){\thicklines\circle*{1.2}}
        \put(50,40){\line(-0.5,0.866){15}}
        \put(50,40){\line(-0.5,-0.866){15}}
        \put(50.3,36.8){$0$}
        \put(35,39.9){\thicklines\vector(1,0){.0001}}
        \put(41,55.588){\thicklines\vector(0.5,-0.866){.0001}}
        \put(41,24.412){\thicklines\vector(0.5,0.866){.0001}}
    \end{picture}
    \caption{\label{figBessel}The jump contour $\Sigma_{\mathrm{Be}}$ for $\Phi_{\mathrm{Be}}$.}
\end{center}
\end{figure}
The unique solution to the above RH problem was obtained in \cite{KMcLVAV} and is given by 
\begin{equation}\label{Psi explicit}
\Phi_{\mathrm{Be}}(z)=
\begin{cases}
\begin{pmatrix}
I_{0}(2 z^{\frac{1}{2}}) & \frac{ i}{\pi} K_{0}(2 z^{\frac{1}{2}}) \\
2\pi i z^{\frac{1}{2}} I_{0}^{\prime}(2 z^{\frac{1}{2}}) & -2 z^{\frac{1}{2}} K_{0}^{\prime}(2 z^{\frac{1}{2}})
\end{pmatrix}, & |\arg z | < \frac{2\pi}{3}, \\

\begin{pmatrix}
\frac{1}{2} H_{0}^{(1)}(2(-z)^{\frac{1}{2}}) & \frac{1}{2} H_{0}^{(2)}(2(-z)^{\frac{1}{2}}) \\
\pi z^{\frac{1}{2}} \left( H_{0}^{(1)} \right)^{\prime} (2(-z)^{\frac{1}{2}}) & \pi z^{\frac{1}{2}} \left( H_{0}^{(2)} \right)^{\prime} (2(-z)^{\frac{1}{2}})
\end{pmatrix}, & \frac{2\pi}{3} < \arg z < \pi, \\

\begin{pmatrix}
\frac{1}{2} H_{0}^{(2)}(2(-z)^{\frac{1}{2}}) & -\frac{1}{2} H_{0}^{(1)}(2(-z)^{\frac{1}{2}}) \\
-\pi z^{\frac{1}{2}} \left( H_{0}^{(2)} \right)^{\prime} (2(-z)^{\frac{1}{2}}) & \pi z^{\frac{1}{2}} \left( H_{0}^{(1)} \right)^{\prime} (2(-z)^{\frac{1}{2}})
\end{pmatrix}, & -\pi < \arg z < -\frac{2\pi}{3},
\end{cases}
\end{equation}
where $H_{0}^{(1)}$ and $H_{0}^{(2)}$ are the Hankel functions of the first and second kind (of order $0$), and $I_0$ and $K_0$ are the modified Bessel functions of the first and second kind.

\vspace{0.2cm}\hspace{-0.55cm}It is easy to see from the properties (b) and (d) of the RH problem for $\Phi_{\mathrm{Be}}$ that in a neighborhood of $0$, we have
\begin{equation}\label{precise asymptotics of Phi Bessel near 0}
\Phi_{\mathrm{Be}}(z) = \Phi_{\mathrm{Be},0}(z)\begin{pmatrix}
1 & \frac{1}{2\pi i}\log z \\ 0 & 1
\end{pmatrix} \widetilde{H}_{0}(z),
\end{equation}
where $\Phi_{\mathrm{Be},0}$ is analytic in a neighborhood of $0$ and $\widetilde{H}_{0}$ is given by 
\begin{equation}\label{def of H}
\widetilde{H}_{0}(z) = \left\{  \begin{array}{l l}

I, & \mbox{ for } -\frac{2\pi}{3}< \arg(z)< \frac{2\pi}{3},\\

\begin{pmatrix}
1 & 0 \\
-1 & 1 \\
\end{pmatrix}, & \mbox{ for } \frac{2\pi}{3}< \arg(z)< \pi, \\

\begin{pmatrix}
1 & 0 \\
1 & 1 \\
\end{pmatrix}, & \mbox{ for } -\pi< \arg(z)< -\frac{2\pi}{3}.
\end{array} \right.
\end{equation}
Using the asymptotics of the Bessel functions near the origin (see e.g. \cite[Chapter 10.30(i)]{NIST}), we obtain after a computation that
\begin{equation}\label{precise matrix at 0 in asymptotics of Phi Bessel near 0}
\Phi_{\mathrm{Be},0}(0) = 
\begin{pmatrix}
1 & \frac{\gamma_{\mathrm{E}}}{\pi i} \\ 0 & 1
\end{pmatrix},
\end{equation}
where $\gamma_{\mathrm{E}}$ is Euler's gamma constant.
\subsection{Confluent hypergeometric model RH problem}\label{subsection: model RHP with HG functions}
\begin{itemize}
\item[(a)] $\Phi_{\mathrm{HG}} : \mathbb{C} \setminus \Sigma_{\mathrm{HG}} \rightarrow \mathbb{C}^{2 \times 2}$ is analytic, where $\Sigma_{\mathrm{HG}}$ is shown in Figure \ref{Fig:HG}.
\item[(b)] For $z \in \Gamma_{k}$ (see Figure \ref{Fig:HG}), $k = 1,\ldots,6$, $\Phi_{\mathrm{HG}}$ satisfies the jump relations
\begin{equation}\label{jumps PHG3}
\Phi_{\mathrm{HG},+}(z) = \Phi_{\mathrm{HG},-}(z)J_{k},
\end{equation}
where
\begin{align*}
& J_{1} = \begin{pmatrix}
0 & e^{-i\pi \beta} \\ -e^{i\pi\beta} & 0
\end{pmatrix}, \quad J_{4} = \begin{pmatrix}
0 & e^{i\pi\beta} \\ -e^{-i\pi\beta} & 0
\end{pmatrix}, \\
& J_{2} = \begin{pmatrix}
1 & 0 \\ e^{i\pi\beta} & 1
\end{pmatrix}\hspace{-0.1cm}, \hspace{-0.3cm} \quad J_{3} = \begin{pmatrix}
1 & 0 \\ e^{-i\pi\beta} & 1
\end{pmatrix}\hspace{-0.1cm}, \hspace{-0.3cm} \quad J_{5} = \begin{pmatrix}
1 & 0 \\ e^{-i\pi\beta} & 1
\end{pmatrix}\hspace{-0.1cm}, \hspace{-0.3cm} \quad J_{6} = \begin{pmatrix}
1 & 0 \\ e^{i\pi\beta} & 1
\end{pmatrix}.
\end{align*}
\item[(c)] As $z \to \infty$, $z \notin \Sigma_{\mathrm{HG}}$, we have
\begin{equation}\label{Asymptotics HG}
\Phi_{\mathrm{HG}}(z) = \left( I +  \frac{\Phi_{\mathrm{HG},1}(\beta)}{z} + \bigO(z^{-2}) \right) z^{-\beta\sigma_{3}}e^{-\frac{z}{2}\sigma_{3}}\left\{ \begin{array}{l l}
\displaystyle e^{i\pi\beta  \sigma_{3}}, & \displaystyle \frac{\pi}{2} < \arg z <  \frac{3\pi}{2}, \\
\begin{pmatrix}
0 & -1 \\ 1 & 0
\end{pmatrix}, & \displaystyle -\frac{\pi}{2} < \arg z < \frac{\pi}{2},
\end{array} \right.
\end{equation}
where $z^{\beta} = |z|^{\beta}e^{i\beta \arg z}$ with $\arg z \in (-\frac{\pi}{2},\frac{3\pi}{3})$ and
\begin{equation}\label{def of tau}
\Phi_{\mathrm{HG},1}(\beta) = \beta^{2} \begin{pmatrix}
-1 & \tau(\beta) \\ - \tau(-\beta) & 1
\end{pmatrix}, \qquad \tau(\beta) = \frac{- \Gamma\left( -\beta \right)}{\Gamma\left( \beta + 1 \right)}.
\end{equation}

As $z \to 0$, we have
\begin{equation}\label{lol 35}
\Phi_{\mathrm{HG}}(z) = \left\{ \begin{array}{l l}
\begin{pmatrix}
\bigO(1) & \bigO(\log z) \\
\bigO(1) & \bigO(\log z)
\end{pmatrix}, & \mbox{if } z \in II \cup V, \\
\begin{pmatrix}
\bigO(\log z) & \bigO(\log z) \\
\bigO(\log z) & \bigO(\log z)
\end{pmatrix}, & \mbox{if } z \in I\cup III \cup IV \cup VI.
\end{array} \right.
\end{equation}
\end{itemize}
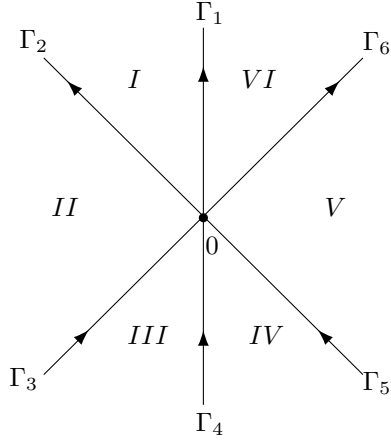
\begin{figure}[t!]
    \begin{center}
    \setlength{\unitlength}{1truemm}
    \begin{picture}(100,55)(-5,10)
             
        \put(50,39.8){\thicklines\circle*{1.2}}
        \put(50,40){\line(-0.5,0.5){21}}
        \put(50,40){\line(-0.5,-0.5){21}}
        \put(50,40){\line(0.5,0.5){21}}
        \put(50,40){\line(0.5,-0.5){21}}
        \put(50,40){\line(0,1){25}}
        \put(50,40){\line(0,-1){25}}
        
        \put(50.3,35){$0$}
        \put(71,62){$\Gamma_6$}        
        \put(49,66){$\Gamma_1$}        
        \put(25.8,62.3){$\Gamma_2$}        
        \put(24.5,17.5){$\Gamma_3$}
        \put(49,11.5){$\Gamma_4$}
        \put(71,17){$\Gamma_5$}        
        
        \put(32,58){\thicklines\vector(-0.5,0.5){.0001}}
        \put(35,25){\thicklines\vector(0.5,0.5){.0001}}
        \put(68,58){\thicklines\vector(0.5,0.5){.0001}}
        \put(65,25){\thicklines\vector(-0.5,0.5){.0001}}
        \put(50,60){\thicklines\vector(0,1){.0001}}
        \put(50,25){\thicklines\vector(0,1){.0001}}
        \put(40,57){$I$}
        \put(30,40){$II$}
        \put(40,23){$III$}
        \put(56,23){$IV$}
        \put(66,40){$V$}
        \put(55,57){$VI$}
    \end{picture}
    \caption{\label{Fig:HG}The jump contour $\Sigma_{\mathrm{HG}}$ for $\Phi_{\mathrm{HG}}$. The ray $\Gamma_{k}$ is oriented from $0$ to $\infty$, and forms an angle with $\mathbb{R}^{+}$ which is a multiple of $\frac{\pi}{4}$.}
\end{center}
\end{figure}
This model RH problem was solved explicitly in \cite{ItsKrasovsky}. Define
\begin{equation}\label{phi_HG}
\widehat{\Phi}_{\mathrm{HG}}(z) = \begin{pmatrix}
\Gamma(1 -\beta)G(\beta; z) & -\frac{\Gamma(1 -\beta)}{\Gamma(\beta)}H(1-\beta;ze^{-i\pi }) \\
\Gamma(1 +\beta)G(1+\beta;z) & H(-\beta;ze^{-i\pi })
\end{pmatrix},
\end{equation}
where $G$ and $H$ are related to the Whittaker functions:
\begin{equation}\label{relation between G and H and Whittaker}
G(a;z) = \frac{M_{\kappa,\mu}(z)}{\sqrt{z}}, \quad H(a;z) = \frac{W_{\kappa,\mu}(z)}{\sqrt{z}}, \quad \mu = 0, \quad \kappa = \frac{1}{2}-a.
\end{equation}
The solution $\Phi_{\mathrm{HG}}$ is given by
\begin{equation}\label{model RHP HG in different sector}
\Phi_{\mathrm{HG}}(z) = \left\{ \begin{array}{l l}
\widehat{\Phi}_{\mathrm{HG}}(z)J_{2}^{-1}, & \mbox{ for } z \in I, \\
\widehat{\Phi}_{\mathrm{HG}}(z), & \mbox{ for } z \in II, \\
\widehat{\Phi}_{\mathrm{HG}}(z)J_{3}^{-1}, & \mbox{ for } z \in III, \\
\widehat{\Phi}_{\mathrm{HG}}(z)J_{2}^{-1}J_{1}^{-1}J_{6}^{-1}J_{5}, & \mbox{ for } z \in IV, \\
\widehat{\Phi}_{\mathrm{HG}}(z)J_{2}^{-1}J_{1}^{-1}J_{6}^{-1}, & \mbox{ for } z \in V, \\
\widehat{\Phi}_{\mathrm{HG}}(z)J_{2}^{-1}J_{1}^{-1}, & \mbox{ for } z \in VI. \\
\end{array} \right.
\end{equation}
The asymptotics of $M_{\kappa,\mu}(z)$ and $W_{\kappa,\mu}(z)$ as $z \to 0$ given by \cite[Subsection 13.14 (iii)]{NIST} allow to obtain a more precise version of \eqref{lol 35}. Using also $\Gamma(z)\Gamma(1-z) = \frac{\pi}{\sin (\pi z)} = -\Gamma(-z)\Gamma(1+z)$, we get
\begin{equation}\label{precise asymptotics of Phi HG near 0}
\Phi_{\mathrm{HG}}(z) = \widehat{\Phi}_{\mathrm{HG}}(z) = \begin{pmatrix}
\Psi_{11} & \Psi_{12} \\ \Psi_{21} & \Psi_{22}
\end{pmatrix} (I + \bigO(z)) \begin{pmatrix}
1 & \frac{\sin (\pi \beta)}{\pi} \log z \\
0 & 1
\end{pmatrix}, \qquad \mbox{as } z \to 0, \, z \in II,
\end{equation}
where
\begin{equation}
\log z = \log |z| + i \arg z, \qquad \arg z \in \Big(-\frac{\pi}{2},\frac{3\pi}{2}\Big),
\end{equation}
and
\begin{align*}
& \Psi_{11} = \Gamma(1-\beta), \qquad \Psi_{12} = \frac{1}{\Gamma(\beta)} \left( \frac{\Gamma^{\prime}(1-\beta)}{\Gamma(1-\beta)}+2\gamma_{\mathrm{E}} - i \pi \right), \\
& \Psi_{21} = \Gamma(1+\beta), \qquad \Psi_{22} = \frac{-1}{\Gamma(-\beta)} \left( \frac{\Gamma^{\prime}(-\beta)}{\Gamma(-\beta)} + 2\gamma_{\mathrm{E}} - i \pi \right),
\end{align*}
and where $\gamma_{\mathrm{E}}$ is Euler's gamma constant.
\section*{Acknowledgements}
This work was supported by the Swedish Research Council, Grant No. 2015-05430 and the European Research Council, Grant Agreement No. 682537. The author is grateful to the anonymous referees for helpful remarks and for their careful reading of the manuscript. The author also wishes to thank Jonatan Lenells for a careful reading of the introduction.

\end{document}